\titleformat{\section}{\normalfont\centering\fontsize{11}{13}\bfseries}{\thesection}{1em}{}
\titleformat{\subsection}{\normalfont\normalsize\bfseries}{\thesubsection}{1em}{}
\titleformat{\subsubsection}{\normalfont\normalsize\bfseries}{\thesubsubsection}{1em}{}
\renewcommand{\thesection}{\arabic{section}}
\renewcommand{\thesubsection}{\arabic{section}.\arabic{subsection}}
\newtheorem{lemma}{Lemma}
\newtheorem{theorem}{Theorem}
\newtheorem{proposition}{Proposition}
\theoremstyle{definition}
\newtheorem{definition}{Definition}
\theoremstyle{remark}
\newtheorem{remark}{Remark}
\DeclareMathOperator*{\argmin}{arg~min~}
\newcommand{\vertiii}[1]{{\left\vert\kern-0.25ex\left\vert\kern-0.25ex\left\vert #1 
		\right\vert\kern-0.25ex\right\vert\kern-0.25ex\right\vert}}
\newcommand{\vertii}[1]{{\left\vert\kern-0.25ex\left\vert #1 
		\right\vert\kern-0.25ex\right\vert}}
\newcommand{\smalliii}[1]{{\vert\kern-0.25ex\vert\kern-0.25ex\vert #1 \vert\kern-0.25ex\vert\kern-0.25ex\vert}}
\newcommand{\smallii}[1]{{\vert\kern-0.25ex\vert #1\vert\kern-0.25ex\vert}}
\newcommand{\VAR}{\mathrm{VAR}}
\newcommand{\R}{\mathbb{R}}
\newcommand{\I}{\mathrm{I}}
\newcommand{\SEN}{\text{SEN}}
\newcommand{\SPC}{\text{SPC}}
\newcommand{\PC}{\scriptsize{\text{PC}}}
\newcommand{\op}{\text{op}}
\newcommand{\F}{\text{F}}
\newcommand{\PP}{\mathbb{P}}
\newcommand{\Err}{\mathrm{Err}}
\newcommand{\RSC}{\text{RSC}}
\newsavebox{\@brx}
\newcommand{\llangle}[1][]{\savebox{\@brx}{\(\m@th{#1\langle}\)}%
	\mathopen{\copy\@brx\mkern2mu\kern-0.9\wd\@brx\usebox{\@brx}}}
\newcommand{\rrangle}[1][]{\savebox{\@brx}{\(\m@th{#1\rangle}\)}%
	\mathclose{\copy\@brx\mkern2mu\kern-0.9\wd\@brx\usebox{\@brx}}}
\title{\large\bf Regularized Estimation of High-dimensional Factor-Augmented\\Vector Autoregressive (FAVAR) Models}
\author{\normalsize Jiahe Lin\footnote{Department of Statistics, University of Michigan. \texttt{jiahelin@umich.edu}}\quad and \quad George Michailidis\thanks{\textbf{Corresponding Author}. Department of Statistics and the Informatics Institute, University of Florida. \texttt{gmichail@ufl.edu}}}
\date{\normalsize}
\begin{document}
	\maketitle
	\begin{abstract}
		A factor-augmented vector autoregressive (FAVAR) model is defined by a VAR equation that captures lead-lag correlations amongst a set of observed variables $X$ and latent factors $F$, and a calibration equation that relates another set of observed
		variables $Y$ with $F$ and $X$. The latter equation is used to estimate the factors that are subsequently used in estimating
		the parameters of the VAR system. The FAVAR model has become popular in applied economic research, since it can summarize a large number of variables of interest as a few factors through the calibration equation and subsequently examine their influence on core variables of primary interest through the VAR equation. However, there is increasing need for examining lead-lag relationships between a large number of time series, while incorporating information from another high-dimensional
		set of variables. Hence, in this paper we investigate the FAVAR model under high-dimensional scaling. We introduce an 
		appropriate identification constraint for the model parameters, which when incorporated into the formulated optimization problem yields estimates with good statistical properties. Further, we address a number of technical challenges introduced by the fact that estimates of the VAR system model parameters are based on estimated rather than directly observed quantities. The performance of the proposed estimators is evaluated on synthetic data. Further, the model is applied to commodity prices and reveals interesting and interpretable relationships between the prices and the factors extracted from a set of global macroeconomic indicators.
	\end{abstract}
	{\bf Key words:} Model Identifiability; Compactness; Low-rank plus Sparse Decomposition; Finite-Sample Bounds\\
	{\bf Source code:} \url{https://github.com/jhlinplus/High_dim_FAVAR_estimation}
	
	%%%%%%%%%%%%%%%%%%%%%%%%%%%%%%%%%%%%%%%%%%%%%%%%%%
	%%%%% Section 1: Intro
	%%%%%%%%%%%%%%%%%%%%%%%%%%%%%%%%%%%%%%%%%%%%%%%%%%
	
	\section{Introduction}\label{sec:intro}
	
	There is a growing need in employing a large set of time series (variables) for modeling social or physical systems. For example, economic policy makers have concluded based on extensive empirical evidence \citep[e.g.][]{sims1980macroeconomics,bernanke2005measuring,banbura2010large} that large scale models of economic indicators
	provide improved forecasts, together with better estimates of how current economic shocks propagate into the future,
	which produces better guidance for policy actions. Another reason for considering large number of time series in social sciences is that key variables implied by theoretical models for policy decisions\footnote{such as the concept of output gap for monetary policy, the latter defined as the difference between the actual output of an economy and its potential output} are not directly observable, but related to a large number of other variables that collectively act as a good proxy of the 
	unobservable key variables. In other domains such as genomics and neuroscience, advent of high throughput technologies have enabled researchers to obtain measurements on hundreds of genes from functional pathways of interest 
	\citep{shojaie2010discovering} or brain regions \citep{seth2015granger}, thus allowing a more comprehensive modeling to gain insights into biological mechanisms of interest. There are two popular modeling paradigms for such large panel of time series, with the first being the Vector Autoregressive (VAR) model \citep{lutkepohl2005new} and the second being the Dynamic Factor Model (DFM) \citep{stock2002forecasting,lutkepohl2014structural}.
	
	The VAR model has been the subject of extensive theoretical and empirical work primarily in econometrics, due to its relevance in macroeconomic and financial modeling. However, the number of model parameters increases quadratically with the number of time series included for each lag period considered, and this feature has limited its applicability since in many applications it is hard to obtain adequate number of time points for accurate estimation. Nevertheless, there is a recent body of technical work that leveraging {\em structured sparsity} and the corresponding regularized estimation framework has established results for consistent estimation of the VAR parameters under high dimensional scaling. \citet{basu2015estimation} examined Lasso penalized Gaussian VAR models and proved consistency results, while at the same time providing technical tools useful for analysis of sparse models involving temporally dependent data. \citet{melnyk2016estimating} extended the results to other regularizers, \citet{lin2017regularized} to the inclusion of exogenous variables (the so-called VAR-X model in the econometrics literature), \citet{hall2016inference} to models for count data and \citet{nicholson2017varx} to the simultaneous estimation of time lags and model parameters. However, a key requirement for the theoretical developments is a spectral radius constraint that ensures the {\em stability} of the underlying VAR process \citep[see][for details]{basu2015estimation, lin2017regularized}. For large VAR models,
	this constraint implies a smaller magnitude on average for all model parameters, which makes their estimation more challenging, unless one compensates with a higher level of sparsity. Nevertheless, very sparse VAR models may not be adequately informative, while their estimation requires larger penalties that in turn induce higher bias due to shrinkage, when the sample size stays fixed. 
	
	The DFM model aims to decompose a large number of time series into a few common latent factors and idiosyncratic components. The premise is that these common factors are the key drivers of the observed data, which themselves can exhibit temporal dynamics. They have been extensively used for forecasting purposes in economics \citep{stock2002forecasting}, while their statistical properties have been studied in depth \citep[see][and references therein]{bai2008large}. Despite their 
	ability to handle very large number of time series, theoretically appealing properties and extensive use in empirical work in economics, DFMs aggregate the underlying time series and hence are not suitable for examining their individual cross-dependencies. Since in many applications researchers are primarily interested
	in understanding the interactions between key variables \citep{sims1980macroeconomics,stock2016dynamic}, while accounting for the influence of many others so as to avoid model misspecification that leads to biased results, DFMs may not be the most appropriate model.
	
	To that end, \citet{bernanke2005measuring} proposed a ``fusion" model, namely the Factor Augmented VAR, that aims to summarize the information contained in a large set of time series
	by a small number of factors and includes those in a standard VAR model. Specifically, let $\{F_t\}\in\mathbb{R}^{p_1}$ be the latent factor and $\{X_t\}\in\mathbb{R}^{p_2}$ the observed sets of variables, they jointly form a VAR system given by
	\begin{equation}\label{md:FAVAR-VAR}
	\begin{bmatrix}
	F_t \\ X_t
	\end{bmatrix} = A^{(1)} \begin{bmatrix}
	F_{t-1} \\ X_{t-1}
	\end{bmatrix} + \cdots + A^{(d)} \begin{bmatrix}
	F_{t-d} \\ X_{t-d}
	\end{bmatrix} + \begin{bmatrix}
	w^F_t \\ w^X_t
	\end{bmatrix}.
	\end{equation}
	In addition, there is a large panel of observed time series $Y_t\in\mathbb{R}^q$, whose current values are influenced by both $X_t$ and $F_t$; i.e., the calibration equation:
	\begin{equation}\label{md:FAVAR-info}
	Y_t = \Lambda F_t + \Gamma X_{t} + e_t.
	\end{equation} 
	The primary variables of interest $X_t$ together with the unobserved factors $F_t$---both are assumed to have small and fixed dimensions---drives the dynamics of the system, and the factors are inferred from~\eqref{md:FAVAR-info}. 
	
	Even in the low-dimensional setting ($p_2$ fixed), there is very limited theoretical work \citep{bai2016estimation} on the FAVAR model and some work on identification restrictions for the model parameters \citep[e.g.][]{bernanke2005measuring}. However, the fixed dimensionality assumption is rather restrictive in many applications; in particular, 
	the model has been extensively used in empirical work in economics and finance \citep[e.g.][]{eickmeier2014understanding,caggiano2014uncertainty}, yet customarily a very small size block $X_t$ is considered.
	For example, in \citet{bernanke2005measuring} that introduces the FAVAR model, $X_t$ comprises of three ``core" economic indicators (industrial production, consumer
	price index and the federal funds rate) and $Y_t$ of 120 other economic indicators. The VAR system is augmented by one factor summarizing the macroeconomic indicators, and the augmented system shows 7-lag time dependence that significantly increases the sample size requirement for estimation purposes.
	In a recent application, \citet{stock2016dynamic} apply the FAVAR model to macroeconomics effects of oil supply shocks;
	the augmented VAR system consists of 8 times series (observed and latent), but due to the limitation in sample size to avoid non-stationarities ($T=120$) the lag of the model is fixed to 1. Hence, as argued in \citet{stock2016dynamic}, there is growing
	need for large scale FAVAR models and this paper aims to examine their estimation and theoretical properties in high-dimensions, leveraging sparsity constraints on key model parameters. 
	
	The key contributions of this paper are twofold: (1) the introduction of an identifiability constraint compatible with the high-dimensional nature of the model, under sparsity assumptions on model parameters $\Gamma$ and $\{A^{(k)}\}$, and (2) the ensuing formulation of the optimization problem that leads to their estimators based on observational data and estimators' high-probability error bounds. At the technical level there are two sets of challenges that are successfully resolved: (i) the calibration equation involves both an observed set of covariates and a set of latent factors, and their interactions require careful handling to enable accurate estimation of the factors that constitute part of the input to the augmented VAR system and are crucial for estimating the transition matrix; and (ii) with the presence of a block of variables in the VAR system that are subject to error due to being estimated rather than directly observed, a number of new technical challenges emerge and they are compounded by the presence of temporal dependence. Note that for ease of presentation, the main technical developments are shown for Gaussian data (all noise processes in~\eqref{md:FAVAR-VAR} and~\eqref{md:FAVAR-info}
	are assumed to be Gaussian), but the key theoretical results are also established for sub-Gaussian and sub-exponential
	error processes; see Appendix C for a result of independent theoretical interest, even for the standard
	sparse VAR model.
	
	\paragraph{Outline of the paper.} The remainder of the paper is organized as follows. In Section~\ref{sec:id-formulation}, the model identifiability constraint is introduced, followed by formulation of the objective function to be optimized that obtains estimates of the model parameters. Theoretical properties of the proposed estimators, specifically, their high probability finite-sample error bounds, are investigated in Section~\ref{sec:theory}. Subsequently in Section~\ref{sec:simulation}, we introduce an empirical implementation procedure for obtaining the estimates and present its performance evaluation based on synthetic data. An application of the model on interlinkages of commodity prices and the influence of world macroeconomic indicators on them is presented in Section~\ref{sec:realdata}, while Section~\ref{sec:discussion} provides some concluding remarks. All proofs and other supplementary materials are deferred to Appendices.
	
	\paragraph{Notations.} Throughout this paper, we use $\smalliii{A}_{\cdot}$ to denote matrix norms for some generic matrix $A\in\R^{m\times n}$. For example, $\smalliii{A}_1$ and $\smalliii{A}_\infty$ respectively denote the matrix induced $1$-norm and infinity norm, $\smalliii{A}_{\op}$ the matrix operator norm and $\smalliii{A}_{\F}$ the Frobenius norm. Moreover, We use $\|A\|_1$ and $\|A\|_\infty$ respectively to denote the element-wise $1$-norm and infinity norm. For two matrices $A$ and $B$ of commensurate dimensions, denote their inner product by $\llangle A, B\rrangle = \text{tr}(A^\top B)$. Finally, we write $A\gtrsim B$ if there exists some absolute constant $c$ that is independent of the model parameters such that $A\geq cB$; and $A\asymp B$ if $A\gtrsim B$ and $B\gtrsim A$ hold simultaneously.

	%%%%%%%%%%%%%%%%%%%%%%%%%%%%%%%%%%%%%%%%%%%%%%%%%%
	%%%%% Section 2: Formulation
	%%%%%%%%%%%%%%%%%%%%%%%%%%%%%%%%%%%%%%%%%%%%%%%%%%
	\section{Model Identification and Problem Formulation}\label{sec:id-formulation}
	
	The FAVAR model proposed in \citet{bernanke2005measuring} has the following two components, as seen in Section
	\ref{sec:intro}: a system given in ~\eqref{md:FAVAR-VAR} that describes the dynamics of the latent block~$F_t\in\R^{p_1}$ and the observed block~$X_t\in\R^{p_2}$ that jointly follow a stationary $\VAR(d)$ model (the ``VAR equation"); and the model in~\eqref{md:FAVAR-info} that characterizes the contemporaneous dependence of the large observed informational series $Y_t\in\R^q$ as a linear function of $X_t$ and $F_t$ (the ``calibration equation"). Further, $w^F_t$, $w^X_t$ and $e_t$ are all noise terms that are independent of the predictors, and we assume they are serially uncorrelated mean-zero Gaussian random vectors: $w^F_t\sim \mathcal{N}(0,\Sigma_w^F)$, $w^X_t\sim \mathcal{N}(0,\Sigma_w^X)$ and $e_t\sim \mathcal{N}(0,\Sigma_e)$. In this study we consider a potentially large VAR system that has many coordinates, hence in contrast to \citet{bernanke2005measuring} and \citet{bai2016estimation} where both $p_1$ and $p_2$ are fixed and small, we allow the size of the observed block, $p_2$, to be large\footnote{We do not impose the restriction that $p_2$ is smaller than the available sample size.} and to grow with the sample size; yet the size of the latent block, $p_1$, can not be too large and is still assumed fixed. Moreover, the size of the informational series, $q$, can also be large and grow with the sample size. Further, we assume that the transition matrices $\{A^{(i)}\}_{i=1}^d$ and the regression coefficient matrix $\Gamma$ are {\em sparse}. 
	Finally, the factor loading matrix $\Lambda$ is assumed to be dense. 
	
	% >>>>>>>>>>>>>>>>>>>>>>>>>>>>>>>>>>>
	% MODEL IDENTIFICATION
	% >>>>>>>>>>>>>>>>>>>>>>>>>>>>>>>>>>>
	
	\subsection{Model identification considerations}\label{sec:id}
	
	The latent nature of $F_t$ leads to the following observational equivalence across the following two models encoded by $(\Lambda,\Gamma)$ and $(\widetilde{\Lambda},\widetilde{\Gamma})$, respectively: for any invertible matrix $Q_1\in\R^{p_1\times p_1}$ and $Q_2\in\R^{p_1\times p_2}$,
	\begin{equation*}
	Y_t = \Lambda F_t + \Gamma X_t + e_t ~\equiv ~\widetilde{\Lambda}\widetilde{F}_t + \widetilde{\Gamma} X_t + e_t, \qquad (Y_t\in\R^q, F_t\in\R^{p_1}, X_t\in\R^{p_2})
	\end{equation*}
	where 
	\begin{equation}\label{eqn:indeterminacy}
	\widetilde{\Lambda} := \Lambda Q_1, \qquad \widetilde{F}_t := Q_1^{-1}F_t - Q_1^{-1} Q_2 X_t, \qquad \widetilde{\Gamma} :=\Gamma + \Lambda Q_2.
	\end{equation}
	In other words, the key model parameters $(\Lambda,\Gamma)$ and the latent factors $F_t$ are {\em not uniquely} identified, a known problem even in classical factor analysis \citep{anderson1958introduction}. Thus, additional restrictions are required to overcome this indeterminacy, since there is an equivalence class parametrized by $(Q_1,Q_2)$ within which individual models are not mutually distinguishable based on observational data. For the FAVAR model, a total number of $p_1^2+p_1p_2$ restrictions
	are needed for unique identification of $\Lambda$, $\Gamma$ and $F_t$. 
	
	Various schemes have been proposed in the literature to address this issue. Specifically, \citet{bernanke2005measuring} impose the necessary restrictions through the coefficient matrices of the calibration equation, requiring $\Lambda=\left[\begin{smallmatrix}\I_{p_1} \\ * \end{smallmatrix}\right]$ and $\Gamma_{[1:p_1],\cdot}=0$; that is, the upper $p_1\times p_1$ block of $\Lambda$ is set to the identity matrix and the first $p_1$ rows of $\Gamma$ to zero. \citet{bai2016estimation} consider different sets of restrictions that involve combinations of coefficients from the calibration equation and the noise term from the VAR equation. In the low-dimensional setting ($p_2$ fixed), one can proceed to estimate the parameters subject to these restrictions, by adopting either a single-step Bayesian likelihood approach \citep{bernanke2005measuring} or an orthogonal projection-based approach by profiling out $X_t$ \citep{bai2016estimation}. However, neither approach is applicable in high-dimensional settings, due to the growing dimension $p_2$ which would render a projection-based approach infeasible or add to the computational demands of a Bayesian procedure.
	
	\medskip
	To overcome these issues in high-dimensional settings, we introduce an alternative identification scheme ``IR$+$Compactness" that is compatible with the model specification and can also be seamlessly incorporated in the estimation procedure, leveraging sparsity of the regression coefficient $\Gamma$. Specifically, we first impose constraint (IR):
	\begin{description}
		\item[(IR)]$\Lambda=\left[\begin{smallmatrix}\I_{p_1} \\ * \end{smallmatrix}\right]$: the upper $p_1\times p_1$ block of $\Lambda$ is an identity matrix, while the bottom block is left unconstrained.
	\end{description}
	Note that (IR) imposes $p_1^2$ constraints but crucially not on the latent factors, given their subsequent utilization in the VAR system. Further, it yields uniquely identifiable $\Lambda$ and $F_t$, for any given product 
	$\Lambda F_t$, and the indeterminacy incurred by $Q_1\in\R^{p_1\times p_1}$ in~\eqref{eqn:indeterminacy} vanishes. 
	
	However, the issue is not fully resolved, since for any 
	$Q_2\in\mathbb{R}^{p_1\times p_2}$, the following relationship holds:
	\begin{equation*}
	Y_t  = \Lambda F_t  + \Gamma X_t + e_t \equiv \Lambda\check{F}_t + \check{\Gamma}X_t + e_t, 
	\end{equation*}
	where 
	\begin{equation}\label{eqn:indeterminacy2}
	\check{F}_t = F_t -Q_2X_t \qquad \check{\Gamma} := \Gamma + \Lambda Q_2. 
	\end{equation}
	All such models encoded by $(\check{F}_t,\check{\Gamma})$, form an equivalence class parametrized by $Q_2$ that specifies the transformation. We denote this equivalence class by $\mathcal{C}(Q_2)$. If $Q_2=O$, then $\mathcal{C}(Q_2)$ degenerates to a singleton that contains only the true data-generating model, which requires the imposition of $p_1p_2$ restrictions on primary model quantities. One applicable constraint out of theoretical consideration is to impose orthogonality on $X_t$ and $F_t$ --- it yields the necessary $p_1p_2$ restrictions; yet is excessively stringent and limits the appeal of the FAVAR model, while also being challenging to operationalize. Therefore as a good working alternative, we address the identifiability issue through a weaker constraint that effectively limits sufficiently the size of the $\mathcal{C}(Q_2)$. 
	
	To this end, let $\mathbf{X}\in\R^{n\times p_2}$, $\mathbf{Y}\in\R^{n\times q}$ and $\mathbf{F}\in\R^{n\times p_1}$ be centered data matrices whose rows are samples of $X_t$, $Y_t$ and the latent process $F_t$ respectively, and $\check{\mathbf{F}}$ is analogously defined. The characterization of $\mathcal{C}(Q_2)$ is through the sample versions of the underlying processes. Specifically, define the set of {\em factor hyperplanes} induced by $\mathcal{C}(Q_2)$ by
	\begin{equation*}
	\mathcal{S}(\check{\Theta}):= \{ \check{\Theta}:=\check{\mathbf{F}}\Lambda^\top\in\mathbb{R}^{n\times q}~|~\check{\mathbf{F}}\text{ are samples of $\check{F}_t$ defined through~\eqref{eqn:indeterminacy2}}\},
	\end{equation*}
	and we let $\Theta^\star$ denote the factor hyperplane associated with the true data-generating model, to distinguish it from some generic element in $\mathcal{S}(\check{\Theta})$ that is denoted by $\check{\Theta}$. Note that 
	$\Theta^\star\in\mathcal{S}(\check{\Theta})$ and $\check{\Theta}$ coincides with $\Theta^\star$ when $Q_2=0$. Moreover, all elements in $\mathcal{S}(\check{\Theta})$ are at most of rank $p_1$, hence a low-rank component relative to their size $n\times q$. Next, in a similar spirit to \citet{negahban2012restricted}, we define the following constrained set:
	\begin{equation*}
	\mathcal{S}_{\phi}(\check{\Theta}) := \big\{\varphi_{\mathcal{R}}(\check{\Theta}) \leq \phi(n,q)~\big|~\check{\Theta}\in\mathcal{S}(\check{\Theta}) \big\},
	\end{equation*}
	where $\varphi_{\mathcal{R}}(\Theta)$ is defined according to
	\begin{equation*}
	\varphi_{\mathcal{R}}(\Theta) := \kappa(\mathcal{R}^*)\mathcal{R}^*(\Theta)\smalliii{\mathbf{X}/\sqrt{n}}_{\op},
	\end{equation*}
	and  $\kappa(\mathcal{R}^*) :=\sup\nolimits_{\Theta\neq 0}\big( \smalliii{\Theta}_{\F}/\mathcal{R}^*(\Theta)\big)$ with $\mathcal{R}^*$ being the dual norm of some regularizer $\mathcal{R}$. Base on the above definition, $\varphi_{\mathcal{R}}(\Theta)$ captures the interaction between the factor space and the observed $\mathbf{X}$-space; the product $\kappa(\mathcal{R}^*)\mathcal{R}^*(\Theta)$ measures the spikiness of $\Theta$ w.r.t. $\mathcal{R}$, and in the case where $\mathcal{R}$ corresponds to the sparsity-induced $\ell_1$-norm which would be the setup of interest in this paper (see Section~\ref{sec:formulation}), $\mathcal{R}^*(\Theta) = \|\Theta\|_\infty$ and $\kappa(\mathcal{R}^*) = \sqrt{nq}$. With the definition of $\mathcal{S}_{\phi}(\check{\Theta})$, we impose the following compactness constraint on $\check{\Theta}$ to further encourage identifiability:
	\begin{description}
		\item[(Compactness)] $\check{\Theta}\in \mathcal{S}_{\phi}(\check{\Theta})$ for some $\phi(n,q)$ satisfying $\phi(n,q)\geq\phi^\star:=\varphi_{\mathcal{R}}(\Theta^\star)$.
	\end{description}
	(Compactness) effectively limits the spikiness of all possible $\check{\Theta}$'s by imposing a {\em box constraint} through the dual norm corresponding to the sparsity regularizer, and for an arbitrary set of fixed realizations, it restricts the factor hyperplane set induced by $\mathcal{C}(Q_2)$ to its $\phi$-radius subset $\mathcal{S}_{\phi}(\check{\Theta})$. This in turn limits the size of the equivalence class $\mathcal{C}(Q_2)$ under consideration, since there is a one-to-one correspondence at the set level between $\mathcal{C}(Q_2)$ and the factor hyperplane set induced by it. This further implies that although the models encoded by $(F_t,\Gamma)$ and $(\check{F}_t,\check{\Gamma})$ may not be perfectly distinguishable based on observational data, at the population level the discordance between the two models can not be too large. It is worth pointing out that the bound $\phi(n,q)$ is allowed to grow, but at a much slower rate than the size of $\check{\Theta}$; specifically, we require $\phi(n,q)=o(\kappa(\mathcal{R}^*))$. For ease of presentation, we use $\phi$ to denote this bound henceforth and further note that it is in fact a constant in any finite sample setting.
	
	In summary, our proposed identification scheme comprises of two parts: (IR) and (Compactness). The former provides exact identification within the factor hyperplane and narrows the scope of observationally equivalent models to $\mathcal{C}(Q_2)$, while the latter limits its size; and they jointly incur {\em approximate identification} of the true
	data generating model; and thus, for estimation purposes henceforth, it becomes adequate to focus on this restricted equivalence class, rather than its individual elements. The proposed scheme is suitable for the high-dimensional nature of the problem and can easily be incorporated in the formulation of the optimization problem for parameter estimation (see Section~\ref{sec:formulation}), which in turn yields estimates with tight error bounds (see Section~\ref{sec:theory}).

	% >>>>>>>>>>>>>>>>>>>>>>>>>>>>>>>>>>>
	% PROPOSED FORMULATION
	% >>>>>>>>>>>>>>>>>>>>>>>>>>>>>>>>>>>
	
	\subsection{Proposed formulation}\label{sec:formulation}
	
	Without loss of generality, we focus on the case where $d=1$ in subsequent technical developments, so that $Z_t:=(F^\top_t,X^\top_t)^\top$ follows a $\VAR(1)$ model $Z_t = AZ_{t-1} + W_t$:
	\begin{equation}\label{md:FAVAR-VAR1}
	\begin{bmatrix}
	F_t \\ X_t
	\end{bmatrix} = \begin{bmatrix}
	A_{11} & A_{12} \\ A_{21} & A_{22}
	\end{bmatrix} \begin{bmatrix}
	F_{t-1} \\ X_{t-1}
	\end{bmatrix} + \begin{bmatrix}
	w^F_t \\ w^X_t
	\end{bmatrix}.
	\end{equation}
	The generalization to the $\VAR(d)$~$(d>1)$ case is straightforward since for any generic $\VAR(d)$ process satisfying $\mathcal{A}_d(L) Z_t = w_t$ where $\mathcal{A}_d(L):=\mathrm{I}-A^{(1)}L - \cdots - A^{(d)}L^d$, it can always be written in the form of a $\VAR(1)$ model for some $dp$-dimensional process $\widetilde{Z}_t$ \citep[see][for details]{lutkepohl2005new}.
	
	Based on the introduced model identification scheme (IR+Compactness), we propose the following procedure to estimate the FAVAR model, whose parameters include a sparse coefficient matrix $\Gamma$, a dense loading matrix $\Lambda$, and a sparse transition matrix $A$. Observed data matrices $\mathbf{X}$ and $\mathbf{Y}$ are identical to what have been previously defined, and to distinguish the responses from their lagged predictors when considering the VAR system, we let $\mathbf{X}_{n-1}:=[x_1,\dots,x_{n-1}]^\top$ denote the predictor matrix and $\mathbf{X}_{n}:=[x_2,\dots,x_n]^\top$ the response one; $\mathbf{F}_n, \mathbf{F}_{n-1}, \mathbf{Z}_n,\mathbf{Z}_{n-1}$ are analogously defined. Based on these notations, the sample versions of the VAR system and the calibration equation in~\eqref{md:FAVAR-VAR1} and~\eqref{md:FAVAR-info} can be written as 
	\begin{equation*}
	\mathbf{Z}_n  = ~\mathbf{Z}_{n-1}A^\top + \mathbf{W}, \quad \text{and} \quad 
	\mathbf{Y}  = ~\mathbf{F}\Lambda^\top + \mathbf{X}\Gamma^\top + \mathbf{E} =: \Theta + \mathbf{X}\Gamma^\top + \mathbf{E}.
	\end{equation*}
	We propose the following estimators obtained from a two-stage procedure for the coefficient matrices $\Lambda$, $\Gamma$ and subsequently the transition matrices~$\{A_{ij}\}_{i,j=1,2}$.
	\begin{itemize}
		\item Stage I: estimation of the calibration equation under (IR+Compactness). We formulate the following {\em constrained optimization} problem using a least squares loss function and  incorporating the sparsity-induced $\ell_1$ regularization of the sparse block $\Gamma$, the rank constraint on the hyperplane~$\Theta$, and (Compactness):
		\begin{equation}\label{opt:solveTheta}
		\begin{split}
		(\widehat{\Theta},\widehat{\Gamma})~~:= &\argmin\limits_{\Theta\in\R^{n\times q},\Gamma\in\R^{q\times p_2}} \big\{\frac{1}{2n}\smalliii{\mathbf{Y}-\Theta-\mathbf{X}\Gamma^\top}_{\F}^2  + \lambda_\Gamma\|\Gamma\|_1\big\}, \\ &\text{subject to}~~~\text{rank}(\Theta)\leq r,~~~ \|\Theta\|_\infty\leq \frac{\phi}{\sqrt{nq}\cdot\smalliii{\mathbf{X}/\sqrt{n}}_\op}.
		\end{split}
		\end{equation}
		Once $\widehat{\Theta}$ is obtained, under (IR), the estimated factors $\widehat{\mathbf{F}}$ and the corresponding loading matrix $\widehat{\Lambda}$ are extracted as follows:
		\begin{equation}\label{eqn:F-Lambda}
		\widehat{\mathbf{F}} = \widehat{\mathbf{F}}^{\PC}(\widehat{\Lambda}^{\PC}_1)^{\top}, \qquad \widehat{\Lambda}=\widehat{\Lambda}^{\PC}(\widehat{\Lambda}^{\PC}_1)^{-1},
		\end{equation}
		where $\widehat{\Lambda}^{\PC}_1$ is the upper $p_1$ sub-block of $\widehat{\Lambda}^{\PC}$, with $\widehat{\mathbf{F}}^{\PC}$ and $\widehat{\Lambda}^{\PC}$ being the PC estimators \citep{stock2002forecasting} given by $\widehat{\mathbf{F}}^{\PC} := \sqrt{n}\widehat{U}$ and $\widehat{\Lambda}^{\PC} :=\widehat{V}\widehat{D}/\sqrt{n}$. The estimates $\widehat{U},\widehat{D}$ and $\widehat{V}$ are obtained from the SVD of $\widehat{\Theta}=\widehat{U}\widehat{\Theta}\widehat{V}^{\top}$. Note that after these algebra, $\widehat{\mathbf{F}}$ corresponds to the first $p_1$ columns of $\widehat{\Theta}$.  
		
		Of note, $\smalliii{\mathbf{X}/\sqrt{n}}_{\op}^2 =\Lambda_{\max}(\mathbf{X}^\top\mathbf{X}/n)$ and it can be shown that for any random realizations $\mathbf{X}$, the latter can be bounded with high probability (see Lemma~\ref{lemma:Smax}).
		\item Stage II: estimation of the VAR equation based on $\mathbf{X}$ and $\widehat{\mathbf{F}}$. With the estimated factor $\widehat{\mathbf{F}}$ as the surrogate for the true latent factor $\mathbf{F}$, the transition matrix $A$ can be estimated by solving
		\begin{equation}\label{opt:solveA}
		\widehat{A}~~:=\argmin\limits_{A\in\R^{(p_1+p_2)\times (p_1+p_2)}} \big\{ \frac{1}{2n} \smalliii{\widehat{\mathbf{Z}}_n - \widehat{\mathbf{Z}}_{n-1}A^\top}_{\F}^2 + \lambda_A \|A\|_1   \big\},
		\end{equation}
		where $\widehat{\mathbf{Z}}_n:=[\widehat{\mathbf{F}}_n,\mathbf{X}_n]$ and $\widehat{\mathbf{Z}}_{n-1}$ is analogously defined. The $\ell_1$-norm penalty induces sparsity on $A$ according to the model assumption.
	\end{itemize}
	In the presence of additional contemporaneous dependence amongst the coordinates for the error processes $w_t$, one may consider a maximum likelihood-based loss function, but the full estimation would require additional structural assumptions of $\Sigma_w$ (or its inverse) given the high dimensionality; we do not further elaborate in this study, since our prime interest is estimating the coefficient/transition matrices of the FAVAR model.
	
	The formulation in~\eqref{opt:solveA} based on the least squares loss function and the surrogate $\widehat{\mathbf{F}}$ is straightforward. However, the formulation for the calibration equation merits additional discussion. First, note that the factor hyperplane $\Theta$ has at most rank $p_1$ and therefore has low rank structure relative to its size $n\times q$.
	We impose a rank constraint in the estimation procedure to enforce such structure. Together with the (IR+Compactness) constraint introduced above, the objective then becomes to estimate accurately the parameters of a model within the equivalence class $\mathcal{C}(Q_2)$, in the sense that the estimate obtained by solving~\eqref{opt:solveTheta} effectively corresponds to recovering an arbitrary $\check{\Theta},\check{\Theta}\in\mathcal{C}(Q_2)$; such an estimate, however, will be close to the true data generating $\Theta^\star$. Once this goal is achieved, this would enable accurate estimation of the transition matrix of the VAR system.
	
	From an optimization perspective, the objective function admits a low-rank-plus-sparse decomposition and compactification is necessary for establishing statistical properties of the global optima in the absence of explicitly specifying the interaction structure between the low rank and the sparse blocks (or the spaces they live in). Note that the form of the compactness constraint is dictated by the statistical problem under consideration. For example, \citet{agarwal2012noisy} 
	study a multivariate regression problem, where the coefficient is decomposed to a sparse and a low rank block. In that setting,
	a compactness constraint is imposed through the entry-wise infinity norm bound of the low rank block. \citet{chandrasekaran2012latent} study a graphical model with latent variables where the conditional concentration matrix is the parameter of interest. The marginal concentration matrix is decomposed to a sparse and a low rank block via the alignment of the Schur complement, and the compactness constraint is imposed on both blocks and manifests through the corresponding regularization terms in the resulting optimization problem. Hence, the compactness constraint takes different forms but ultimately serves the same goal,
	namely, to introduce an upper bound on the magnitude of the low rank--sparse block interaction, with the latter being an important component in analyzing the estimation errors.
	The compacteness constraint adopted for the FAVAR model serves a similar purpose, although the presence of temporal dependence introduces a number of additional technical challenges compared to the two aforementioned settings that consider independent and identically distributed data.
	
	Finally, we remark that the model identification scheme (IR+Compactness) incorporated in the optimization problem as a constraint, enables us to establish high-probability error bounds (relative to the true data generating parameters/factors) for the proposed estimators, as shown next in Section~\ref{sec:theory}. Therefore, although (IR+Compactness) does not encompass the full $p_1^2+p_1p_2$ restrictions, it provides sufficient identifiability for estimation purposes.
	
	%%%%%%%%%%%%%%%%%%%%%%%%%%%%%%%%%%%%%%%%%%%%%%%%%%
	%%%%% Section 3: Theory
	%%%%%%%%%%%%%%%%%%%%%%%%%%%%%%%%%%%%%%%%%%%%%%%%%%
	\section{Theoretical Properties}\label{sec:theory}
	
	In this section, we investigate the theoretical properties of the estimators proposed in Section~\ref{sec:formulation}. We focus on formulations~\eqref{opt:solveTheta} and~\eqref{opt:solveA}, whose global optima correspond to $(\widehat{\Theta},\widehat{\Gamma})$ and $\widehat{A}$, respectively.
	
	Since~\eqref{opt:solveA} relies not only on prime observable quantities (namely $X_t$), but also on estimated quantities from Stage I (namely $\widehat{\mathbf{F}}$), the analysis requires a careful examination of how the estimation error in the factor propagates to that of $\widehat{A}$. We start by outlining a road map of our proof strategy together with a number
	of regularity conditions needed in subsequent developments. Section~\ref{sec:theory-fix} establishes error bounds 
	for $\widehat{\Gamma}$, $\widehat{\Theta}$~\footnote{Consequently, the error bounds of $\widehat{\mathbf{F}}$ and $\widehat{\Lambda}$ under (IR) are also obtained.}and $\widehat{A}$ under certain regularity conditions and employing suitable choices of the tuning parameters, for {\em deterministic realizations} from the underlying observable processes. Specifically when considering the error bound of $\widehat{A}$, the error of the plug-in estimate $\widehat{\mathbf{F}}$ is assumed non-random and given. Subsequently, Section~\ref{sec:theory-random} examines the probability of the events in which the regularity conditions are satisfied for {\em random realizations}, and further establishes high-probability upper bounds for quantities to which the tuning parameters need to conform. Finally, the high-probability finite sample error bounds for the estimates obtained based on random realizations of the data generating processes readily follow after properly aligning the conditioning arguments, and the results are presented in Section~\ref{sec:theory-high}. All proofs are deferred to Appendices~\ref{sec:proof-thm-prop} and~\ref{appendix:lemma}. 
	
	\paragraph{Additional notations.} Throughout, we use superscript $\star$ to denote the true value of the parameters of interest, and $\Delta$ for errors of the estimators; e.g., $\Delta_{A}=\widehat{A}-A^\star$. For sample quantities (e.g., $\mathbf{X}$ and $\mathbf{F}$) and their corresponding error (e.g., $\Delta_{\mathbf{F}}$), we use subscript $(n-1)$ to denote their first $n-1$ rows. We let $S_{\mathbf{E}}:=\tfrac{1}{n}\mathbf{E}^\top\mathbf{E}$ denote the sample covariance matrix of $\mathbf{E}$ and the sample covariance of other quantities are analogously defined. Additionally, denote the density level of $\Gamma^\star$ by $s_{\Gamma^\star}:=\|\Gamma^\star\|_0$, and that of $A^\star$ by $s_{A^\star}$. 
	
	% >>>>>>>>>>>>>>>>>>>>>>>>>>>>>>>>>>>>>>>>>>>>>>>>>>>
	% ROAD MAP
	% >>>>>>>>>>>>>>>>>>>>>>>>>>>>>>>>>>>>>>>>>>>>>>>>>>>
	\paragraph{A road map for establishing consistency results.} As previously mentioned, the key steps are:
	\begin{itemize}
		\item Part 1: analyses based on deterministic realizations using the optimality of the estimators, assuming the parameters of the objective function (e.g., the Hessian and the penalty parameter) satisfy certain regularity conditions;
		\item Part 2: analyses based on random realizations that the probability of the regularity conditions being satisfied,  primarily involving the utilization of concentration inequalities. 
	\end{itemize}
	In Part 1, note that the first-stage estimators obtained from the calibration equation are based on observed data and thus
	the regularity conditions needed are imposed on (functions of) the observed samples. On the other hand, the second-stage estimator relies on the plugged-in first-stage estimates that have bounded errors; therefore, the analysis is carried out in 
	an analogous manner to problems involving error-in-variables. Specifically, the required regularity conditions on quantities appearing in the optimization~\eqref{opt:solveA} involve the error of the first stage estimates, with the latter assumed fixed. In Part 2, the focus shifts to the probability of the regularity conditions being satisfied under random realizations, again starting from the first stage estimates, with the aid of Gaussian concentration inequalities and proper accounting for temporal dependence. Once the required regularity conditions are shown to hold with high probability, combining the results established in Part 1 for deterministic realizations, the high-probability error bounds for $\widehat{\Theta}$ and $\widehat{\Gamma}$ are established. The high-probability error bound of the estimated factors readily follows, which ensures that the variables which Stage II estimates rely upon are sufficiently accurate with high probability. Based on the latter result, the regularity conditions required for the Stage II estimates are then verified to hold with high probability at a certain rate. In the FAVAR
	model, since the estimation of the VAR equation is based on quantities among which one block is subject to error, to obtain an accurate estimate of the transition matrix requires more stringent conditions on population quantities (e.g., extremes of the spectrum), so that the regularity conditions hold with high probability. In essence, the joint process $Z_t$ need to be adequately ``regular" in order to get good estimates of the transition matrix , vis-a-vis the case of the standard VAR model where all variables are directly observed.
	
	%% DEFINITIONS 
	\medskip
	Next, we introduce the following key concepts that are widely used in establishing theoretical properties of high-dimensional regularized $M$-estimators \citep[e.g.][]{negahban2012unified,loh2012high}, as well as quantities that are related to processes exhibiting temporal dependence \citep[see also][]{basu2015estimation}.
	
	\begin{definition}[Restricted strong convexity (RSC)] A matrix $\mathbf{X}\in\R^{n\times p}$ satisfies the RSC condition with respect to norm $\Phi$ with curvature $\alpha_{\text{RSC}}>0$ and tolerance $\tau_n\geq 0$, if
		\begin{equation*}
		\frac{1}{2n} \smalliii{\mathbf{X}\Delta}_{\F}^2 \geq \frac{\alpha_{\text{RSC}}}{2}\smalliii{\Delta}_{\F}^2 - \tau_n \Phi^2(\Delta), \qquad \forall~\Delta\in\R^{p\times p}.
		\end{equation*}	
		In our setting, we consider the norm $\Phi(\Delta)=\|\Delta\|_1$. 
	\end{definition}
	
	\begin{definition}[Deviation condition]\label{defn:db} For a regularized $M$-estimator given in the generic form of
		\begin{equation*}
		\widehat{A}:= \min_{A}\big\{ \frac{1}{2n}\smalliii{\mathbf{Y}-\mathbf{X}A^\top}_{\F}^2 + \lambda_A\|A\|_1 \big\},
		\end{equation*}
		with $\mathcal{H}_A:=\tfrac{1}{n}\mathbf{X}^\top\mathbf{X}$ denoting the Hessian and $\mathcal{G}_A:=\tfrac{1}{n}\mathbf{Y}^\top \mathbf{X}$ denoting the gradient, we define the tuning parameter $\lambda_A$ to be selected in accordance with the deviation condition, if 
		\begin{equation*}
		\lambda_A\geq c_0 \|\mathcal{H}_{A} - \mathcal{G}_{A}(A^\star)^\top\|_\infty, \quad \text{for some }c_0.
		\end{equation*} 
	\end{definition}
	Under the current model setup, however, the exact form of the deviation bound becomes more involved and requires proper modifications to incorporate quantities associated with the factor hyperplane, as seen in Proposition~\ref{prop:Theta-Gamma-fixed}.
	
	\begin{definition}[Spectrum and its extremes] For a $p$-dimensional stationary process $X_t$, its spectral density $f_X(\omega)$ is defined~as
		\begin{equation}
		f_X(\omega) := \tfrac{1}{2\pi} \sum_{h=-\infty}^\infty \Sigma_X(h) e^{i\omega h},
		\end{equation}
		where $\Sigma_X(h):= \mathbb{E}(X_tX_{t+h}^\top)$. Its upper and lower extremes are defined as
		\begin{equation*}
		\mathcal{M}(f_X):=\mathop{\text{ess sup}}\limits_{\omega\in[-\pi,\pi]} \Lambda_{\max}(f_X(\omega)), \quad \text{and}\quad  \mathfrak{m}(f_X):=\mathop{\text{ess inf}}\limits_{\omega\in[-\pi,\pi]} \Lambda_{\min}(f_X(\omega)).
		\end{equation*}
		The cross-spectrum for two generic stationary processes $X_t$ and $Y_t$ is defined as 
		\begin{equation*}
		f_{X,Y}(\omega) := \frac{1}{2\pi} \sum_{h=-\infty}^\infty \Sigma_{X,Y}(h) e^{i\omega h},
		\end{equation*}
		where $\Sigma_{X,Y}(h):= \mathbb{E}(X_tY_{t+h}^\top)$, and its upper extreme is defined as
		\begin{equation*}
		\mathcal{M}(f_{X,Y}):=\mathop{\text{ess sup}}\limits_{\omega\in[-\pi,\pi]} \sqrt{\Lambda_{\max}\big(f^*_{X,Y}(\omega)f_{X,Y}(\omega) \big)}, 
		\end{equation*}
		where $*$ denotes the conjugate transpose.
	\end{definition}
	
	\medbreak
	%% SKETCH for establishing results for stage 1 info estimate
	We start by providing error bounds for $\widehat{\Gamma}$ and  $\widehat{\Theta}$, as well as those of the corresponding 
	$\widehat{\mathbf{F}}$ and $\widehat{\Lambda}$ extracted under (IR). For the optimization problem given in~\eqref{opt:solveTheta}, we assume that $r\geq p_1$ and $\phi$ is always compatible with the true data generating mechanism, so that $\Theta^\star$ is always feasible. To this end, the error bounds of $\widehat{\Theta}$ and $\widehat{\Gamma}$ for deterministic realizations crucially rely on two components: (i) $\mathbf{X}$ satisfying the RSC condition with curvature $\alpha_{\RSC}^{\mathbf{X}}$; and (ii) the tuning parameter $\lambda_\Gamma$ being chosen in accordance with the deviation bound condition that is associated with the interaction between $\mathbf{X}$ and $\mathbf{E}$, the strength of the noise, and the interaction between the space spanned by the factor hyperplane and the observed $\mathbf{X}$. Upon the satisfaction of these conditions, the error bounds of $\widehat{\Theta}$ and $\widehat{\Gamma}$ are given by
	\begin{equation*}
	\smalliii{\Delta_\Gamma}_{\F}^2 + \smalliii{\Delta_{\Theta}/\sqrt{n}}_{\F}^2 \leq C_1\lambda_\Gamma^2\big(p_1+r+4s_{\Gamma^\star}\big)/\min\{\alpha^{\mathbf{X}}_{\RSC},1\}^2,
	\end{equation*} 
	and these conditions hold with high probability for random realizations of $X_t$ and $Y_t$. Since $\widehat{\mathbf{F}}$ is the first $p_1$ columns of $\widehat{\Theta}$, it possesses an error bound of the similar form. 
	
	%% SKETCH for establishing results for stage 2 VAR estimate
	Next, we briefly sketch the error bounds of $\widehat{A}$. For the optimization in~\eqref{opt:solveA}, for deterministic realizations, the results in \citet{basu2015estimation} can be applied with the corresponding RSC condition and deviation condition imposed on quantities associated with $\widehat{\mathbf{Z}}_n$ and $\widehat{\mathbf{Z}}_{n-1}$, and the error for $\widehat{A}$ is in the form of
	\begin{equation*}
	\smalliii{\Delta_A}_{\F}^2 \leq C_2s_{A^\star}\lambda_A^2/(\alpha_{RSC}^{\widehat{\mathbf{Z}}})^2.
	\end{equation*}
	Then, for random realizations, assuming $\Delta_{\mathbf{F}}$ known and non-random, to satisfy the corresponding regularity conditions, we additionally require that the following functional involving the spectral density of the underlying joint process $Z_t$ exhibits adequate curvature, that is, $\mathfrak{m}(f_Z)/\sqrt{\mathcal{M}(f_Z)} > c_0 h_1(\Delta_{\mathbf{F}_{n-1}})$ for constant $c_0$ and some function $h_1$ of the error $\Delta_{\mathbf{F}_{n-1}}$ that captures its magnitude. Moreover, the deviation bound is of the form $h_2(\Delta_{\mathbf{F}})$, which can be viewed as another function of the error\footnote{note the deviation bound in principle also depends on other population quantities such as $\mathfrak{m}(f_Z)$, $\mathcal{M}(f_Z)$, $\Lambda_{\max}(\Sigma_w)$ etc.}. Further, since $\Delta_{\mathbf{F}}$ is bounded with high probability from the analysis in Stage I, it will be established that $h_1(\Delta_{\mathbf{F}})$ and $h_2(\Delta_{\mathbf{F}})$ are both upper bounded at a certain rate, thus ensuring that the RSC condition and the deviation conditions can both be satisfied unconditionally, by properly choosing the required constants. 
	
	% >>>>>>>>>>>>>>>>>>>>>>>>>>>>>>>>>>>>>>>>>>>>>>>>>>>>>>>>>
	% FIXED REALIZATION
	% >>>>>>>>>>>>>>>>>>>>>>>>>>>>>>>>>>>>>>>>>>>>>>>>>>>>>>>>>
	\subsection{Statistical error bounds with deterministic realizations}\label{sec:theory-fix}
	
	Proposition~\ref{prop:Theta-Gamma-fixed} below gives the error bounds for the estimators in~\eqref{opt:solveTheta}, assuming certain regularity conditions hold for deterministic realizations of the processes $X_t$ and $Y_t$, upon suitable choice of the regularization parameters.  
	
	%% Error bound for info eqn under fixed realization 
	\begin{proposition}[Bound for $\Delta_{\Theta}$ and $\Delta_{\Gamma}$ under fixed realizations]\label{prop:Theta-Gamma-fixed} Suppose the fixed realizations $\mathbf{X}\in\R^{n\times p_2}$ of process $\{X_t\in\R^{p_2}\}$ satisfies the RSC condition with curvature $\alpha_{\text{RSC}}^{\mathbf{X}}>0$ and a tolerance $\tau_{\mathbf{X}}$ for which
		\begin{equation*}
		\tau_{\mathbf{X}}\cdot\big(p_1+r+4s_{\Gamma^\star}\big) < \min\{\alpha^{\mathbf{X}}_{\text{\RSC}},1\}/16. 
		\end{equation*}
		Then, for any matrix pair $(\Theta^\star,\Gamma^\star)$ satisfying the constraint $\varphi_{\mathcal{R}}(\Theta^\star)\leq \phi$ that generates $\mathbf{Y}$, for estimators $(\widehat{\Theta},\widehat{\Gamma})$ obtained by solving~\eqref{opt:solveTheta} with regularization parameters $\lambda_\Gamma$ satisfying
		\begin{equation*}
		\lambda_\Gamma\geq \max\big\{ 2\|\mathbf{X}^\top\mathbf{E}/n\|_\infty,~4\phi/\sqrt{nq},~\Lambda_{\max}^{1/2}(S_{\mathbf{E}})\big\},
		\end{equation*}
		the following bound holds:
		\begin{equation}\label{eqn:boundThetaB}
		\smalliii{\Delta_\Gamma}_{\F}^2 + \smalliii{\Delta_{\Theta}/\sqrt{n}}_{\F}^2 \leq \frac{16\lambda_\Gamma^2 \big( p_1 + r + 4s_{\Gamma^\star} \big)}{\min\{\alpha^{\mathbf{X}}_{\RSC},1\}^2}.
		\end{equation}
	\end{proposition}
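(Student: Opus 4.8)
The plan is to run the standard "basic inequality" argument for regularized $M$-estimators, carefully tracking both the sparse block $\Gamma$ and the low-rank block $\Theta$, and to exploit the box constraint $\|\Theta\|_\infty \le \phi/(\sqrt{nq}\,\smalliii{\mathbf{X}/\sqrt{n}}_{\op})$ precisely to control the cross term between $\Delta_\Theta$ and $\mathbf{X}\Delta_\Gamma^\top$. First I would write $\widehat{\Theta} = \Theta^\star + \Delta_\Theta$ and $\widehat\Gamma = \Gamma^\star + \Delta_\Gamma$, note that both $(\widehat{\Theta},\widehat\Gamma)$ and $(\Theta^\star,\Gamma^\star)$ are feasible for~\eqref{opt:solveTheta} (the latter by the hypothesis $\varphi_{\mathcal{R}}(\Theta^\star)\le\phi$ and $r\ge p_1$), and use optimality of the minimizer to get
\begin{equation*}
\frac{1}{2n}\smalliii{\mathbf{Y}-\widehat\Theta-\mathbf{X}\widehat\Gamma^\top}_{\F}^2 + \lambda_\Gamma\|\widehat\Gamma\|_1 \le \frac{1}{2n}\smalliii{\mathbf{Y}-\Theta^\star-\mathbf{X}(\Gamma^\star)^\top}_{\F}^2 + \lambda_\Gamma\|\Gamma^\star\|_1.
\end{equation*}
Substituting $\mathbf{Y}-\Theta^\star-\mathbf{X}(\Gamma^\star)^\top = \mathbf{E}$ and expanding the square yields the basic inequality
\begin{equation*}
\frac{1}{2n}\smalliii{\Delta_\Theta + \mathbf{X}\Delta_\Gamma^\top}_{\F}^2 \le \frac{1}{n}\llangle \mathbf{E},\,\Delta_\Theta + \mathbf{X}\Delta_\Gamma^\top\rrangle + \lambda_\Gamma\big(\|\Gamma^\star\|_1 - \|\widehat\Gamma\|_1\big).
\end{equation*}

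Next I would bound the linear (noise) term by splitting it as $\tfrac1n\llangle\mathbf{E},\Delta_\Theta\rrangle + \tfrac1n\llangle\mathbf{E},\mathbf{X}\Delta_\Gamma^\top\rrangle$. For the $\Delta_\Gamma$ piece, $\tfrac1n\llangle\mathbf{E},\mathbf{X}\Delta_\Gamma^\top\rrangle = \llangle \tfrac1n\mathbf{X}^\top\mathbf{E},\Delta_\Gamma\rrangle \le \|\mathbf{X}^\top\mathbf{E}/n\|_\infty\|\Delta_\Gamma\|_1$, which is absorbed by $\lambda_\Gamma$ via the first term in the $\max$ defining the allowable $\lambda_\Gamma$. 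For the $\Delta_\Theta$ piece I would use $|\tfrac1n\llangle\mathbf{E},\Delta_\Theta\rrangle|$ together with either a trace-duality bound $\le \tfrac1n\smalliii{\mathbf{E}}_{\op}\smalliii{\Delta_\Theta}_{\mathrm{nuc}}$ controlled through the rank constraint $\rank(\Delta_\Theta)\le 2r$, or — and this is where the box constraint enters — an entrywise bound $\le \|\mathbf{E}\|_\infty/n \cdot \|\Delta_\Theta\|_1$ combined with $\|\Delta_\Theta\|_\infty \le 2\phi/(\sqrt{nq}\,\smalliii{\mathbf{X}/\sqrt n}_{\op})$. The appearance of $\phi/\sqrt{nq}$ and $\Lambda_{\max}^{1/2}(S_{\mathbf{E}})$ in the $\lambda_\Gamma$ lower bound strongly suggests the route is: express the $\Delta_\Theta$-noise term through $\smalliii{S_{\mathbf{E}}}_{\op}^{1/2}$ and the Frobenius norm $\smalliii{\Delta_\Theta/\sqrt n}_{\F}$, using that $\smalliii{\Delta_\Theta}_{\F}\le\sqrt{2r}\,\smalliii{\Delta_\Theta}_{\op}$ and the box bound to trade operator norm against the $\varphi_{\mathcal{R}}$ functional. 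Then the standard decomposition $\|\Gamma^\star\|_1 - \|\widehat\Gamma\|_1 \le \|\Delta_{\Gamma,S}\|_1 - \|\Delta_{\Gamma,S^c}\|_1$ over the support $S$ of $\Gamma^\star$ ($|S| = s_{\Gamma^\star}$) gives a cone condition $\|\Delta_{\Gamma,S^c}\|_1 \le 3\|\Delta_{\Gamma,S}\|_1$, so that $\|\Delta_\Gamma\|_1 \le 4\sqrt{s_{\Gamma^\star}}\,\smalliii{\Delta_\Gamma}_{\F}$.

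The third step is to convert the left-hand side $\tfrac1{2n}\smalliii{\Delta_\Theta+\mathbf{X}\Delta_\Gamma^\top}_{\F}^2$ into a lower bound involving $\smalliii{\Delta_\Gamma}_{\F}^2 + \smalliii{\Delta_\Theta/\sqrt n}_{\F}^2$. Expanding, this is $\tfrac1{2n}\smalliii{\Delta_\Theta}_{\F}^2 + \tfrac1{2n}\smalliii{\mathbf{X}\Delta_\Gamma^\top}_{\F}^2 + \tfrac1n\llangle\Delta_\Theta,\mathbf{X}\Delta_\Gamma^\top\rrangle$; I would apply the RSC condition to the middle term to get $\tfrac1{2n}\smalliii{\mathbf{X}\Delta_\Gamma^\top}_{\F}^2 \ge \tfrac{\alpha_{\RSC}^{\mathbf{X}}}{2}\smalliii{\Delta_\Gamma}_{\F}^2 - \tau_{\mathbf{X}}\|\Delta_\Gamma\|_1^2$, and bound the cross term $|\tfrac1n\llangle\Delta_\Theta,\mathbf{X}\Delta_\Gamma^\top\rrangle|$ by trace-duality plus the box constraint: $\le \|\Delta_\Theta\|_\infty \cdot \|\mathbf{X}^\top\Delta_\Gamma^\top/n\cdot\mathbf{1}\|\ldots$ — more precisely $\le \sqrt{2r}\,\smalliii{\Delta_\Theta}_{\op}\,\tfrac1n\smalliii{\mathbf{X}\Delta_\Gamma^\top}_{\F}\cdot(\text{something})$, again converting $\smalliii{\Delta_\Theta}_{\op}$ via the box bound into a $\phi/(\sqrt{nq}\,\smalliii{\mathbf{X}/\sqrt n}_{\op})$ factor so that the $\smalliii{\mathbf{X}/\sqrt n}_{\op}$ cancels against $\tfrac1{\sqrt n}\smalliii{\mathbf{X}\Delta_\Gamma^\top}_{\F} \le \smalliii{\mathbf{X}/\sqrt n}_{\op}\smalliii{\Delta_\Gamma}_{\F}$, leaving a clean $\tfrac{\phi}{\sqrt{nq}}\sqrt{2r}\,\smalliii{\Delta_\Gamma}_{\F}$-type term that the $4\phi/\sqrt{nq}$ clause of the $\lambda_\Gamma$ lower bound absorbs. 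Collecting everything, using the tolerance hypothesis $\tau_{\mathbf{X}}(p_1+r+4s_{\Gamma^\star}) < \min\{\alpha_{\RSC}^{\mathbf{X}},1\}/16$ to kill the $\tau_{\mathbf{X}}\|\Delta_\Gamma\|_1^2 \le 16\tau_{\mathbf{X}}s_{\Gamma^\star}\smalliii{\Delta_\Gamma}_{\F}^2$ remainder against the RSC curvature, and treating $\smalliii{\Delta_\Theta/\sqrt n}_{\F}^2$ with the coefficient $1$ on a par with $\min\{\alpha_{\RSC}^{\mathbf{X}},1\}\smalliii{\Delta_\Gamma}_{\F}^2$, one arrives at an inequality of the form $\min\{\alpha_{\RSC}^{\mathbf{X}},1\}\big(\smalliii{\Delta_\Gamma}_{\F}^2 + \smalliii{\Delta_\Theta/\sqrt n}_{\F}^2\big) \lesssim \lambda_\Gamma\sqrt{p_1+r+s_{\Gamma^\star}}\sqrt{\smalliii{\Delta_\Gamma}_{\F}^2 + \smalliii{\Delta_\Theta/\sqrt n}_{\F}^2}$, and dividing through gives~\eqref{eqn:boundThetaB} after squaring and tracking the constant $16$.

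The main obstacle I anticipate is the bookkeeping in step three: making the cross term $\tfrac1n\llangle\Delta_\Theta,\mathbf{X}\Delta_\Gamma^\top\rrangle$ genuinely subdominant requires the box constraint to be invoked in exactly the right normalization so that $\smalliii{\mathbf{X}/\sqrt n}_{\op}$ cancels, and simultaneously one must control $\smalliii{\Delta_\Theta}_{\op}$ or $\smalliii{\Delta_\Theta}_{\mathrm{nuc}}$ by $\sqrt{r}\,\smalliii{\Delta_\Theta}_{\F}$ using $\rank(\Delta_\Theta)\le 2r$ (hence the $p_1+r$ rather than $p_1$ in the bound) — it is the interplay of the rank constraint and the spikiness/compactness constraint that is delicate, since neither alone suffices. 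The $\Lambda_{\max}^{1/2}(S_{\mathbf{E}})$ term in the $\lambda_\Gamma$ condition is the fingerprint of bounding $\tfrac1n\llangle\mathbf{E},\Delta_\Theta\rrangle \le \smalliii{S_{\mathbf{E}}}_{\op}^{1/2}\smalliii{\Delta_\Theta/\sqrt n}_{\F}\cdot\sqrt{2r}$ by a Cauchy–Schwarz-type step, and getting that step to land with the stated constants is the other place care is needed.
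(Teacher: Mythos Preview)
Your overall architecture (basic inequality, RSC lower bound, deviation-type upper bound, then square and divide) matches the paper, and your identification of $\Lambda_{\max}^{1/2}(S_{\mathbf{E}})$ as coming from the nuclear/operator duality $\tfrac{1}{n}\llangle\mathbf{E},\Delta_\Theta\rrangle \le \Lambda_{\max}^{1/2}(S_{\mathbf{E}})\smalliii{\Delta_\Theta/\sqrt n}_{*}\le \Lambda_{\max}^{1/2}(S_{\mathbf{E}})\sqrt{p_1+r}\,\smalliii{\Delta_\Theta/\sqrt n}_{\F}$ is exactly right. But two steps are off.

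First, the cone condition you state, $\|\Delta_{\Gamma,S^c}\|_1\le 3\|\Delta_{\Gamma,S}\|_1$, is not available here: the basic inequality carries the extra positive term $\tfrac{1}{n}\llangle\mathbf{E},\Delta_\Theta\rrangle$ on the right, so nonnegativity only gives
\[
\tfrac{1}{2}\|\Delta_{\Gamma,S^c}\|_1 \;\le\; \tfrac{3}{2}\|\Delta_{\Gamma,S}\|_1 + \sqrt{p_1+r}\,\smalliii{\Delta_\Theta/\sqrt n}_{\F},
\]
hence $\|\Delta_\Gamma\|_1^2 \le 4(p_1+r+4s_{\Gamma^\star})\big(\smalliii{\Delta_\Gamma}_{\F}^2+\smalliii{\Delta_\Theta/\sqrt n}_{\F}^2\big)$. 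This mixed cone is what makes the tolerance hypothesis $\tau_{\mathbf{X}}(p_1+r+4s_{\Gamma^\star})<\min\{\alpha_{\RSC}^{\mathbf{X}},1\}/16$ the right one, rather than just $\tau_{\mathbf{X}}\cdot 16s_{\Gamma^\star}$.

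Second, your handling of the cross term $\tfrac{1}{n}\llangle\Delta_\Theta,\mathbf{X}\Delta_\Gamma^\top\rrangle$ via ``$\sqrt{2r}\,\smalliii{\Delta_\Theta}_{\op}\cdot\tfrac{1}{n}\smalliii{\mathbf{X}\Delta_\Gamma^\top}_{\F}$'' does not connect to the box constraint: the constraint bounds $\|\Theta\|_\infty$, not $\smalliii{\Theta}_{\op}$, and converting costs a factor $\sqrt{nq}$ that you then have no way to cancel cleanly. Your first instinct (the $\|\Delta_\Theta\|_\infty$ route you abandoned) was the correct one. The paper moves the cross term to the right-hand side and bounds it by $\ell_1$--$\ell_\infty$ duality after splitting $\Delta_\Theta=\widehat\Theta-\Theta^\star$:
\[
\tfrac{1}{n}\big|\llangle\mathbf{X}\Delta_\Gamma^\top,\Delta_\Theta\rrangle\big|
\;\le\; \|\Delta_\Gamma\|_1\Big(\|\mathbf{X}^\top\widehat\Theta/n\|_\infty+\|\mathbf{X}^\top\Theta^\star/n\|_\infty\Big)
\;\le\; \frac{2\phi}{\sqrt{nq}}\,\|\Delta_\Gamma\|_1,
\]
using feasibility of both $\widehat\Theta$ and $\Theta^\star$ for the $\|\cdot\|_\infty$ constraint. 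This is absorbed by $\tfrac{\lambda_\Gamma}{2}\|\Delta_\Gamma\|_1$ via the clause $\lambda_\Gamma\ge 4\phi/\sqrt{nq}$, and then the mixed cone above converts all $\|\Delta_\Gamma\|_1$ terms into $\sqrt{p_1+r+4s_{\Gamma^\star}}\,(\smalliii{\Delta_\Gamma}_{\F}^2+\smalliii{\Delta_\Theta/\sqrt n}_{\F}^2)^{1/2}$, yielding the stated bound. The rank constraint is used only to pass from $\smalliii{\Delta_\Theta/\sqrt n}_*$ to $\sqrt{p_1+r}\,\smalliii{\Delta_\Theta/\sqrt n}_{\F}$ in the noise term, not in the cross term.
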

	
	Based on Proposition~\ref{prop:Theta-Gamma-fixed}, under fixed realizations of $X_t$ and $Y_t$, the error bounds of $\widehat{\Gamma}$ and $\widehat{\Theta}$ are established. Using these Stage I estimates and the IR condition, estimates of the
	factors and their loadings can be calculated. In particular, since $\Delta_{\mathbf{F}}$ corresponds to the first $p_1$ columns of $\Delta_{\Theta}$, the above bound automatically holds for $\Delta_{\mathbf{F}}$. Further, the following lemma provides the relative error of the estimated $\Lambda$ under (IR) and the condition on $\Lambda^{1/2}_{\max}(S_{\mathbf{F}})$, with the latter
	translating to the requirement that the leading signal of $\mathbf{F}$ overrules the averaged row error of $\Delta_{\Theta}$.
	
	%% ERROR bound for Lambda;
	\begin{lemma}[Bound of $\Delta_\Lambda$]\label{lemma:err-Lambda} The following error bound holds for $\widehat{\Lambda}$, provided that $\Lambda_{\max}^{1/2}(S_{\mathbf{F}})>\smalliii{\Delta_{\Theta}/\sqrt{n}}_{\F}$:
		\begin{equation}\label{eqn:bound1}
		\frac{\smalliii{\Delta_\Lambda}_\F}{\smalliii{\Lambda^\star}_\F}\leq \frac{\sqrt{p_1}\cdot\smalliii{\Delta_{\Theta}/\sqrt{n}}_{F}}{\Lambda^{1/2}_{\max}(S_{\mathbf{F}})- \smalliii{\Delta_{\Theta}/\sqrt{n}}_{\F}}\Big(1 + 1/\smalliii{\Lambda^\star}_{\F} \Big).
		\end{equation}
	\end{lemma}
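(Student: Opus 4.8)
The plan is to combine two exact matrix factorizations with a Weyl-type lower bound on the smallest singular value of $\widehat{\mathbf{F}}$. By the extraction~\eqref{eqn:F-Lambda}, under (IR) one has the exact identity $\widehat{\Theta}=\widehat{\mathbf{F}}\widehat{\Lambda}^{\top}$ with the leading $p_1\times p_1$ block of $\widehat{\Lambda}$ equal to $\I_{p_1}$ (so that, as observed below~\eqref{eqn:F-Lambda}, $\widehat{\mathbf{F}}$ is exactly the first $p_1$ columns of $\widehat{\Theta}$), and likewise $\Theta^\star=\mathbf{F}(\Lambda^\star)^{\top}$ for the true quantities. Writing $\widehat{\mathbf{F}}=\mathbf{F}+\Delta_{\mathbf{F}}$ and $\widehat{\Lambda}=\Lambda^\star+\Delta_{\Lambda}$, expanding $\widehat{\mathbf{F}}\widehat{\Lambda}^{\top}$ and subtracting $\Theta^\star$ collapses to the identity $\widehat{\mathbf{F}}\,\Delta_{\Lambda}^{\top}=\Delta_{\Theta}-\Delta_{\mathbf{F}}(\Lambda^\star)^{\top}$. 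Here $\Delta_{\mathbf{F}}$ is the submatrix of $\Delta_{\Theta}$ formed by its first $p_1$ columns, so $\smalliii{\Delta_{\mathbf{F}}}_{\op}\le\smalliii{\Delta_{\mathbf{F}}}_{\F}\le\smalliii{\Delta_{\Theta}}_{\F}$, and since both $\widehat{\Lambda}$ and $\Lambda^\star$ carry the identity top block the first $p_1$ columns of $\Delta_{\Lambda}^{\top}$ vanish, whence $\smalliii{\Delta_{\Lambda}}_{\F}=\smalliii{\Delta_{\Lambda}^{\top}}_{\F}$ is entirely governed by this identity.

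Next I would left-invert $\widehat{\mathbf{F}}$. Weyl's inequality applied to $\widehat{\mathbf{F}}/\sqrt{n}=\mathbf{F}/\sqrt{n}+\Delta_{\mathbf{F}}/\sqrt{n}$, together with $\smalliii{\Delta_{\mathbf{F}}/\sqrt{n}}_{\op}\le\smalliii{\Delta_{\Theta}/\sqrt{n}}_{\F}$, shows that the stated condition makes $\widehat{\mathbf{F}}$ of full column rank with its smallest singular value bounded below (after rescaling by $\sqrt{n}$) by the positive margin $\Lambda^{1/2}_{\max}(S_{\mathbf{F}})-\smalliii{\Delta_{\Theta}/\sqrt{n}}_{\F}$ dictated by the hypothesis --- this is the sense in which the leading signal of $\mathbf{F}$ must overrule the averaged error, and it is also what renders~\eqref{eqn:F-Lambda} well posed. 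Then $\Delta_{\Lambda}^{\top}=(\widehat{\mathbf{F}}^{\top}\widehat{\mathbf{F}})^{-1}\widehat{\mathbf{F}}^{\top}\big(\Delta_{\Theta}-\Delta_{\mathbf{F}}(\Lambda^\star)^{\top}\big)$, and I would take Frobenius norms using $\smalliii{(\widehat{\mathbf{F}}^{\top}\widehat{\mathbf{F}})^{-1}\widehat{\mathbf{F}}^{\top}}_{\F}\le\sqrt{p_1}\,/\,\sigma_{\min}(\widehat{\mathbf{F}})$ and the estimate $\smalliii{\Delta_{\Theta}-\Delta_{\mathbf{F}}(\Lambda^\star)^{\top}}_{\op}\le\smalliii{\Delta_{\Theta}}_{\F}+\smalliii{\Delta_{\mathbf{F}}}_{\F}\smalliii{\Lambda^\star}_{\F}\le\smalliii{\Delta_{\Theta}}_{\F}(1+\smalliii{\Lambda^\star}_{\F})$, which yields $\smalliii{\Delta_{\Lambda}}_{\F}\le\sqrt{p_1}\,\smalliii{\Delta_{\Theta}/\sqrt{n}}_{\F}(1+\smalliii{\Lambda^\star}_{\F})\,/\,\big(\Lambda^{1/2}_{\max}(S_{\mathbf{F}})-\smalliii{\Delta_{\Theta}/\sqrt{n}}_{\F}\big)$; dividing by $\smalliii{\Lambda^\star}_{\F}$ gives~\eqref{eqn:bound1}.

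The step I expect to carry the most weight --- and the one that shows (IR) is structural rather than cosmetic --- is the lower bound on $\sigma_{\min}(\widehat{\mathbf{F}})$: the extraction~\eqref{eqn:F-Lambda} only makes sense once the leading $p_1\times p_1$ block of the rescaled singular vectors of $\widehat{\Theta}$ (equivalently, $\widehat{\mathbf{F}}$) is nonsingular, so the perturbation argument must be arranged carefully enough that the denominator in~\eqref{eqn:bound1} stays strictly positive and the bound is non-vacuous; the remaining steps are routine norm algebra and do not interact with the temporal-dependence machinery. Finally, since~\eqref{eqn:bound1} is relative in $\smalliii{\Delta_{\Theta}/\sqrt{n}}_{\F}$, I would close by noting that pairing it with~\eqref{eqn:boundThetaB} of Proposition~\ref{prop:Theta-Gamma-fixed} --- once the high-probability control of $S_{\mathbf{F}}$, $\alpha^{\mathbf{X}}_{\RSC}$ and $\lambda_{\Gamma}$ from Section~\ref{sec:theory-random} is in hand --- converts it into an explicit finite-sample error rate for $\widehat{\Lambda}$ under (IR).
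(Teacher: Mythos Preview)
Your scaffold matches the paper's: both start from the identity $\Delta_\Theta = \Delta_{\mathbf{F}}(\Lambda^\star)^\top + \widehat{\mathbf{F}}\Delta_\Lambda^\top$, left-invert $\widehat{\mathbf{F}}$, and bound the residual by $\smalliii{\Delta_\Theta/\sqrt{n}}_{\F}(1+\smalliii{\Lambda^\star}_{\F})$ using $\smalliii{\Delta_{\mathbf{F}}}_{\F}\le\smalliii{\Delta_\Theta}_{\F}$. The divergence --- and the gap in your argument --- is in how the pseudoinverse is controlled.

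You bound $\smalliii{(\widehat{\mathbf{F}}^\top\widehat{\mathbf{F}})^{-1}\widehat{\mathbf{F}}^\top}_{\F}\le\sqrt{p_1}/\sigma_{\min}(\widehat{\mathbf{F}})$ (which is correct) and then invoke Weyl to lower-bound $\sigma_{\min}(\widehat{\mathbf{F}}/\sqrt{n})$ by $\Lambda^{1/2}_{\max}(S_{\mathbf{F}})-\smalliii{\Delta_\Theta/\sqrt{n}}_{\F}$. But Weyl's inequality only gives
\[
\sigma_{\min}(\widehat{\mathbf{F}}/\sqrt{n})\ \ge\ \sigma_{\min}(\mathbf{F}/\sqrt{n})-\smalliii{\Delta_{\mathbf{F}}/\sqrt{n}}_{\op}\ =\ \Lambda^{1/2}_{\min}(S_{\mathbf{F}})-\smalliii{\Delta_{\mathbf{F}}/\sqrt{n}}_{\op},
\]
with the \emph{smallest} eigenvalue of $S_{\mathbf{F}}$ on the right, not the largest. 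Your route therefore produces the denominator $\Lambda^{1/2}_{\min}(S_{\mathbf{F}})-\smalliii{\Delta_\Theta/\sqrt{n}}_{\F}$, which does not match~\eqref{eqn:bound1}; and the lemma's stated hypothesis $\Lambda^{1/2}_{\max}(S_{\mathbf{F}})>\smalliii{\Delta_\Theta/\sqrt{n}}_{\F}$ is not strong enough to guarantee that quantity is positive (nor, consequently, that $\widehat{\mathbf{F}}$ has full column rank).

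The paper avoids $\sigma_{\min}$ altogether. It reaches the intermediate bound $\smalliii{\Delta_\Lambda}_{\F}\le\sqrt{p_1}\,\Lambda^{-1/2}_{\max}(S_{\widehat{\mathbf{F}}})\,\smalliii{\Delta_\Theta/\sqrt{n}}_{\F}\,(1+\smalliii{\Lambda^\star}_{\F})$ by combining $\smalliii{\widehat{\mathbf{F}}}_{\F}\le\sqrt{p_1}\smalliii{\widehat{\mathbf{F}}}_{\op}$ with $\smalliii{\widehat{\mathbf{F}}^\top\widehat{\mathbf{F}}}_{\F}\ge\smalliii{\widehat{\mathbf{F}}^\top\widehat{\mathbf{F}}}_{\op}=\smalliii{\widehat{\mathbf{F}}}_{\op}^2$, and then lower-bounds $\Lambda^{1/2}_{\max}(S_{\widehat{\mathbf{F}}})=\smalliii{\widehat{\mathbf{F}}/\sqrt{n}}_{\op}$ via the reverse triangle inequality for the operator norm, $\smalliii{(\mathbf{F}+\Delta_{\mathbf{F}})/\sqrt{n}}_{\op}\ge\smalliii{\mathbf{F}/\sqrt{n}}_{\op}-\smalliii{\Delta_{\mathbf{F}}/\sqrt{n}}_{\op}$. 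That is how $\Lambda_{\max}(S_{\mathbf{F}})$, rather than $\Lambda_{\min}(S_{\mathbf{F}})$, enters the denominator of~\eqref{eqn:bound1}.
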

	
	Up to this point, error bounds have been obtained for all the parameters in the calibration equation. The following proposition establishes the error bound for the estimator obtained from solving ~\eqref{opt:solveA}, based on observed $\mathbf{X}$ and estimated $\widehat{\mathbf{F}}$, and assuming $\Delta_{\mathbf{F}}$ is fixed. 
	%% Error bound for VAR eqn with fixed realization and fixed DeltaF 
	\begin{proposition}[Bound for $\Delta_A$ under fixed realization and a non-random $\Delta_{\mathbf{F}}$]\label{prop:A-fixed} Consider the estimator $\widehat{A}$ obtained by solving~\eqref{opt:solveA}. Suppose the following conditions hold:
		\begin{itemize}
			\item[A1.] $\widehat{\mathbf{Z}}_{n-1}:=[\widehat{\mathbf{F}}_{n-1},\mathbf{X}_{n-1}]$ satisfies the RSC condition with curvature $\alpha_{\RSC}^{\widehat{\mathbf{Z}}}$ and tolerance $\tau_\mathbf{Z}$ for which $s_{A^\star}\tau_{\mathbf{Z}} < \alpha_{\RSC}^{\widehat{\mathbf{Z}}}/64$;
			\item[A2.] $\|\widehat{\mathbf{Z}}_{n-1}^\top\big( \widehat{\mathbf{Z}}_n - \widehat{\mathbf{Z}}_{n-1}(A^\star)^\top \big)/n\|_\infty \leq C(n,p_1,p_2)$ where $C(n,p_1,p_2)$ is some function that depends on $n,p_1$ and $p_2$. 
		\end{itemize}
		Then, for any $\lambda_A\geq 4C(n,p_1,p_2)$, the following error bound holds for $\widehat{A}$:
		\begin{equation*}
		\smalliii{\Delta_A}_{\F} \leq 16\sqrt{s_{A^\star}}\lambda_A/\alpha_{\RSC}^{\widehat{\mathbf{Z}}}.
		\end{equation*}
	\end{proposition}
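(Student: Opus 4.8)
The plan is to run the standard analysis of an $\ell_1$-regularized $M$-estimator with deterministic design \citep{negahban2012unified,loh2012high}, in the multivariate/VAR form used by \citet{basu2015estimation}, treating $\widehat{\mathbf F}$ (equivalently the error $\Delta_{\mathbf F}$) as fixed and given, so that A1 and A2 are available as hypotheses. Write $\widehat{\mathbf W}:=\widehat{\mathbf Z}_n-\widehat{\mathbf Z}_{n-1}(A^\star)^\top$ for the effective residual; this is \emph{not} the VAR innovation but rather $\mathbf W$ plus a perturbation driven by $\Delta_{\mathbf F}$, whose precise structure is irrelevant at this stage because A2 already bounds $\|n^{-1}\widehat{\mathbf Z}_{n-1}^\top\widehat{\mathbf W}\|_\infty$. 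Defining the residual through the \emph{true} $A^\star$ is what makes A2 the exactly relevant quantity and keeps the argument verbatim parallel to the fixed-design portion of \citet{basu2015estimation}.

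The first step is the basic inequality: since $\widehat A$ globally minimizes \eqref{opt:solveA}, comparing its objective value with that of $A^\star$ and expanding the squared Frobenius losses yields
\begin{equation*}
\frac{1}{2n}\smalliii{\widehat{\mathbf Z}_{n-1}\Delta_A^\top}_{\F}^2 \;\le\; \llangle \tfrac{1}{n}\widehat{\mathbf Z}_{n-1}^\top\widehat{\mathbf W},\,\Delta_A\rrangle \;+\; \lambda_A\big(\|A^\star\|_1-\|\widehat A\|_1\big).
\end{equation*}
For the cross term, Hölder's inequality together with A2 and $\lambda_A\ge 4C(n,p_1,p_2)$ gives $|\llangle n^{-1}\widehat{\mathbf Z}_{n-1}^\top\widehat{\mathbf W},\Delta_A\rrangle|\le C(n,p_1,p_2)\|\Delta_A\|_1\le (\lambda_A/4)\|\Delta_A\|_1$. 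For the penalty term, decomposability of $\|\cdot\|_1$ over the support $S$ of $A^\star$ (with $|S|=s_{A^\star}$) gives $\|A^\star\|_1-\|\widehat A\|_1\le\|\Delta_{A,S}\|_1-\|\Delta_{A,S^c}\|_1$. Since the left-hand side above is nonnegative, combining these bounds forces $\|\Delta_{A,S^c}\|_1\le \tfrac53\|\Delta_{A,S}\|_1$, hence $\Delta_A$ lies in the usual cone and $\|\Delta_A\|_1\le\tfrac83\|\Delta_{A,S}\|_1\le\tfrac83\sqrt{s_{A^\star}}\smalliii{\Delta_A}_{\F}$.

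The final step applies the RSC condition A1 to the left-hand side, $\tfrac{1}{2n}\smalliii{\widehat{\mathbf Z}_{n-1}\Delta_A^\top}_{\F}^2\ge \tfrac{\alpha_{\RSC}^{\widehat{\mathbf Z}}}{2}\smalliii{\Delta_A}_{\F}^2-\tau_{\mathbf Z}\|\Delta_A\|_1^2$, and uses the cone bound to absorb the tolerance term: $\tau_{\mathbf Z}\|\Delta_A\|_1^2\le\tfrac{64}{9}s_{A^\star}\tau_{\mathbf Z}\smalliii{\Delta_A}_{\F}^2$, which under the slack condition $s_{A^\star}\tau_{\mathbf Z}<\alpha_{\RSC}^{\widehat{\mathbf Z}}/64$ is strictly smaller than $\tfrac{\alpha_{\RSC}^{\widehat{\mathbf Z}}}{2}\smalliii{\Delta_A}_{\F}^2$, leaving a positive multiple of $\alpha_{\RSC}^{\widehat{\mathbf Z}}\smalliii{\Delta_A}_{\F}^2$. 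Bounding the right-hand side of the basic inequality by $\tfrac{5\lambda_A}{4}\|\Delta_{A,S}\|_1\le\tfrac{5\lambda_A}{4}\sqrt{s_{A^\star}}\smalliii{\Delta_A}_{\F}$, chaining the two estimates, and dividing through by $\smalliii{\Delta_A}_{\F}$ produces the claimed bound $\smalliii{\Delta_A}_{\F}\le 16\sqrt{s_{A^\star}}\lambda_A/\alpha_{\RSC}^{\widehat{\mathbf Z}}$; the explicit constant obtained from the split above is comfortably below $16$, the extra slack being deliberate.

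The only delicate point is bookkeeping: one must track the constants through the cone inequality so that, with the prescribed slack $s_{A^\star}\tau_{\mathbf Z}<\alpha_{\RSC}^{\widehat{\mathbf Z}}/64$, the RSC tolerance term is genuinely dominated while a fixed fraction of $\alpha_{\RSC}^{\widehat{\mathbf Z}}\smalliii{\Delta_A}_{\F}^2$ survives. The conceptually substantive work — verifying that A1 and A2 hold with high probability for random realizations, and in particular controlling $C(n,p_1,p_2)$ in terms of $\Delta_{\mathbf F}$ and the spectral extremes $\mathfrak m(f_Z)$, $\mathcal M(f_Z)$ — is deferred to Section~\ref{sec:theory-random}; within this proposition $\Delta_{\mathbf F}$ is frozen and enters only through the hypothesis A2.
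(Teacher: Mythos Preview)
Your proposal is correct and follows essentially the same approach as the paper: the paper's proof simply invokes \citet[Proposition 4.1]{basu2015estimation} and observes that the identical fixed-design argument goes through once the RSC and deviation conditions are imposed on the surrogate quantities $\widehat{\mathbf Z}_{n-1},\widehat{\mathbf Z}_n$, which is exactly the argument you spell out in detail. Your explicit cone-and-RSC bookkeeping is precisely what underlies that cited proposition, so there is no substantive difference in route.
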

	Note that Proposition~\ref{prop:A-fixed} applies the results in \citet[Proposition 4.1]{basu2015estimation} to the setting in this study, where Stage II estimation of the transition matrix is based on $\widehat{\mathbf{Z}}_n$ and $\widehat{\mathbf{Z}}_{n-1}$; consequently, the regularity conditions should be imposed on corresponding quantities associated with $\widehat{\mathbf{Z}}_{n}$ and $\widehat{\mathbf{Z}}_{n-1}$.
	
	Propositions~\ref{prop:Theta-Gamma-fixed} and~\ref{prop:A-fixed} give finite sample error bounds for the estimators of the parameters obtained by solving optimization problems~\eqref{opt:solveTheta} and~\eqref{opt:solveA} based on fixed realizations of the observable processes $X_t$ and $Y_t$, and the regularity conditions outlined. Next, we examine and verify these conditions for random realizations of the processes, to establish high probability error bounds for these estimators.
	
	% >>>>>>>>>>>>>>>>>>>>>>>>>>>>>>>>>>>>>>>>>>>>>>>>>>>>>>>>>
	% RANDOM REALIZATION
	% >>>>>>>>>>>>>>>>>>>>>>>>>>>>>>>>>>>>>>>>>>>>>>>>>>>>>>>>>
	\subsection{High probability bounds under random realizations} \label{sec:theory-random}
	
	We provide high probability bounds or concentrations for the quantities associated with the required regularity conditions, for random realizations of $X_t$ and $Y_t$. Specifically, we note that when $X_t$ is considered separately from the joint system, it follows a high-dimensional VAR-X model \citep{lin2017regularized}
	\begin{equation*}
	X_t = A_{22}X_{t-1} + A_{21}F_{t-1} + w_t^X, 
	\end{equation*}
	whose spectrum $f_X(\omega)$ satisfies 
	\begin{equation*}
	f_X(\omega) = \big[\mathcal{A}^{-1}_X(e^{-i\omega})\big]\big( A_{21}f_F(\omega)A_{21}^\top + f_{w^X}(\omega) + f_{w^X,F}(\omega)A_{21}^\top + A_{21}f_{F,w^X}(\omega) \big)\big[\mathcal{A}^{-1}_X(e^{-i\omega})\big]^*,
	\end{equation*}
	where $\mathcal{A}_{X}(L):=\mathrm{I}-A_{22}L$. Similar properties hold for $F_t$. Throughout, we assume $\{X_t\},\{F_t\}$ and $\{Y_t\}$ are all mean-zero stable Gaussian processes. 
	
	Lemmas~\ref{lemma:RSCX} to~\ref{lemma:Emax} respectively verify the RSC condition associated with $\mathbf{X}$ and establish the high probability bounds for $\|\mathbf{X}^\top\mathbf{E}/n\|_\infty$, $\Lambda_{\max}(S_{\mathbf{E}})$ and $\Lambda_{\max}(S_{\mathbf{X}})$.
	
	%% LEMMAS for info equation, random realizations 
	\begin{lemma}[Verification of the RSC condition for $\mathbf{X}$]\label{lemma:RSCX} 
		Consider $\mathbf{X}\in\R^{n \times p_2}$ whose rows correspond to a random realization $\{x_1,\dots,x_{n}\}$ of the stable Gaussian $\{X_t\}$ process, and its dynamics are governed by~\eqref{md:FAVAR-VAR1}. Then, there exist positive constants 
		$c_i>0, i=1,2$, such that with probability at least $1-c_1\exp(-c_2n\min\{\gamma^{-2},1\})$ where $\gamma:=54\mathcal{M}(g_X)/\mathfrak{m}(g_X)$, the RSC condition holds for $\mathbf{X}$ with curvature $\alpha_{\RSC}^{\mathbf{X}}$ and tolerance~$\tau_\mathbf{X}$ satisfying
		\begin{equation*}
		\alpha_{\RSC}^{\mathbf{X}} = \pi\mathfrak{m}(f_X), \qquad \tau_\mathbf{X} = \alpha_{\RSC}\gamma^2\Big( \frac{\log p_2}{n}\Big)/2~,
		\end{equation*}
		provided that $n\gtrsim \log p_2$. 		
	\end{lemma}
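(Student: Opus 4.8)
The plan is to reduce the RSC verification for $\mathbf{X}$ to a concentration statement about the quadratic form $\frac{1}{2n}\smalliii{\mathbf{X}\Delta}_\F^2$ around its population analogue, and then invoke the spectral-density machinery of \citet{basu2015estimation} adapted to the VAR-X process. First I would observe that, since the rows of $\mathbf{X}$ are a realization of the stable Gaussian process $\{X_t\}$ with spectral density $f_X$ given explicitly in the excerpt, the population Gram matrix $\mathbb{E}(\mathbf{X}^\top\mathbf{X}/n)=\Sigma_X$ satisfies $\Lambda_{\min}(\Sigma_X)\geq 2\pi\mathfrak{m}(f_X)$ and $\Lambda_{\max}(\Sigma_X)\leq 2\pi\mathcal{M}(f_X)$ by the standard integral representation $\Sigma_X(0)=\int_{-\pi}^{\pi} f_X(\omega)\,d\omega$. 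This gives the target curvature $\alpha_{\RSC}^{\mathbf{X}}=\pi\mathfrak{m}(f_X)$ (with the factor $\pi$ rather than $2\pi$ leaving slack to absorb the deviation term). Here $g_X$ denotes the relevant normalized spectral density whose extremes control the deviation, consistent with the notation $\gamma:=54\mathcal{M}(g_X)/\mathfrak{m}(g_X)$ appearing in the statement.

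The core step is a uniform deviation bound: for a fixed $\Delta$ with $\smalliii{\Delta}_\F=1$, the random variable $\frac{1}{n}\smalliii{\mathbf{X}\Delta}_\F^2$ concentrates around its mean $\mathbb{E}\,\frac{1}{n}\smalliii{\mathbf{X}\Delta}_\F^2$, which itself lies in $[2\pi\mathfrak{m}(f_X),2\pi\mathcal{M}(f_X)]$; the deviation is controlled via a Hanson--Wright-type inequality for dependent Gaussian vectors, where the relevant variance proxy scales with $\mathcal{M}(f_X)$. I would then pass from a single $\Delta$ to all $\Delta$ in the $\ell_1$-cone by a discretization/peeling argument over sparsity levels, which is exactly where the $(\log p_2)/n$ rate and the $\|\Delta\|_1^2$ tolerance term enter: the union bound over an $\varepsilon$-net of $s$-sparse unit vectors costs $\log\binom{p_2}{s}\lesssim s\log p_2$, and converting the sparse bound to the general cone bound introduces $\|\Delta\|_1^2$ with coefficient $\tau_\mathbf{X}\asymp\alpha_{\RSC}\gamma^2(\log p_2)/n$. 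The probability $1-c_1\exp(-c_2 n\min\{\gamma^{-2},1\})$ and the requirement $n\gtrsim\log p_2$ both emerge from balancing the exponent in the concentration inequality against the metric-entropy cost.

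The main obstacle is handling the temporal dependence and the VAR-X structure simultaneously: unlike the i.i.d. case, the rows of $\mathbf{X}$ are correlated, so the quadratic form is not a sum of independent terms, and one must express $\mathbf{X}\Delta$ (vectorized) as a Gaussian vector with covariance built from $\Sigma_X(h)$ across lags and bound its operator norm by $2\pi\mathcal{M}(f_X)$ via the spectral representation. This is precisely the technical heart of \citet{basu2015estimation}; I would invoke their Proposition~2.4 (measure concentration for $\smalliii{\mathbf{X}v}_2^2$ for stable Gaussian processes) essentially verbatim, checking only that the VAR-X process $X_t=A_{22}X_{t-1}+A_{21}F_{t-1}+w_t^X$ is stable so that $\mathcal{M}(f_X)<\infty$ and $\mathfrak{m}(f_X)>0$ — which follows from the stated stability of the joint process $Z_t$ and the explicit formula for $f_X(\omega)$ in terms of $\mathcal{A}_X^{-1}(e^{-i\omega})$, $f_F$, $f_{w^X}$ and the cross-spectra. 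The remaining bookkeeping — tracking the constants $c_1,c_2$ and the exact form of $\gamma$ — is routine given that template.
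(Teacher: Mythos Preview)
Your proposal is correct and follows essentially the same route as the paper: reduce the matrix RSC condition to the vector form $\tfrac{1}{2}v^\top S_{\mathbf{X}}v\geq \tfrac{\alpha_{\RSC}}{2}\|v\|_2^2-\tau_n\|v\|_1^2$, apply the concentration inequality of \citet[Proposition~2.4(a)]{basu2015estimation} for $v^\top(S_{\mathbf{X}}-\Sigma_X(0))v$, discretize over $2s$-sparse unit vectors and take a union bound, then set $\eta=\gamma^{-1}$ and choose $s\asymp n\gamma^{-2}/\log p_2$ to balance the exponent. The only ingredient you do not name explicitly is the transfer lemma \citep[Lemma~12]{loh2012high} that converts the uniform bound over the sparse set $\mathbb{K}(2s)$ into the general RSC inequality with the $\|v\|_1^2$ tolerance term, but your ``converting the sparse bound to the general cone bound'' is exactly that step.
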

	
	\begin{lemma}[High probability bound for $\|\mathbf{X}^\top\mathbf{E}/n\|_\infty$]\label{lemma:boundXE} There exist positive constants $c_i~(i=0,1,2)$ such that for sample size $n\gtrsim \log (p_2q)$, with probability at least $1-c_1\exp(-c_2\log (p_2q))$, the following bound holds:
		\begin{equation}\label{eqn:boundXE}
		\|\mathbf{X}^\top\mathbf{E}/n\|_\infty \leq c_0\Big(2\pi\mathcal{M}(f_X) + \Lambda_{\max}(\Sigma_e) \Big)\sqrt{\frac{\log p_2 + \log q}{n}}.
		\end{equation}
	\end{lemma}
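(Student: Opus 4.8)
The plan is to reduce the element-wise $\ell_\infty$ bound to a maximum over the $p_2q$ scalar statistics $T_{jk}:=\tfrac1n\sum_{t=1}^n X_{tj}E_{tk}$ and to control each of them by exploiting the independence of $\mathbf{E}$ from $\mathbf{X}$ together with the concentration machinery for stable Gaussian processes. Since $e_t$ is independent of the predictors, $\mathbf{E}$ (whose rows are i.i.d.\ $\mathcal{N}(0,\Sigma_e)$) is independent of $\mathbf{X}$; conditioning on $\mathbf{X}$, each $T_{jk}$ is mean-zero Gaussian with conditional variance $(\Sigma_e)_{kk}\,\|\mathbf{X}_{\cdot j}\|_2^2/n^2\le \tfrac1n\Lambda_{\max}(\Sigma_e)\cdot\big(\tfrac1n\|\mathbf{X}_{\cdot j}\|_2^2\big)$. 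A Gaussian tail bound and a union bound over the $p_2q$ pairs $(j,k)$ (both valid conditionally on $\mathbf{X}$, hence unconditionally after taking expectations) give, for any $s>0$,
\[
\max_{j,k}|T_{jk}| \;\le\; s\,\sqrt{\Lambda_{\max}(\Sigma_e)}\;\Big(\max_{j\in[p_2]}\tfrac1n\|\mathbf{X}_{\cdot j}\|_2^2\Big)^{1/2}/\sqrt{n}
\]
outside an event of probability at most $2p_2q\exp(-s^2/2)$.

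It then remains to bound $M_{\mathbf{X}}:=\max_{j\in[p_2]}\tfrac1n\|\mathbf{X}_{\cdot j}\|_2^2=\max_{j}\tfrac1n\sum_t(u_j^\top X_t)^2$, with $u_j$ the $j$-th canonical basis vector of $\R^{p_2}$; this is a $\chi^2$-type functional of the stable Gaussian $\{X_t\}$ process whose spectral density $f_X(\omega)$ is the VAR-X expression displayed above. Here I would invoke the concentration bound of \citet{basu2015estimation} for quadratic forms of stable Gaussian processes: for fixed unit $u$ and $\eta>0$, the event $\big|\tfrac1n\sum_t(u^\top X_t)^2 - u^\top\Sigma_X(0)u\big|\le 2\pi\mathcal{M}(f_X)\,\eta$ fails with probability at most $2\exp(-c\,n\min\{\eta,\eta^2\})$, and the operator-norm domination $\Lambda_{\max}(\Sigma_X(0))\le 2\pi\mathcal{M}(f_X)$ of the block-Toeplitz covariance by the spectral supremum gives $u^\top\Sigma_X(0)u\le 2\pi\mathcal{M}(f_X)$. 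Taking $u=u_j$, a union bound over $j\in[p_2]$ and a fixed choice $\eta\asymp1$ yield $M_{\mathbf{X}}\le c\cdot2\pi\mathcal{M}(f_X)$ with probability at least $1-2p_2\exp(-c'n)$, valid once $n\gtrsim\log p_2$.

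Combining the two good events, with $n\gtrsim\log(p_2q)$ so that the deviation parameters stay bounded (and the quadratic branch of $\min\{\eta,\eta^2\}$ is active) and choosing $s\asymp\sqrt{\log(p_2q)}$, one obtains
\[
\|\mathbf{X}^\top\mathbf{E}/n\|_\infty \;\lesssim\; \sqrt{\Lambda_{\max}(\Sigma_e)\cdot2\pi\mathcal{M}(f_X)}\,\sqrt{\frac{\log p_2+\log q}{n}} \;\le\; \big(2\pi\mathcal{M}(f_X)+\Lambda_{\max}(\Sigma_e)\big)\sqrt{\frac{\log p_2+\log q}{n}},
\]
the last step by AM--GM, which is exactly~\eqref{eqn:boundXE} after absorbing absolute constants into $c_0$; the failure probability of the intersection is $\le c_1\exp(-c_2\log(p_2q))$ once $s$ is taken large enough and $n\gtrsim\log(p_2q)$. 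Equivalently, one may bypass the conditioning step and apply directly the Basu--Michailidis cross-product bound for $\tfrac1n\sum_t(u^\top X_t)(v^\top e_t)$, noting that $e_t$ has the flat spectral density $f_e(\omega)\equiv\Sigma_e/2\pi$ with $2\pi\mathcal{M}(f_e)=\Lambda_{\max}(\Sigma_e)$ and that the cross-spectrum $f_{X,e}$ vanishes by independence; or use the polarization identity $X_{tj}E_{tk}=\tfrac14\big[(X_{tj}+E_{tk})^2-(X_{tj}-E_{tk})^2\big]$ together with a Hanson--Wright bound for dependent Gaussians and a union bound.

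The main obstacle is precisely the control of $M_{\mathbf{X}}$: because $\{X_t\}$ is temporally dependent one cannot treat $\tfrac1n\|\mathbf{X}_{\cdot j}\|_2^2$ as a scaled $\chi^2_n$ variable, and the appropriate surrogate for the ``variance'' is the spectral extreme $2\pi\mathcal{M}(f_X)$ rather than the marginal variance $(\Sigma_X(0))_{jj}$. This is exactly why the VAR-X spectral formula for $f_X$ and the Basu--Michailidis concentration tools for stable Gaussian processes are needed; the remaining steps --- Gaussian tails, AM--GM, and the union bounds over the $p_2q$ entries and the $p_2$ coordinates --- are routine.
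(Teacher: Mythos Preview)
Your proposal is correct. Your primary route---conditioning on $\mathbf{X}$, using a Gaussian tail for each $T_{jk}$ conditional on the design, and then separately controlling $M_{\mathbf{X}}=\max_j\tfrac1n\|\mathbf{X}_{\cdot j}\|_2^2$ via the quadratic-form concentration of \citet{basu2015estimation}---is genuinely different from what the paper does. The paper applies \citet[Proposition 2.4(b)]{basu2015estimation} directly to each entry $e_i^\top(\mathbf{X}^\top\mathbf{E}/n)e_j$, using that $e_t$ is independent white noise so the cross-spectrum term vanishes and $2\pi\mathcal{M}(f_e)=\Lambda_{\max}(\Sigma_e)$, then takes a union bound and sets $\eta\asymp\sqrt{\log(p_2q)/n}$. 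This is precisely the alternative you sketch in your penultimate paragraph, so you have in fact identified the paper's argument as well.

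The trade-offs: the paper's route is a one-shot application of an existing cross-product bound and requires no intermediate event for $M_{\mathbf{X}}$. Your conditioning route is slightly more elementary (it needs only scalar Gaussian tails plus the diagonal case of Proposition~2.4(a), not the full cross-product machinery) and in fact delivers the sharper geometric-mean constant $\sqrt{2\pi\mathcal{M}(f_X)\,\Lambda_{\max}(\Sigma_e)}$ before you relax it by AM--GM to match the stated form; on the other hand it introduces a second good event and a second union bound over $j\in[p_2]$. Both arrive at the same rate and probability guarantee.
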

	
	\begin{lemma}[High probability bound for $\Lambda_{\max}(S_{\mathbf{E}})$]\label{lemma:Emax} Consider $\mathbf{E}\in\R^{n\times q}$ whose rows are independent realizations of the mean zero Gaussian random vector $e_t$ with covariance $\Sigma_e$. Then, for sample size $n\gtrsim q$, with probability at least $1-\exp(-n/2)$, the following bound holds:
		\begin{equation*}
		\Lambda_{\max}(S_{\mathbf{E}}) \leq 9\Lambda_{\max}(\Sigma_e). 
		\end{equation*}
	\end{lemma}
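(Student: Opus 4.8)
The plan is to reduce the bound on $\Lambda_{\max}(S_{\mathbf{E}})$ to a bound on the largest singular value of a standard Gaussian matrix, and then invoke a standard non-asymptotic concentration inequality. First I would whiten the rows: since the rows of $\mathbf{E}$ are i.i.d.\ $\mathcal{N}(0,\Sigma_e)$, write $\mathbf{E} = \mathbf{Z}\,\Sigma_e^{1/2}$, where $\mathbf{Z}\in\R^{n\times q}$ has i.i.d.\ $\mathcal{N}(0,1)$ entries. Then
\[
S_{\mathbf{E}} = \tfrac{1}{n}\mathbf{E}^\top\mathbf{E} = \tfrac{1}{n}\Sigma_e^{1/2}\mathbf{Z}^\top\mathbf{Z}\,\Sigma_e^{1/2},
\]
so $\Lambda_{\max}(S_{\mathbf{E}}) \le \Lambda_{\max}(\Sigma_e)\cdot\tfrac{1}{n}\Lambda_{\max}(\mathbf{Z}^\top\mathbf{Z}) = \Lambda_{\max}(\Sigma_e)\cdot\tfrac{1}{n}\smalliii{\mathbf{Z}}_{\op}^2$. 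Hence it suffices to establish that $\smalliii{\mathbf{Z}}_{\op}\le 3\sqrt{n}$ holds with probability at least $1-\exp(-n/2)$.

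Second, I would apply the Gaussian concentration bound for the operator norm of $\mathbf{Z}$. Since $\smalliii{\cdot}_{\op}$ is a $1$-Lipschitz function on $\R^{n\times q}$ equipped with the Frobenius norm, and since Gordon's comparison inequality gives $\mathbb{E}\,\smalliii{\mathbf{Z}}_{\op}\le\sqrt{n}+\sqrt{q}$, the concentration of Lipschitz functions of Gaussian vectors yields
\[
\PP\big(\smalliii{\mathbf{Z}}_{\op}\ge\sqrt{n}+\sqrt{q}+t\big)\le\exp(-t^2/2),\qquad t\ge 0.
\]
Invoking the hypothesis $n\gtrsim q$ (concretely, $n\ge q$, so that $\sqrt{q}\le\sqrt{n}$) and taking $t=\sqrt{n}$, the event $\{\smalliii{\mathbf{Z}}_{\op}\le\sqrt{n}+\sqrt{q}+\sqrt{n}\}\subseteq\{\smalliii{\mathbf{Z}}_{\op}\le 3\sqrt{n}\}$ occurs with probability at least $1-\exp(-n/2)$. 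On this event, combining with the first step gives $\Lambda_{\max}(S_{\mathbf{E}})\le\Lambda_{\max}(\Sigma_e)\cdot\tfrac{1}{n}(3\sqrt{n})^2 = 9\Lambda_{\max}(\Sigma_e)$, which is the claim.

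There is no genuine obstacle in this argument; the only points that need care are (i) making the reduction to i.i.d.\ standard Gaussian entries explicit through the symmetric square root of $\Sigma_e$, and (ii) quoting the singular-value concentration inequality with constants sharp enough that the numerical factor $9$ and the probability $1-\exp(-n/2)$ line up --- this is precisely where the sample-size condition $n\gtrsim q$ is used, since it lets $\sqrt{q}$ be absorbed into $\sqrt{n}$. Because the rows $e_t$ are genuinely independent (the idiosyncratic noise carries no temporal dependence), none of the spectral-density machinery invoked elsewhere in the paper is needed for this lemma.
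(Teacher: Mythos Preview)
Your proof is correct and is essentially the same approach as the paper's: the paper invokes \citet[Lemma~9]{wainwright2009sharp} as a black box to bound $\smalliii{S_{\mathbf{E}}-\Sigma_e}_{\op}$ and then applies the triangle inequality with $\eta=1$, whereas you inline the very argument (whitening plus Gordon's inequality plus Gaussian Lipschitz concentration) that underlies that lemma. The only cosmetic difference is that you bound $\smalliii{\mathbf{Z}}_{\op}$ directly rather than $\smalliii{S_{\mathbf{E}}-\Sigma_e}_{\op}$, which in fact yields the one-sided probability $1-\exp(-n/2)$ stated in the lemma rather than the $1-2\exp(-n/2)$ appearing in the paper's proof.
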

	
	\begin{lemma}[High probability bound for $\Lambda_{\max}(S_\mathbf{X})$]\label{lemma:Smax} Consider $\mathbf{X}\in\R^{n\times p_2}$ whose rows correspond to a random realization $\{x_1,\dots,x_{n}\}$ of the stable Gaussian $\{X_t\}$ process, and its dynamics are governed by~\eqref{md:FAVAR-VAR1}. There exist positive constants $c_i>0, i=0,1,2$, such that  for sample size $n\gtrsim p_2$, with probability at least $1-c_1\exp(-c_2n)$, the following bound holds:
		\begin{equation*}
		\Lambda_{\max}(S_{\mathbf{X}})\leq c_0\mathcal{M}(f_X).
		\end{equation*}
	\end{lemma}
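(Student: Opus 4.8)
The plan is to control $\Lambda_{\max}(S_{\mathbf{X}})=\Lambda_{\max}(\tfrac1n\mathbf{X}^\top\mathbf{X})$ through its variational characterization combined with a covering argument over the sphere, the essential input being the spectral-density control for temporally dependent Gaussian data from \citet{basu2015estimation}. First I would fix a unit vector $v\in\R^{p_2}$ and observe that $\mathbf{X}v=(v^\top x_1,\dots,v^\top x_n)^\top$ is a mean-zero Gaussian vector in $\R^n$ whose covariance $\Sigma_{\mathbf{X}v}$ is the Toeplitz-type matrix with entries $[\Sigma_{\mathbf{X}v}]_{st}=v^\top\Sigma_X(s-t)v$, i.e.\ the Toeplitz matrix generated by the scalar symbol $\omega\mapsto v^\top f_X(\omega)v$. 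Since $f_X(\omega)\succeq 0$ with $\Lambda_{\max}(f_X(\omega))\le\mathcal{M}(f_X)$ a.e., the symbol is bounded by $\mathcal{M}(f_X)$, and the standard operator-norm bound for Toeplitz matrices (equivalently \citet[Proposition~2.3]{basu2015estimation}) gives $\Lambda_{\max}(\Sigma_{\mathbf{X}v})\le 2\pi\mathcal{M}(f_X)$ uniformly over $\|v\|=1$. This is the single place where stationarity and stability of $\{X_t\}$ enter, and where the temporal dependence is absorbed into $\mathcal{M}(f_X)$.

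Next, for fixed $v$ I would diagonalize and write $\tfrac1n\smalliii{\mathbf{X}v}_{\F}^2=\tfrac1n\sum_{i=1}^n\mu_i g_i^2$ with $g_i$ i.i.d.\ standard normal and $\{\mu_i\}$ the eigenvalues of $\Sigma_{\mathbf{X}v}$, so that $\max_i\mu_i\le 2\pi\mathcal{M}(f_X)$ and $\tfrac1n\sum_i\mu_i=\tfrac1n\mathrm{tr}(\Sigma_{\mathbf{X}v})\le 2\pi\mathcal{M}(f_X)$. A one-sided tail inequality for Gaussian quadratic forms (Laurent--Massart / Hanson--Wright) then yields, for every $\delta>0$ and an absolute constant $c$,
\begin{equation*}
\PP\Big(\tfrac1n\smalliii{\mathbf{X}v}_{\F}^2\ge 2\pi\mathcal{M}(f_X)(1+\delta)\Big)\le \exp\!\big(-c\,n\min\{\delta^2,\delta\}\big).
\end{equation*}

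Then I would discretize: take a $1/4$-net $\mathcal{N}$ of the unit sphere $S^{p_2-1}$ with $|\mathcal{N}|\le 9^{p_2}$, and use the standard bound $\Lambda_{\max}(S_{\mathbf{X}})\le 2\max_{v\in\mathcal{N}}v^\top S_{\mathbf{X}}v$ (valid since $S_{\mathbf{X}}\succeq 0$). Fixing $\delta=\delta_0$ a suitable absolute constant and union-bounding the display over $\mathcal{N}$, the failure probability is at most $9^{p_2}\exp(-c\,n\min\{\delta_0^2,\delta_0\})=\exp(p_2\log 9-c'n)$, which is $\le c_1\exp(-c_2 n)$ precisely when $n\ge C p_2$ for a large enough absolute constant $C$ — this is where the hypothesis $n\gtrsim p_2$ is used. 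On the complementary event one gets $\Lambda_{\max}(S_{\mathbf{X}})\le 2\cdot 2\pi\mathcal{M}(f_X)(1+\delta_0)=:c_0\mathcal{M}(f_X)$, which is the claim.

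I expect the only genuine content to be the first step, namely the uniform estimate $\Lambda_{\max}(\Sigma_{\mathbf{X}v})\le 2\pi\mathcal{M}(f_X)$ for the covariance of the linearly projected stationary process; this is not self-evident and relies on the spectral machinery of \citet{basu2015estimation}. Once it is in hand the Gaussian quadratic-form tail and the net argument are routine, with the mild bookkeeping point being that the net cardinality $9^{p_2}$ must be absorbed by the exponent, forcing $n\gtrsim p_2$. A shorter alternative is to invoke directly the concentration inequality of \citet{basu2015estimation} for $|v_1^\top(S_{\mathbf{X}}-\Sigma_X)v_2|$ and combine it with nets on both arguments, together with $\Lambda_{\max}(\Sigma_X)\le 2\pi\mathcal{M}(f_X)$; and the sub-Gaussian/sub-exponential analogues mentioned in the introduction simply replace the Gaussian quadratic-form step by the corresponding concentration bound.
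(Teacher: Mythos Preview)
Your proposal is correct and follows essentially the same approach as the paper: a variational characterization of $\Lambda_{\max}(S_{\mathbf X})$, a pointwise concentration bound for $v^\top S_{\mathbf X}v$ controlled by $\mathcal M(f_X)$, and a $1/4$-net union bound absorbing the $\exp(c\,p_2)$ cardinality via $n\gtrsim p_2$. In fact the paper's proof is precisely the ``shorter alternative'' you sketch at the end---it invokes \citet[Proposition~2.4]{basu2015estimation} directly for the deviation $|u^\top(S_{\mathbf X}-\Sigma_X(0))u|$ and then adds back $\smalliii{\Sigma_X(0)}_{\op}\le 2\pi\mathcal M(f_X)$, rather than re-deriving the Gaussian quadratic-form tail via diagonalization.
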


	In the next two lemmas, we verify the RSC condition for random realizations of $\widehat{\mathbf{Z}}_{n-1}$ and obtain the high probability bound $C(n,p_1,p_2)$ for $\|\widehat{\mathbf{Z}}_{n-1}^\top\big( \widehat{\mathbf{Z}}_n - \widehat{\mathbf{Z}}_{n-1}(A^\star)^\top \big)/n\|_\infty$, with the underlying truth $\mathbf{F}$ being random but the error $\Delta_{\mathbf{F}}$ non-random. Note that this can be equivalently viewed as a {\em conditional} RSC condition and deviation bound, when conditioning on some fixed $\Delta_{\mathbf{F}}$. 
	%% LEMMAS for VAR equation, random realizations, conditional on DeltaF
	\begin{lemma}[Verification of RSC for $\widehat{\mathbf{Z}}_{n-1}$]\label{lemma:RSCZ} Consider $\widehat{\mathbf{Z}}_{n-1}$ given by 
		\begin{equation*}
		\widehat{\mathbf{Z}}_{n-1} = \mathbf{Z}_{n-1} + \Delta_{\mathbf{Z}_{n-1}} = [\mathbf{F}_{n-1},\mathbf{X}_{n-1}] + [\mathbf{\Delta}_{\mathbf{F}_{n-1}},O],
		\end{equation*}
		with rows of $[\mathbf{F}_{n-1},\mathbf{X}_{n-1}]$ being a random realization drawn from process $\{Z_t\}$ whose dynamics are given by~\eqref{md:FAVAR-VAR1}. Suppose the lower and upper extremes of its spectral density $f_Z(\omega)$ satisfy
		\begin{equation*}\label{eqn:fZcondition}
		\mathfrak{m}(f_Z)/\mathcal{M}^{1/2}(f_Z) > c_0\cdot \Lambda^{1/2}_{\max}\big(S_{\Delta_{\mathbf{F}_{n-1}}}\big),~~~\text{where}~~S_{\Delta_{\mathbf{F}_{n-1}}}:=\Delta_{\mathbf{F}_{n-1}}^\top\Delta_{\mathbf{F}_{n-1}}/n,
		\end{equation*}
		for some constant $c_0\geq 6$. Then, with probability at least $1-c_1\exp(-c_2n)$, $\widehat{\mathbf{Z}}_{n-1}$ satisfies the RSC condition with curvature
		\begin{equation}\label{eqn:curvatureZ}
		\alpha_{\text{RSC}}^{\widehat{\mathbf{Z}}}= \pi\mathfrak{m}(f_Z) - 54\Lambda^{1/2}_{\max}\big(S_{\Delta_{\mathbf{F}_{n-1}}}\big) \sqrt{2\pi\mathcal{M}(f_Z) + \pi\mathfrak{m}(f_Z)/27},
		\end{equation}
		and tolerance 
		\begin{equation*}
		\tau_n = \Big(\frac{\pi}{2}\mathfrak{m}(f_Z) + 27\Lambda^{1/2}_{\max}\big(S_{\Delta_{\mathbf{F}_{n-1}}}\big) \sqrt{2\pi\mathcal{M}(f_Z) + \pi\mathfrak{m}(f_Z)/27}\Big)\omega^2\sqrt{\frac{\log (p_1+p_2)}{n}}, 
		\end{equation*}
		where $\omega=54\tfrac{\mathcal{M}(f_Z)}{\mathfrak{m}(f_Z)}$, provided that the sample size $n \gtrsim \log (p_1+p_2)$.
	\end{lemma}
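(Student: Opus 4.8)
The plan is to transfer the RSC property from the unobserved (Gaussian) design $\mathbf{Z}_{n-1}=[\mathbf{F}_{n-1},\mathbf{X}_{n-1}]$ to its deterministic perturbation $\widehat{\mathbf{Z}}_{n-1}=\mathbf{Z}_{n-1}+[\Delta_{\mathbf{F}_{n-1}},O]$, using that $\Delta_{\mathbf{F}_{n-1}}$ is held fixed, so that all the randomness resides in $\mathbf{Z}_{n-1}$ (this is the conditional-RSC reading flagged before the statement). For a generic $\Delta\in\R^{(p_1+p_2)\times(p_1+p_2)}$, write $\Delta_{[1:p_1]}$ for its top $p_1$ rows, so that $[\Delta_{\mathbf{F}_{n-1}},O]\Delta=\Delta_{\mathbf{F}_{n-1}}\Delta_{[1:p_1]}$, and expand
\begin{equation*}
\tfrac1n\smalliii{\widehat{\mathbf{Z}}_{n-1}\Delta}_\F^2=\tfrac1n\smalliii{\mathbf{Z}_{n-1}\Delta}_\F^2+\tfrac2n\llangle \mathbf{Z}_{n-1}\Delta,\;\Delta_{\mathbf{F}_{n-1}}\Delta_{[1:p_1]}\rrangle+\tfrac1n\smalliii{\Delta_{\mathbf{F}_{n-1}}\Delta_{[1:p_1]}}_\F^2 .
\end{equation*}
The last term is nonnegative and can be dropped; the task then reduces to a lower bound on the first term and an upper bound on the magnitude of the middle one, each uniform in $\Delta$ up to an $\|\Delta\|_1^2$ slack.

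For the first term, since $\{Z_t\}$ is a stable mean-zero Gaussian $\VAR(1)$ process, I would run the Basu--Michailidis concentration-plus-discretization argument underlying Lemma~\ref{lemma:RSCX}, now for the joint process, to obtain, on an event of probability at least $1-c_1\exp(-c_2n)$ (valid once $n\gtrsim\log(p_1+p_2)$, the spectral-ratio factor $\omega^2$ absorbed into the constants as in Lemma~\ref{lemma:RSCX}), a \emph{two-sided} restricted-eigenvalue bound valid for every $\Delta$,
\begin{equation*}
\pi\mathfrak{m}(f_Z)\smalliii{\Delta}_\F^2-\tau_0\|\Delta\|_1^2\;\leq\;\tfrac1n\smalliii{\mathbf{Z}_{n-1}\Delta}_\F^2\;\leq\;\big(2\pi\mathcal{M}(f_Z)+\pi\mathfrak{m}(f_Z)/27\big)\smalliii{\Delta}_\F^2+\tau_0\|\Delta\|_1^2,
\end{equation*}
where $\tau_0$ is a deviation tolerance of order $\omega^2\log(p_1+p_2)/n$ and the additive slack $\pi\mathfrak{m}(f_Z)/27$ is the level the concentration step delivers at the chosen deviation radius. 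The lower half is the RSC bound; the upper half (a reverse-RSC / restricted-upper-eigenvalue bound) is the extra ingredient required precisely because $n$ may sit well below $p_1+p_2$, so that the crude operator-norm control $\mathcal{M}(f_Z)\gtrsim\Lambda_{\max}(S_{\mathbf{Z}_{n-1}})$ in the spirit of Lemma~\ref{lemma:Smax} is not available here.

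For the middle term, Cauchy--Schwarz gives $\bigl|\tfrac2n\llangle \mathbf{Z}_{n-1}\Delta,\Delta_{\mathbf{F}_{n-1}}\Delta_{[1:p_1]}\rrangle\bigr|\leq\tfrac{2}{\sqrt n}\smalliii{\mathbf{Z}_{n-1}\Delta}_\F\cdot\tfrac{1}{\sqrt n}\smalliii{\Delta_{\mathbf{F}_{n-1}}\Delta_{[1:p_1]}}_\F$. The second factor is purely deterministic, $\tfrac1n\smalliii{\Delta_{\mathbf{F}_{n-1}}\Delta_{[1:p_1]}}_\F^2\leq\Lambda_{\max}(S_{\Delta_{\mathbf{F}_{n-1}}})\smalliii{\Delta_{[1:p_1]}}_\F^2\leq\Lambda_{\max}(S_{\Delta_{\mathbf{F}_{n-1}}})\smalliii{\Delta}_\F^2$; into the first factor I would feed the reverse-RSC bound from the previous step together with $\sqrt{a+b}\leq\sqrt a+\sqrt b$, giving $\tfrac1{\sqrt n}\smalliii{\mathbf{Z}_{n-1}\Delta}_\F\leq\sqrt{2\pi\mathcal{M}(f_Z)+\pi\mathfrak{m}(f_Z)/27}\,\smalliii{\Delta}_\F+\sqrt{\tau_0}\,\|\Delta\|_1$. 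Multiplying out, the middle term is bounded by a constant multiple of $\Lambda_{\max}^{1/2}(S_{\Delta_{\mathbf{F}_{n-1}}})\sqrt{2\pi\mathcal{M}(f_Z)+\pi\mathfrak{m}(f_Z)/27}\,\smalliii{\Delta}_\F^2$ plus a mixed term proportional to $\Lambda_{\max}^{1/2}(S_{\Delta_{\mathbf{F}_{n-1}}})\sqrt{\tau_0}\,\smalliii{\Delta}_\F\|\Delta\|_1$; splitting the latter via $2ab\leq\epsilon a^2+\epsilon^{-1}b^2$ contributes only an $O(\sqrt{\log(p_1+p_2)/n})$ amount to the curvature and pushes the rest into the tolerance. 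Subtracting the middle-term bound from the lower RSC bound of the previous step produces the stated curvature $\alpha_{\RSC}^{\widehat{\mathbf{Z}}}=\pi\mathfrak{m}(f_Z)-54\,\Lambda_{\max}^{1/2}(S_{\Delta_{\mathbf{F}_{n-1}}})\sqrt{2\pi\mathcal{M}(f_Z)+\pi\mathfrak{m}(f_Z)/27}$ (the $54$ being the net/packing constant that also defines $\omega$), which is strictly positive exactly under the hypothesis $\mathfrak{m}(f_Z)/\mathcal{M}^{1/2}(f_Z)>c_0\Lambda_{\max}^{1/2}(S_{\Delta_{\mathbf{F}_{n-1}}})$; all accumulated $\|\Delta\|_1^2$ coefficients assemble into the claimed $\tau_n$, whose $\sqrt{\log(p_1+p_2)/n}$ scaling reflects the square root taken when the reverse-RSC bound enters the cross-term estimate, and the probability and sample-size conditions are inherited from the first step.

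The main obstacle is that first step: establishing the two-sided restricted-eigenvalue inequalities (with $\pm\|\Delta\|_1^2$ slack, valid for all $\Delta$, not merely sparse ones) for the \emph{temporally dependent} Gaussian design $\mathbf{Z}_{n-1}$ with the correct constants and under only $n\gtrsim\log(p_1+p_2)$ --- this is where the spectral-density accounting for $f_Z$ lives and where $27$, $54$ and $\omega$ get pinned down. The remaining steps are careful bookkeeping; the one genuinely delicate point there is to make sure that when the cross term's $\smalliii{\Delta}_\F\|\Delta\|_1$ piece is redistributed, the curvature is left unharmed at leading order, which is precisely why the hypothesis is a strict spectral inequality with a margin ($c_0\geq6$) rather than an equality.
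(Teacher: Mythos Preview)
Your approach is correct and uses the same basic ingredients as the paper (Basu--Michailidis concentration for $S_{\mathbf{Z}}-\Sigma_Z(0)$, Cauchy--Schwarz on the cross term, the Loh--Wainwright sparse-to-full extension lemma), but you assemble them in a different order. You first extend the concentration to all $\Delta$ to obtain a two-sided RSC/reverse-RSC bound for $\mathbf{Z}_{n-1}$, and only then handle the cross term in the full space; this is what forces you to carry a $\sqrt{\tau_0}\,\|\Delta\|_1$ piece through the Cauchy--Schwarz step and then redistribute a mixed $\smalliii{\Delta}_\F\|\Delta\|_1$ term via AM--GM. The paper instead stays on the $2s$-sparse cone $\mathbb{K}(2s)$ throughout: it bounds both $|\theta^\top(S_{\mathbf{Z}}-\Sigma_Z(0))\theta|$ and the cross term $|\tfrac1n\langle\mathbf{Z}_{n-1}\theta,\Delta_{\mathbf{Z}_{n-1}}\theta\rangle|$ uniformly over $\theta\in\mathbb{K}(2s)$ (where $\|\theta\|\leq1$, so no $\|\theta\|_1$ slack appears at all), adds them into a single deviation level $\delta$, and then applies the Loh--Wainwright lemma \emph{once} to the combined matrix $\Psi-\Sigma_Z(0)$. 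This ordering delivers the curvature $2\pi\mathfrak{m}(f_Z)-27\delta$ and tolerance $27\delta/(2s)$ directly, with no mixed term to split, and the final choice of $s$ in terms of $n,\omega,\log(p_1+p_2)$ fixes the constants. Your route reaches the same destination but with extra bookkeeping; the paper's ordering is what makes the constants $27$ and $54$ and the precise form of $\tau_n$ fall out cleanly, and in particular your explanation of the $\sqrt{\log(p_1+p_2)/n}$ scaling as coming from ``the square root taken when the reverse-RSC bound enters the cross-term estimate'' does not match the paper's mechanism, where the tolerance arises purely from $27\delta/(2s)$ after the sparsity parameter $s$ is set.
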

	
	\begin{lemma}[Deviation bound for $\|\widehat{\mathbf{Z}}_{n-1}^\top\big( \widehat{\mathbf{Z}}_n - \widehat{\mathbf{Z}}_{n-1}(A^\star)^\top \big)/n\|_\infty$]\label{lemma:deviation-A} There exist positive constants $c_i~(i=1,2)$ and $C_i~(i=1,2,3)$ such that with probability at least $1-c_1\exp\big(-c_2\log (p_1+p_2)\big)$ we have
		\begin{align}
		C(n,p_1,p_2) \leq  &~C_1\big[\mathcal{M}(f_Z) + \frac{\Lambda_{\max}(\Sigma_w)}{2\pi} + \mathcal{M}(f_{Z,W^+})\big]\sqrt{\frac{\log(p_1+p_2)}{n}} \notag \\
		&+ ~C_2\big[\mathcal{M}^{1/2}(f_Z)\max\limits_{j\in\{1,\dots,p_1\}}\|\Delta_{\mathbf{F}_{n,\cdot j}}/\sqrt{n}\|\big]\sqrt{\frac{\log p_1 + \log(p_1+p_2)}{n}}  \notag \\
		& + ~ C_3\big[ \Lambda^{1/2}_{\max}(\Sigma_w)\max\limits_{j\in\{1,\dots,(p_1+p_2)\}}\|\varepsilon_{n,\cdot j}/\sqrt{n}\| \big]\sqrt{\frac{\log(p_1+p_2)}{n}} \label{eqn:deviationCnp} \\
		&+ ~ \frac{1}{n}\|\Delta_{\mathbf{F}_{n-1}}^\top \Delta_{\mathbf{F}_{n}}\|_\infty + \frac{1}{n}\|\Delta_{\mathbf{F}_{n-1}}^\top \Delta_{\mathbf{F}_{n-1}}(A_{11}^\star)^\top\|_\infty, \notag 
		\end{align}
		where $\varepsilon_n := \Delta_{\mathbf{Z}_{n}} - \Delta_{\mathbf{Z}_{n-1}}(A^\star)^\top = [\Delta_{\mathbf{F}_n}-\Delta_{\mathbf{F}_{n-1}}(A_{11}^\star)^\top, -\Delta_{\mathbf{F}_{n-1}}(A_{21}^\star)^\top ]$, and $\{W^+_t\}:=\{W_{t+1}\}$ is the shifted $W_t$ process.  
	\end{lemma}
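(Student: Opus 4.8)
The plan is to split the target quantity into a \emph{noise-only} block that behaves exactly like the deviation term of a standard stable Gaussian VAR, a pair of \emph{mixed} blocks that are linear in a single Gaussian source once we condition on the (fixed) factor error $\Delta_{\mathbf{F}}$, and a purely \emph{deterministic} block; each is controlled separately and the pieces are reassembled by a union bound.

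\emph{Step 1 (decomposition).} Using the true recursion $\mathbf{Z}_n=\mathbf{Z}_{n-1}(A^\star)^\top+\mathbf{W}$ together with $\widehat{\mathbf{Z}}_n=\mathbf{Z}_n+\Delta_{\mathbf{Z}_n}$ and $\widehat{\mathbf{Z}}_{n-1}=\mathbf{Z}_{n-1}+\Delta_{\mathbf{Z}_{n-1}}$, one obtains
\[
\widehat{\mathbf{Z}}_n-\widehat{\mathbf{Z}}_{n-1}(A^\star)^\top=\mathbf{W}+\varepsilon_n ,
\]
with $\varepsilon_n$ as in the statement. Substituting $\widehat{\mathbf{Z}}_{n-1}=\mathbf{Z}_{n-1}+\Delta_{\mathbf{Z}_{n-1}}$ and expanding $\tfrac{1}{n}\widehat{\mathbf{Z}}_{n-1}^\top(\mathbf{W}+\varepsilon_n)$ yields four blocks: the noise-only block $\tfrac{1}{n}\mathbf{Z}_{n-1}^\top\mathbf{W}$, the two mixed blocks $\tfrac{1}{n}\mathbf{Z}_{n-1}^\top\varepsilon_n$ and $\tfrac{1}{n}\Delta_{\mathbf{Z}_{n-1}}^\top\mathbf{W}$, and the deterministic block $\tfrac{1}{n}\Delta_{\mathbf{Z}_{n-1}}^\top\varepsilon_n$. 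By the conditioning, $\Delta_{\mathbf{Z}_{n-1}}=[\Delta_{\mathbf{F}_{n-1}},O]$, $\Delta_{\mathbf{Z}_n}=[\Delta_{\mathbf{F}_n},O]$ and hence $\varepsilon_n$ are non-random, and the perturbation is supported on the $p_1$ factor coordinates --- the latter is exactly what yields the $\log p_1$ (rather than $\log(p_1+p_2)$) savings in the union bounds below. I would then bound $\|\cdot\|_\infty$ of each block in turn.

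\emph{Step 2 (noise-only block).} Viewing $(Z_t,W^+_t)$ with $W^+_t:=W_{t+1}$ as a jointly stationary mean-zero Gaussian sequence, $\tfrac{1}{n}\mathbf{Z}_{n-1}^\top\mathbf{W}$ is precisely the cross-product of a stable process with its own (shifted) innovations. Invoking the deviation bound for stable Gaussian processes of \citet{basu2015estimation} for this pair gives, with probability at least $1-c_1\exp(-c_2\log(p_1+p_2))$ and for $n\gtrsim\log(p_1+p_2)$,
\[
\|\tfrac{1}{n}\mathbf{Z}_{n-1}^\top\mathbf{W}\|_\infty\lesssim\Big[\mathcal{M}(f_Z)+\tfrac{\Lambda_{\max}(\Sigma_w)}{2\pi}+\mathcal{M}(f_{Z,W^+})\Big]\sqrt{\tfrac{\log(p_1+p_2)}{n}} ,
\]
which is the first line of~\eqref{eqn:deviationCnp}; the cross-spectrum term $\mathcal{M}(f_{Z,W^+})$ is unavoidable because $\mathbf{Z}_{n-1}$ and $\mathbf{W}$ are dependent (the time-$(t{+}1)$ innovation drives $z_{t+1}$).

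\emph{Step 3 (mixed and deterministic blocks).} Condition on $\Delta_{\mathbf{F}}$. For $\tfrac{1}{n}\mathbf{Z}_{n-1}^\top\varepsilon_n$, and in particular its subpiece $\tfrac{1}{n}\mathbf{Z}_{n-1}^\top\Delta_{\mathbf{Z}_n}$, each entry equals $\tfrac{1}{n}\langle\mathbf{Z}_{n-1,\cdot i},v_j\rangle$ with $v_j$ a fixed column; this is a mean-zero Gaussian with variance at most $\tfrac{1}{n^2}\|v_j\|^2\Lambda_{\max}\big(\mathrm{Cov}(\mathbf{Z}_{n-1,\cdot i})\big)\le\tfrac{2\pi\mathcal{M}(f_Z)}{n}\|v_j/\sqrt{n}\|^2$, using that the covariance matrix of $n$ consecutive samples of a stationary process has operator norm at most $2\pi\mathcal{M}(f_Z)$. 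A Gaussian tail bound and a union bound over the relevant grid --- of size $(p_1+p_2)p_1$ for the $\Delta_{\mathbf{Z}_n}$ subpiece, which explains the $\sqrt{\log p_1+\log(p_1+p_2)}$ --- deliver the term with prefactor $\mathcal{M}^{1/2}(f_Z)\max_{j\le p_1}\|\Delta_{\mathbf{F}_{n,\cdot j}}/\sqrt{n}\|$. For $\tfrac{1}{n}\Delta_{\mathbf{Z}_{n-1}}^\top\mathbf{W}$, since $\Delta_{\mathbf{Z}_{n-1}}$ is fixed and the rows of $\mathbf{W}$ are i.i.d.\ $\mathcal{N}(0,\Sigma_w)$, each nonzero entry is a weighted sum of independent Gaussians with variance at most $\tfrac{\Lambda_{\max}(\Sigma_w)}{n}\|\cdot/\sqrt{n}\|^2$; the same tail-plus-union argument produces the term with prefactor $\Lambda_{\max}^{1/2}(\Sigma_w)$ (after, if desired, dominating the column norms involved by $\max_j\|\varepsilon_{n,\cdot j}/\sqrt{n}\|$). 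Finally, $\tfrac{1}{n}\Delta_{\mathbf{Z}_{n-1}}^\top\varepsilon_n$ is deterministic: by the block structure it has all its mass in a $p_1\times(p_1+p_2)$ block equal to $\tfrac{1}{n}\Delta_{\mathbf{F}_{n-1}}^\top\Delta_{\mathbf{F}_n}-\tfrac{1}{n}\Delta_{\mathbf{F}_{n-1}}^\top\Delta_{\mathbf{F}_{n-1}}(A_{11}^\star)^\top$ up to an $(A_{21}^\star)^\top$ column block, so $\|\cdot\|_\infty$ is bounded by the last two displayed terms of~\eqref{eqn:deviationCnp}; these are kept explicit and will be controlled later once the Stage-I bounds on $\Delta_{\mathbf{F}}$ are available.

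\emph{Step 4 (assembly).} Intersecting the high-probability events of Steps~2--3, absorbing numerical constants into $C_1,C_2,C_3$, and using $n\gtrsim\log(p_1+p_2)$ so that every concentration statement is non-vacuous, the union over a fixed number of events preserves a probability of the form $1-c_1\exp(-c_2\log(p_1+p_2))$. \textbf{Main obstacle.} The delicate piece is the noise-only block: since $\mathbf{Z}_{n-1}$ and $\mathbf{W}$ are genuinely dependent one cannot use an i.i.d.\ concentration inequality and must route the estimate through the Gaussian-process deviation bound of \citet{basu2015estimation}, correctly identifying the shifted innovation process $W^+$ and its cross-spectrum with $Z$. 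A secondary but error-prone point is the bookkeeping in Step~3: tracking which coordinates of each mixed block vanish so as to harvest $\log p_1$ instead of $\log(p_1+p_2)$, and correctly propagating the fixed-$\Delta_{\mathbf{F}}$ conditioning through every variance computation.
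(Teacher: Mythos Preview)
Your approach is essentially the same as the paper's: the identical four-term decomposition into $\tfrac{1}{n}\mathbf{Z}_{n-1}^\top\mathbf{W}$, $\tfrac{1}{n}\Delta_{\mathbf{Z}_{n-1}}^\top\mathbf{W}$, $\tfrac{1}{n}\mathbf{Z}_{n-1}^\top\varepsilon_n$, and $\tfrac{1}{n}\Delta_{\mathbf{Z}_{n-1}}^\top\varepsilon_n$, with the Basu--Michailidis deviation bound for the first, Gaussian-tail-plus-union-bound for the two mixed terms, and a direct deterministic bound for the last.

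There is one bookkeeping slip: you have the two mixed blocks assigned to the wrong lines of the display. It is $\tfrac{1}{n}\Delta_{\mathbf{Z}_{n-1}}^\top\mathbf{W}$ that has only $p_1(p_1+p_2)$ nonzero entries (because $\Delta_{\mathbf{Z}_{n-1}}=[\Delta_{\mathbf{F}_{n-1}},O]$ has only $p_1$ nonzero columns), and this is what yields the $\sqrt{(\log p_1+\log(p_1+p_2))/n}$ rate, with prefactor $\Lambda_{\max}^{1/2}(\Sigma_w)\max_{i\le p_1}\|\Delta_{\mathbf{F}_{n-1,\cdot i}}/\sqrt{n}\|$. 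Conversely, $\varepsilon_n=[\Delta_{\mathbf{F}_n}-\Delta_{\mathbf{F}_{n-1}}(A_{11}^\star)^\top,\,-\Delta_{\mathbf{F}_{n-1}}(A_{21}^\star)^\top]$ has all $p_1+p_2$ columns potentially nonzero, so $\tfrac{1}{n}\mathbf{Z}_{n-1}^\top\varepsilon_n$ requires a union bound over $(p_1+p_2)^2$ entries and delivers $\sqrt{\log(p_1+p_2)/n}$ with prefactor $\mathcal{M}^{1/2}(f_Z)\max_{j\le p_1+p_2}\|\varepsilon_{n,\cdot j}/\sqrt{n}\|$; your ``subpiece'' argument for $\mathbf{Z}_{n-1}^\top\Delta_{\mathbf{Z}_n}$ leaves the $(A_{21}^\star)^\top$ block uncovered. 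In fact the lemma's displayed bound appears to have the prefactors of its second and third lines swapped relative to what the paper's own proof produces, which is likely what misled you; the paper's proof attaches $\Lambda_{\max}^{1/2}(\Sigma_w)$ and $\|\Delta_{\mathbf{F}}\|$ to the $\log p_1+\log(p_1+p_2)$ line and $\mathcal{M}^{1/2}(f_Z)$ and $\|\varepsilon_n\|$ to the $\log(p_1+p_2)$ line.
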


	\begin{remark}\label{rmk-1}
		Before moving to the high probability error bounds of the estimates, we discuss the conditions and the various
		quantities appearing in Lemmas~\ref{lemma:RSCZ} and~\ref{lemma:deviation-A} that determine the error bound of the estimated transition matrix and underlie the differences between the original VAR estimation problem based on primal observed quantities (the ``vanilla VAR problem" henceforth), and the present one in which one block of the variables enters the VAR system with errors. Note that the statements in the two lemmas are under the assumption that the error in the $F_t$ block 
		is pre-determined and non-random. 
		
		As previously mentioned, due to the presence of the error of the latent factor block, the corresponding regularity conditions need to be imposed and verified on quantities with the error incorporated, namely, $\widehat{\mathbf{Z}}$, instead of the original true random realizations $\mathbf{Z}$. Lemma~\ref{lemma:RSCZ} shows that with high probability, the random design matrix although exhibits error-in-variables, will still satisfy the RSC condition with some positive curvature as long as the spectrum of the process $Z_t$ has sufficient regularity relative to the magnitude of the error, with the former determined by $\mathfrak{m}(f_X)/\mathcal{M}^{1/2}(f_X)$ and the latter by $\Lambda^{1/2}_{\max}(S_{\Delta_{\mathbf{F}_{n-1}}})$. In particular, the RSC curvature is pushed toward zero compared with that in the vanilla VAR problem, due to the presence of the second term in~\eqref{eqn:curvatureZ} that would be~0 if $\Delta_{\mathbf{F}_{n-1}}=0$, i.e., there were no estimation errors. This curvature affects the constant scalar part of the ultimate high probability error bound obtained for the transition matrix. 
		
		Lemma~\ref{lemma:deviation-A} gives the deviation bound associated with the Hessian and the gradient (both random), which comprises of three components attributed to the random samples observed, the non-random error, and their interactions, respectively. Further, it is the relative order of these components that determines the error rate (as a function of model dimensions and the sample size). In particular, for the vanilla VAR problem, only the first term in~\eqref{eqn:deviationCnp} exists and yields an error rate of $\mathcal{O}(\sqrt{\log(p_1+p_2)/n})$ \citep[see also][]{basu2015estimation}. For the current setting, as it is later shown in Theorem~\ref{thm:info}, since $\smalliii{\Delta_{\mathbf{F}}/\sqrt{n}}_\F\asymp \mathcal{O}(1)$, the dominating term of the three components is the one attributed to the non-random error\footnote{with the implicit assumption that $\log(p_1+p_2)/n = o(1)$ which is satisfied for this study.} and it ultimately determines the error rate of $\widehat{A}$, which will also be $\mathcal{O}(1)$. 
	\end{remark}
	
	% >>>>>>>>>>>>>>>>>>>>>>>>>>>>>>>>>>>>>>>>>>>>>>>>>>>>>>>>>
	% HIGH PROBABILITY ERROR BOUNDS
	% >>>>>>>>>>>>>>>>>>>>>>>>>>>>>>>>>>>>>>>>>>>>>>>>>>>>>>>>>
	\subsection{High probability error bounds for the estimators}\label{sec:theory-high}
	
	Given the results in Sections~\ref{sec:theory-fix} and~\ref{sec:theory-random}, we provide next high probability error bounds for the estimates, obtained by solving the optimization problems in~\eqref{opt:solveTheta} and~\eqref{opt:solveA} based on random snapshots from the underlying processes $X_t$ and $Y_t$.
	
	Theorem~\ref{thm:info} combines the results in Proposition~\ref{prop:Theta-Gamma-fixed} and Lemmas~\ref{lemma:RSCX} to~\ref{lemma:Emax} and provides the high probability error bound of the estimates, when $\widehat{\Theta}$ and $\widehat{\Gamma}$ are estimated based on random realizations from the observable processes $X_t$ and $Y_t$, with the latter driven by both $X_t$ and the latent $F_t$.  
	\begin{theorem}[High probability error bounds for $\widehat{\Theta}$ and $\widehat{\Gamma}$]\label{thm:info} 
		Suppose we are given some randomly observed snapshots $\{x_1,\dots,x_n\}$ and $\{y_1,\dots,y_n\}$ obtained from the stable Gaussian processes $X_t$ and $Y_t$, whose dynamics are described in~\eqref{md:FAVAR-VAR1} and~\eqref{md:FAVAR-info}. Suppose the following conditions hold for some $(C_{X,l},C_{X,u})$ and $(C_{e,l},C_{e,u})$:
		\begin{itemize}
			\item[C1.] $C_{X,l}\leq \mathfrak{m}(f_X)\leq \mathcal{M}(f_X)\leq C_{X,u}$;
			\item[C2.] $C_{e,l}\leq \Lambda_{\min}(\Sigma_e)\leq \Lambda_{\max}(\Sigma_e)\leq C_{e,u}$.
		\end{itemize}
		Then, there exist universal constants $\{C_i\}$ and $\{c_i\}$ such that for sample size $n\gtrsim q$, by solving~\eqref{opt:solveTheta} with regularization parameter 
		\begin{equation}\label{eqn:gamma_n}
		\begin{split}
		\lambda_\Gamma = \max\Big\{ C_1(2\pi\mathcal{M}(f_X)+\Lambda_{\max}(\Sigma_e))\sqrt{\frac{\log(p_2q)}{n}},
		~C_2\phi/\sqrt{nq},~C_3\Lambda^{1/2}_{\max}(\Sigma_e)  \Big\},
		\end{split}
		\end{equation}
		the solution $(\widehat{\Theta},\widehat{\Gamma})$ has the following bound with probability at least $1-c_1\exp(-c_2\log(p_2q))$:
		\begin{equation}\label{eqn:err1}
		\smalliii{\Delta_{\Theta}/\sqrt{n}}_{\F}^2 +\smalliii{\Delta_{\Gamma}}_{\F}^2 \lesssim \frac{\lambda^2_\Gamma}{\mathfrak{m}(f_X)} \psi(s_{\Gamma^\star},p_1,r)=:K_1,
		\end{equation}
		for some function $\psi(\cdot)$ that depends linearly on $s_{\Gamma^\star}, p_1$ and $r$.
	\end{theorem}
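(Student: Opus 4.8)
The plan is to obtain the stated high‑probability bound by feeding the concentration results of Lemmas~\ref{lemma:RSCX}, \ref{lemma:boundXE} and~\ref{lemma:Emax} into the deterministic‑realization bound of Proposition~\ref{prop:Theta-Gamma-fixed}; the only real work is to check that, on one high‑probability event, the hypotheses of Proposition~\ref{prop:Theta-Gamma-fixed} hold for the prescribed tuning parameter~\eqref{eqn:gamma_n}, after which the conclusion follows by algebra. First I would define the ``good event'' $\mathcal{E}$ as the intersection of the events on which (a) $\mathbf{X}$ satisfies RSC with $\alpha^{\mathbf{X}}_{\RSC}=\pi\mathfrak{m}(f_X)$ and tolerance $\tau_{\mathbf{X}}=\pi\mathfrak{m}(f_X)\gamma^2(\log p_2)/(2n)$ (Lemma~\ref{lemma:RSCX}), (b) $\|\mathbf{X}^\top\mathbf{E}/n\|_\infty\le c_0\big(2\pi\mathcal{M}(f_X)+\Lambda_{\max}(\Sigma_e)\big)\sqrt{\log(p_2q)/n}$ (Lemma~\ref{lemma:boundXE}), and (c) $\Lambda_{\max}(S_{\mathbf{E}})\le 9\Lambda_{\max}(\Sigma_e)$ (Lemma~\ref{lemma:Emax}). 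Under C1--C2 the spectral quantities and $\gamma=54\mathcal{M}(g_X)/\mathfrak{m}(g_X)$ are all of constant order, so each of the three failure probabilities is at most $c_1\exp(-c_2 n)$ or $c_1\exp(-c_2\log(p_2q))$; a union bound together with $n\gtrsim q$ (so that, in the ambient scaling, $n$ also dominates $\log(p_2q)$) collapses $\PP(\mathcal{E}^{c})$ to $c_1\exp(-c_2\log(p_2q))$ after renaming constants.

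Next I would verify, on $\mathcal{E}$, the two hypotheses of Proposition~\ref{prop:Theta-Gamma-fixed}. The RSC hypothesis is immediate since $\alpha^{\mathbf{X}}_{\RSC}=\pi\mathfrak{m}(f_X)\ge\pi C_{X,l}>0$. The tolerance hypothesis $\tau_{\mathbf{X}}(p_1+r+4s_{\Gamma^\star})<\min\{\alpha^{\mathbf{X}}_{\RSC},1\}/16$ becomes, after substituting $\tau_{\mathbf{X}}$, a requirement of the form $n\gtrsim(p_1+r+s_{\Gamma^\star})\log p_2$ with constants depending only on $C_{X,l},C_{X,u}$; since $p_1,r,s_{\Gamma^\star}$ are of lower order this is subsumed by $n\gtrsim q$, and I would simply carry it along (or, for full rigour, strengthen the sample‑size requirement to $n\gtrsim\max\{q,(p_1+r+s_{\Gamma^\star})\log p_2\}$). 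For the tuning parameter I would check that the $\lambda_\Gamma$ of~\eqref{eqn:gamma_n} dominates, on $\mathcal{E}$, each of the three quantities in the lower bound of Proposition~\ref{prop:Theta-Gamma-fixed}: the first argument of the max dominates $2\|\mathbf{X}^\top\mathbf{E}/n\|_\infty$ by (b) once $C_1\ge 2c_0$; the second equals $C_2\phi/\sqrt{nq}\ge 4\phi/\sqrt{nq}$ with $C_2\ge 4$, where feasibility of $\Theta^\star$, i.e.\ $\varphi_{\mathcal{R}}(\Theta^\star)\le\phi$, is the standing (Compactness) assumption; and the third dominates $\Lambda^{1/2}_{\max}(S_{\mathbf{E}})\le 3\Lambda^{1/2}_{\max}(\Sigma_e)$ by (c) with $C_3\ge 3$. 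Hence Proposition~\ref{prop:Theta-Gamma-fixed} applies on $\mathcal{E}$.

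This gives, on $\mathcal{E}$, $\smalliii{\Delta_\Gamma}_{\F}^2+\smalliii{\Delta_{\Theta}/\sqrt{n}}_{\F}^2\le 16\lambda_\Gamma^2(p_1+r+4s_{\Gamma^\star})/\min\{\pi\mathfrak{m}(f_X),1\}^2$, and the last step is to rewrite the right‑hand side as $K_1=\lambda_\Gamma^2\psi(s_{\Gamma^\star},p_1,r)/\mathfrak{m}(f_X)$. Using C1 one has $\min\{\pi\mathfrak{m}(f_X),1\}\ge\min\{\pi,1/C_{X,u}\}\,\mathfrak{m}(f_X)$ (check the two cases $\pi\mathfrak{m}(f_X)\lessgtr 1$), hence $\min\{\pi\mathfrak{m}(f_X),1\}^2\ge\min\{\pi,1/C_{X,u}\}^2 C_{X,l}\,\mathfrak{m}(f_X)$; dividing through, the bound is $\lesssim\lambda_\Gamma^2(p_1+r+4s_{\Gamma^\star})/\mathfrak{m}(f_X)$, i.e.\ $K_1$ with $\psi(s_{\Gamma^\star},p_1,r)=p_1+r+4s_{\Gamma^\star}$, which is linear in its arguments as claimed, and this holds on $\mathcal{E}$, an event of probability at least $1-c_1\exp(-c_2\log(p_2q))$. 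I expect the main obstacle to be bookkeeping rather than depth: pinning down the universal constants $C_1,C_2,C_3$ (and the $\{c_i\}$) so that they depend on none of $n,p_1,p_2,q,r,s_{\Gamma^\star}$, and cleanly merging the several sample‑size thresholds ($n\gtrsim q$ from Lemma~\ref{lemma:Emax}, $n\gtrsim\log(p_2q)$ from Lemma~\ref{lemma:boundXE}, $n\gtrsim\log p_2$ from Lemma~\ref{lemma:RSCX}, $n\gtrsim(p_1+r+s_{\Gamma^\star})\log p_2$ from the tolerance condition) and the several exponential tails into the single clean statement.
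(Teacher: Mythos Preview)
Your proposal is correct and follows essentially the same approach as the paper: combine the deterministic bound of Proposition~\ref{prop:Theta-Gamma-fixed} with the high-probability events of Lemmas~\ref{lemma:RSCX}, \ref{lemma:boundXE}, and~\ref{lemma:Emax}, then use a union bound. Your write-up is in fact considerably more detailed than the paper's own proof sketch, which merely notes that the sample-size requirement $n\gtrsim q$ forces $\lambda_\Gamma\asymp\mathcal{O}(1)$ and then says ``the conclusion readily follows as a result of Proposition~\ref{prop:Theta-Gamma-fixed}''; your explicit verification of the tolerance condition, the three-way check on $\lambda_\Gamma$, and the algebra converting $\min\{\pi\mathfrak{m}(f_X),1\}^2$ into a bound involving $\mathfrak{m}(f_X)$ in the denominator are all steps the paper leaves implicit.
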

	
	Note that the above bound also holds if we replace $\Delta_{\Theta}$ by $\Delta_{\mathbf{F}}$ under (IR). Next, using the results in Proposition~\ref{prop:A-fixed}, Lemmas~\ref{lemma:RSCZ} and~\ref{lemma:deviation-A} and combine the bound in Theorem~\ref{thm:info}, we establish a high probability error bound for the estimated $\widehat{A}$ in Theorem~\ref{thm:A}.
	\begin{theorem}[High probability error bound for $\widehat{A}$]\label{thm:A} Under the settings and with the procedures in Theorem~\ref{thm:info}, we additionally assume the following condition holds for the spectrum of the joint process $Z_t$:
		\begin{itemize}
			\item[C3.] $\mathfrak{m}(f_Z)/\mathcal{M}^{1/2}(f_Z) > C_Z$ for some constant $C_Z$.
		\end{itemize}
		Then there exists universal constants $\{c_i\}$, $\{c'_i\}$ and $\{C_i\}$ such that for sample size $n\gtrsim q$, such that the estimator $\widehat{A}$ obtained by solving for~\eqref{opt:solveA} with $\lambda_A$ satisfying
		\begin{equation*}
		\begin{split}
		\lambda_A = &~C_1\big(\mathcal{M}(f_Z)+\frac{\Sigma_w}{2\pi}+\mathcal{M}(f_{Z,W^+})\big)\sqrt{\frac{\log(p_1+p_2)}{n}} \\
		&+ C_2\mathcal{M}^{1/2}(f_Z)\sqrt{\frac{\log(p_1+p_2)+\log p_1}{n}} 
		~+ C_3\Lambda^{1/2}_{\max}(\Sigma_w)\sqrt{\frac{\log(p_1+p_2)}{n}} + C_4,
		\end{split}
		\end{equation*}
		with probability at least 
		\begin{equation}\label{eqn:prob}
		\Big(1-c_1\exp\{-c_2\log(p_2q)\}\Big)\Big(1-c'_1\exp\{-c'_2\log(p_1+p_2)\}\Big),
		\end{equation}
		the following bound holds for $\Delta_A$:
		\begin{equation*}
		\smalliii{\Delta_A}_{\F}^2 \leq \check{C}(K_1,\mathfrak{m}(f_Z),\mathcal{M}(f_Z))\cdot \check{\psi}(s_{A^\star}),
		\end{equation*}
		for some function $\check{C}(K_1,\mathfrak{m}(f_Z),\mathcal{M}(f_Z))$ that does not depend on $n,p_2,q$ and $\check{\psi}(\cdot)$ that depends linearly on $s_{A^\star}$. Here $K_1$ denotes the upper bound of the first stage error shown in~\eqref{eqn:err1}.
	\end{theorem}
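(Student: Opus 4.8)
The plan is to follow the two-part road map: first bound $\smalliii{\Delta_A}_{\F}$ on the event where the Stage~I bound of Theorem~\ref{thm:info} holds, treating the realized $\Delta_{\mathbf{F}}$ as a fixed matrix with $\smalliii{\Delta_{\mathbf{F}}/\sqrt{n}}_{\F}^2\le K_1$, and then close the argument with a conditioning/union-bound step. First I would invoke Theorem~\ref{thm:info}: under C1--C2 and $n\gtrsim q$, with probability at least $1-c_1\exp(-c_2\log(p_2q))$ (an event that already bundles Lemmas~\ref{lemma:RSCX}--\ref{lemma:Smax}), we have $\smalliii{\Delta_{\mathbf{F}}/\sqrt{n}}_{\F}^2\le K_1$ under (IR). On this event every scalar error summary entering Lemmas~\ref{lemma:RSCZ} and~\ref{lemma:deviation-A} is deterministically controlled, using $\Lambda_{\max}^{1/2}(M^\top M/n)\le\smalliii{M/\sqrt{n}}_{\F}$, $\|M\|_\infty\le\smalliii{M}_{\op}$ and submultiplicativity: $\Lambda_{\max}^{1/2}(S_{\Delta_{\mathbf{F}_{n-1}}})\le\sqrt{K_1}$, $\max_j\|\Delta_{\mathbf{F}_{n,\cdot j}}/\sqrt{n}\|\le\sqrt{K_1}$, $\max_j\|\varepsilon_{n,\cdot j}/\sqrt{n}\|\lesssim(1+\smalliii{A^\star}_{\op})\sqrt{K_1}$, and $\tfrac1n\|\Delta_{\mathbf{F}_{n-1}}^\top\Delta_{\mathbf{F}_n}\|_\infty$, $\tfrac1n\|\Delta_{\mathbf{F}_{n-1}}^\top\Delta_{\mathbf{F}_{n-1}}(A_{11}^\star)^\top\|_\infty\lesssim(1+\smalliii{A^\star}_{\op})K_1$. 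Note that for $n$ large $K_1$ is a constant determined by the spectral and covariance extremes (bounded under C1--C2) and by $s_{\Gamma^\star},p_1,r$, with vanishing dependence on $n,p_2,q$.

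Next I would check the two hypotheses of Proposition~\ref{prop:A-fixed}. For A1, apply Lemma~\ref{lemma:RSCZ}: since $K_1$ does not depend on $n,p_2,q$ (up to vanishing terms), the constant $C_Z$ in C3 can be taken large enough that $\mathfrak{m}(f_Z)/\mathcal{M}^{1/2}(f_Z)>C_Z\ge c_0\sqrt{K_1}$, so the curvature condition of the lemma holds; then~\eqref{eqn:curvatureZ} gives $\alpha_{\RSC}^{\widehat{\mathbf{Z}}}>0$ bounded below by a constant $\underline{\alpha}(\mathfrak{m}(f_Z),\mathcal{M}(f_Z),K_1)$, while the tolerance $\tau_n=\mathcal{O}(\sqrt{\log(p_1+p_2)/n})$ satisfies $s_{A^\star}\tau_{\mathbf{Z}}<\alpha_{\RSC}^{\widehat{\mathbf{Z}}}/64$ once $n\gtrsim s_{A^\star}^2\log(p_1+p_2)$. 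For A2, apply Lemma~\ref{lemma:deviation-A} and plug in the deterministic bounds above: the first three terms of~\eqref{eqn:deviationCnp} decay at rate $\sqrt{\log(p_1+p_2)/n}$ (the $\varepsilon$-term carrying an extra $\sqrt{K_1}$ factor), while the two quadratic-in-$\Delta_{\mathbf{F}}$ terms contribute a non-vanishing floor of order $(1+\smalliii{A^\star}_{\op})K_1$; this reproduces exactly the stated $\lambda_A$, whose last summand $C_4$ is precisely that floor, so $\lambda_A\ge 4C(n,p_1,p_2)$ for the stated choice.

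With A1--A2 verified, Proposition~\ref{prop:A-fixed} gives $\smalliii{\Delta_A}_{\F}\le 16\sqrt{s_{A^\star}}\,\lambda_A/\alpha_{\RSC}^{\widehat{\mathbf{Z}}}$; squaring and using $\lambda_A^2\lesssim K_1^2+\log(p_1+p_2)/n\lesssim\check{C}_0(K_1)$ for $n$ large together with $(\alpha_{\RSC}^{\widehat{\mathbf{Z}}})^{-2}\lesssim\underline{\alpha}^{-2}$ yields $\smalliii{\Delta_A}_{\F}^2\le\check{C}(K_1,\mathfrak{m}(f_Z),\mathcal{M}(f_Z))\,s_{A^\star}$, i.e.\ $\check{\psi}(s_{A^\star})\asymp s_{A^\star}$ with a prefactor free of $n,p_2,q$. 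For the probability statement, the Stage~I event holds with probability at least $1-c_1\exp(-c_2\log(p_2q))$, and on it the Stage~II events of Lemmas~\ref{lemma:RSCZ} and~\ref{lemma:deviation-A} (the $e^{-c_2'n}$ of the former absorbed into $e^{-c_2'\log(p_1+p_2)}$ since $n\gtrsim\log(p_1+p_2)$) hold with probability at least $1-c_1'\exp(-c_2'\log(p_1+p_2))$; intersecting gives~\eqref{eqn:prob}.

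I expect the main obstacle to be the alignment of the conditioning arguments. The realized $\Delta_{\mathbf{F}}$ is a measurable function of the same randomness that generates the Stage~II design $\widehat{\mathbf{Z}}$, so one cannot literally ``condition on $\Delta_{\mathbf{F}}$'' and treat Lemmas~\ref{lemma:RSCZ}--\ref{lemma:deviation-A} as independent facts. The way around it --- and the step needing the most care --- is that the curvature deficit in~\eqref{eqn:curvatureZ} and every term of~\eqref{eqn:deviationCnp} are monotone nondecreasing in the scalar error summaries listed above; replacing those summaries by their deterministic bound $\sqrt{K_1}$, valid on the Stage~I event, makes the RSC and deviation conditions hold with constants that can be fixed once and for all, i.e.\ ``unconditionally'' in the sense of the road map. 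A related essential point is that $\alpha_{\RSC}^{\widehat{\mathbf{Z}}}$ must stay bounded away from zero, which is exactly why C3 is imposed with $C_Z$ large relative to $\sqrt{K_1}$ (the extra regularity on the joint process $Z_t$ flagged in the introduction), and it is also why the final rate is $\mathcal{O}(s_{A^\star})$ rather than vanishing in $n$: the deviation floor $C_4\asymp K_1$ does not shrink, consistent with Remark~\ref{rmk-1}.
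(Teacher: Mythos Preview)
Your proposal is correct and follows essentially the same approach as the paper's own proof: invoke Theorem~\ref{thm:info} to place $\smalliii{\Delta_{\mathbf{F}}/\sqrt{n}}_{\F}$ below a constant on a high-probability event, use that deterministic bound to verify the hypotheses of Lemma~\ref{lemma:RSCZ} (via C3 with $C_Z\ge c_0\sqrt{K_1}$) and Lemma~\ref{lemma:deviation-A}, then apply Proposition~\ref{prop:A-fixed} and intersect the two high-probability events. Your explicit discussion of the conditioning subtlety---resolving it through the monotonicity of the curvature deficit and deviation terms in the scalar error summaries---is in fact more careful than the paper's own write-up, which handles this step somewhat implicitly.
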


	\begin{remark}[Rate of convergence] It is worth pointing out similarities in the formulation of the calibration equation and a matrix completion problem. 
		Note that the factor hyperplane corresponds to the low-rank component one seeks to recover in the latter problem in a noisy
		setting. Hence, the resulting similarity in the rate obtained in our setting to that established for the matrix completion
		problem \citep{candes2010matrix}, is a consequence of absence of the restricted isometry property (RIP) \citep[see also][]{gunasekar2015unified}. 
	\end{remark}

	\begin{remark}[Sample size requirement] To establish the finite-sample high probability error bound for the estimated transition matrices $\widehat{A}$, the proposed estimation procedure requires the sample size to satisfy $n\gtrsim q$; this condition is more stringent compared to the standard VAR estimation problem under sparsity, given by $n\gtrsim\sqrt{\log(p_1+p_2)}$. However, this is due to the fact that in the FAVAR formulation the $F_t$ block is latent and needs to be estimated from the data and hence comes with ``measurement error". The more  restrictive sample size requirement reflects the latter fact and is embedded in the factor recovery step in the calibration equation -- specifically, the concentration of $\Lambda_{\max}(S_{\mathbf{E}})$ that is necessary for providing adequate control over $\Delta_{\mathbf{F}}$.
	\end{remark}
	
	\begin{remark}[Generalization to $\VAR(d)$] As a straightforward generalization, for a $\VAR(d), d>1$ system $Z_t = (F^\top_t,X^\top_t)^\top$, a similar error bound holds by considering the augmented process $\widetilde{Z}^\top_t:=(Z_t,Z_{t-1},\dots,Z_{t-d+1})$ that satisfies 
		\begin{equation*}
		\widetilde{Z}_t = \widetilde{A}\widetilde{Z}_{t-1} + \widetilde{W}_t, \qquad \text{where}\qquad \widetilde{A}:=\left[\begin{smallmatrix}
		A^{(1)} & A^{(2)} & \cdots & A^{(d)} \\ \I_{p} & O & O & O \\ \vdots & \ddots & \vdots & \vdots \\ O & O & \I_{p} &   O 
		\end{smallmatrix}\right], \quad \widetilde{W}_t= \left[\begin{smallmatrix}
		W_t \\ 0 \\ \vdots \\ 0
		\end{smallmatrix}\right].
		\end{equation*}
		In particular, with probability at least $$\big(1-c_1\exp\{-c_2\log(p_2q)\}\big)\big(1-c'_1\exp\{-c'_2\log(d(p_1+p_2))\}\big),$$
		the following bound holds for the estimate of $\widetilde{A}$:
		\begin{equation*}
		\smallii{\Delta_{\widetilde{A}}}_{\F}^2 \leq \widetilde{C}(K_1,\mathfrak{m}(f_{\widetilde{Z}}),\mathcal{M}(f_{\widetilde{Z}}))\cdot \widetilde{\kappa}(s_{\widetilde{A}^\star}).
		\end{equation*}
		However, note that although the error bound is still of the same form, the stronger temporal dependence yields a larger $\widetilde{C}(K_1,\mathfrak{m}(f_{\widetilde{Z}}),\mathcal{M}(f_{\widetilde{Z}})$ through the RSC curvature parameter; specifically, a smaller value of $\mathfrak{m}(f_{\widetilde{Z}})$. Its impact on the deviation bound will not manifest itself in terms of the order of the error, since it only affects the constants in front of lower order terms in the expression of choosing $\lambda_A$. 
	\end{remark}
	
	%%%%%%%%%%%%%%%%%%%%%%%%%%%%%%%%%%%%%%
	%%%% Section 4: Simulation
	%%%%%%%%%%%%%%%%%%%%%%%%%%%%%%%%%%%%%%
	\section{Implementation and Performance Evaluation} \label{sec:simulation}
	
	We first discuss implementation issues of the proposed problem formulation for the high-dimensional FAVAR model.
	Specifically, the formulation requires imposing the compactness constraint for identifiability purposes and for
	obtaining the necessary statistical guarantees for the estimates of the model parameters. However, the value $\phi$
	in the compactness constraint is hard to calibrate in any real data set. Hence, in the implementation we relax this
	constraint and assess the performance of the algorithm. Due to its importance in constraining the size of the
	equivalence class $\mathcal{C}(Q_2)$, we examine in Appendix~\ref{appendix:fail} certain relatively extreme settings where the proposed relaxation fails to provide accurate estimates of the model parameters.
	
	\paragraph{Implementation.} The following relaxation of~\eqref{opt:solveTheta} is used in practice:
	\begin{equation}\label{opt:solveTheta-empirical}
	\begin{split}
	\min\limits_{\Theta,\Gamma} f(\Theta,\Gamma)&:=\big\{\frac{1}{2n}\smalliii{\mathbf{Y}-\Theta-\mathbf{X}\Gamma^\top}_{\F}^2  + \lambda_\Gamma\|\Gamma\|_1\big\} \\
	& \text{subject to}~~~\text{rank}(\Theta)\leq r,
	\end{split}
	\end{equation}
	which leads to Algorithm~\ref{algo}. 
	% ========================================================
	% ESTIMATION ALGO 
	\begin{algorithm}[ht]
		\small
		\setstretch{0.01}
		\caption{Computational procedure for estimating $A$, $\Gamma$ and $\Lambda$.}\label{algo}
		\KwIn{Time series data $\{x_i\}_{i=1}^n$ and $\{y_i\}_{i=1}^n$, $(\lambda_\Gamma,r)$, and $\lambda_A$.}
		\BlankLine
		\textbf{Stage I:} recover the latent factors by solving~\eqref{opt:solveTheta-empirical}, through iterating between (1.1) and  (1.2) until $|f(\Theta^{(m)},\Gamma^{(m)}) - f(\Theta^{(m-1)},\Gamma^{(m-1)})|< \text{tolerance}$:
		\BlankLine
		-- (1.1) Update $\widehat{\Theta}^{(m)}$ by singular value thresholding (SVT): do SVD on the residual hyperplane, i.e., $\mathbf{Y} - \mathbf{X}(\widehat{\Gamma}^{(m-1)})^\top = UDV$, where $D := \text{diag}\big(d_1,\dots,d_{\min(n,q)}\big)$, and construct $\widehat{\Theta}^{(m)}$~by 
		$$
		\widehat{\Theta}^{(m)} = UD_rV, \qquad \text{where}~~~D_r:=\text{diag}\big(d_1,\dots,d_r,0,\dots,0\big).
		$$
		\BlankLine
		-- (1.2) Update $\widehat{\Gamma}^{(m)}$ with the plug-in $\widehat{\Theta}^{(m)}$ so that each row $j$ is obtained with Lasso regression (in parallel) and solves
		\begin{equation*}
		\min\nolimits_\beta \big\{ \frac{1}{2n}\smallii{(\mathbf{Y} - \widehat{\Theta}^{(m)})_{\cdot j} - \mathbf{X}\beta}^2  + \lambda_A \|\beta_1\|_1  \big\}
		\end{equation*}
		\BlankLine
		\underline{Stage I output:} $\widehat{\Theta}$ and $\widehat{\Gamma}$; the estimated factor $\widehat{\mathbf{F}}$ and  $\widehat{\Lambda}$ via~\eqref{eqn:F-Lambda} under (IR)\;
		\BlankLine
		\textbf{Stage II:} estimate the transition matrix by solving~\eqref{opt:solveA}: update each row of $A$ (in parallel) by solving the Lasso problem:
		\begin{equation*}
		\min\nolimits_{\beta} \big\{\frac{1}{2n} \smallii{(\widehat{\mathbf{Z}}_n)_{\cdot j} -\widehat{\mathbf{Z}}_{n-1}\beta }^2 + \lambda_A\|\beta\|_1\big\}.
		\end{equation*}
		\BlankLine
		\underline{Stage II output:} $\widehat{A}$.
		\BlankLine
		\KwOut{Estimates $\widehat{\Gamma}$, $\widehat{\Lambda}$, $\widehat{A}$ and the latent factor $\widehat{\mathbf{F}}$.}
	\end{algorithm}
	% END OF ALGO
	% =========================================================
	The implementation of Stage I requires the pair of tuning parameters $(\lambda_\Gamma,r)$ as input, and the choice of $r$ is particularly critical since it determines the effective size of the latent block. In our implementation, we select the optimal pair based on the Panel Information Criterion (PIC) proposed in \citet{ando2015selecting}, which searches for
	$(\lambda_\Gamma,r)$ over a lattice that minimizes
	\begin{equation*}%\label{eqn:PIC}
	\text{PIC}(\lambda_\Gamma,r) := \frac{1}{nq}\vertiii{\mathbf{Y}-\widehat{\Theta} - \mathbf{X}\widehat{\Gamma}^\top}_{\F}^2 + \widehat{\sigma}^2 \Big[ \frac{\log n}{n}\|\widehat{\Gamma}\|_0 + r(\frac{n+q}{nq})\log(nq)\Big],
	\end{equation*}
	where $\widehat{\sigma}^2 = \tfrac{1}{nq}\smalliii{\mathbf{Y}-\widehat{\Theta} - \mathbf{X}\widehat{\Gamma}^\top}_{\F}^2$. Analogously, the implementation of Stage II requires $\lambda_A$ as input, and we select $\lambda_A$ over a grid of values that minimizes the Bayesian Information Criterion (BIC):
	\begin{equation*}
	\text{BIC}(\lambda_A) = \sum_{i=1}^q \log \text{RSS}_i + \frac{\log n}{n}\|\widehat{A}\|_0, 
	\end{equation*}
	where $\text{RSS}_i:= \|(\mathbf{X}_{n})_{\cdot i}-\mathbf{X}_{n-1}\widehat{A}^\top_{i\cdot}\|^2$ is the residual sum of square of the $i^{\text{th}}$ regression. Extensive numerical work shows that these two criteria select very satisfactory values for the tuning parameters, which in turn yield highly accurate estimates of the model parameters.
	
	\paragraph{Simulation setup.} Throughout, we assume $\Sigma_w^X$, $\Sigma_X^F$ and $\Sigma_e$ are all diagonal matrices, and the sample size is fixed at 200, unless otherwise specified. We first generate samples of $F_t\in\R^{p_1}$ and $X_t\in\R^{p_2}$ recursively according to the $\VAR(d)$ model in~\eqref{md:FAVAR-VAR}, and then the samples of $Y_t\in\R^q$ are generated according to the linear model given in~\eqref{md:FAVAR-info}. Specifically, (IR) is imposed on the true value of the parameter, hence $\Lambda^\star$ that is used for generating $Y_t$ always satisfies the restriction $\Lambda=\left[\begin{smallmatrix}\I_{p_1} \\ * \end{smallmatrix}\right]$. Unless otherwise specified, all error terms are generated according to some mean-zero Gaussian distribution.
	
	For the calibration equation, the density level of the sparse coefficient matrix $\Gamma\in\R^{q\times p_2}$ is fixed at $5/p_2$ for each regression;
	thus, each $Y_t$ coordinate is affected by 5 series (coordinates) from the $X_t$ block on average. The bottom $(q-p_1)\times p_1$ block of the loading matrix $\Lambda\in\R^{q\times p_1}$ is dense. The magnitude of nonzero entries of $\Gamma$ and that of entries of $\Lambda$ may vary to capture different levels of signal contributions to $Y_t$, and we adjust the standard deviation of $e_t$ to maintain the desired level of the signal-to-noise ratio for $Y_t$ (averaged across all coordinates). 
	
	For the transition matrix $A$ of the VAR equation, the density for each of its component block $\{A_{ij}\}_{i,j=1,2}$ varies
	across settings, so as to capture different levels of the influence from the lagged value of the latent block $F_{t}$ on the observed $X_t$. Note that to ensure stability of the VAR system, the spectral radius of $A$, $\varrho(A)$, needs to be smaller than 1. In particular, when a $\VAR(d)~(d>1)$ system is considered, we need to ensure that the spectral radius of $\widetilde{A}$ is smaller than 1\footnote{In practice, this can be achieved by first generating $A^{(1)},\dots,A^{(d)}$, align them in $\widetilde{A}_{\text{initial}}$ and obtain the scale factor $\zeta:=\varrho_{\text{target}}/\varrho(\widetilde{A}_{\text{initial}})$, then scale $A^{(i)}$ by $\zeta^i$. The validity of this procedure follows from simple algebraic manipulations.}, where we let $p=p_1+p_2$ and 
	\begin{equation*}
	\widetilde{A}:=\left[\begin{matrix}
	A^{(1)} & A^{(2)} & \cdots & A^{(d)} \\ \I_{p} & O & O & O \\ \vdots & \ddots & \vdots & \vdots \\ O & O & \I_{p} &   O 
	\end{matrix}\right]. 
	\end{equation*}
	Table~\ref{table:sim-setup} lists the simulation settings and their parameter setup.
	\begin{table}[!ht]
		\captionsetup{font=small}
		\fontsize{8}{9.5}\selectfont
		\centering
		\caption{Parameter setup for different simulation settings for the VAR equation.}\label{table:sim-setup} 
		\begin{tabular}{l|lll|lllll|l}
			\specialrule{.1em}{.05em}{.05em} 
			& $q$ & $p_1$ & $p_2$ &  &$s_{A_{11}}$ & $s_{A_{12}}$ & $s_{A_{21}}$ & $s_{A_{22}}$ & $\text{SNR}(Y_t)$ \\  \hline
			A1 & 100 & 5 & 50 & & \multicolumn{4}{c|}{$3/(p_1+p_2)$}  & 1.5\\  \hline
			A2 & 200 & 10 & 100 &  &\multicolumn{4}{c|}{$3/(p_1+p_2)$}  & 1.5\\  \hline
			A3 & 200 & 5 & 100 & & $3/p_1$ & $2/p_2$ & $2/p_1$ & $2/p_2$ & 1.5\\  \hline
			A4 & 300 & 5 & 500 & &$3/p_1$ & $2/p_2$ & $0.8$ & $2/p_2$ & 1.5 \\ \hline
			%%%%% B1
			B1 & \multirow{2}{0.8cm}{$200$} & \multirow{2}{0.8cm}{$5$} & \multirow{2}{0.8cm}{$100$} & $A^{(1)}:$ & \multicolumn{4}{c|}{$3/(p_1+p_2)$} &  \multirow{2}{1cm}{2}\\ 
			$(d=2)$	&   & & & $A^{(2)}:$  &\multicolumn{4}{c|}{$2/(p_1+p_2)$} & \\ \hline
			%%%%%% B2
			& \multirow{4}{0.8cm}{$200$} & \multirow{4}{0.8cm}{$5$} & \multirow{4}{0.8cm}{$100$} & $A^{(1)}:$ & $0.5$ & $3/p_2$ & $0.5$ & $3/p_2$ & \multirow{4}{1cm}{2}\\
			B2	&   & & &  	$A^{(2)}:$&		  $0.2$ & $2/p_2$ & $0.25$ & $2/p_2$ \\
			$(d=4)$	&  & & & $A^{(3)}:$ &		 \multicolumn{4}{c|}{$2/(p_1+p_2)$} \\
			& & & & $A^{(4)}:$ & \multicolumn{4}{c|}{$2/(p_1+p_2)$}\\ \hline
			%%%%%%% B3
			& \multirow{4}{0.8cm}{$100$} & \multirow{4}{0.8cm}{$5$} & \multirow{4}{0.8cm}{$25$} & $A^{(1)}:$ &$0.5$ & $2/p_2$ & $0.5$ & $2/p_2$ & \multirow{4}{1cm}{2}\\
			B3	&   & & &  $A^{(2)}:$ &	$0.2$ & $1.5/p_2$ & $0.1$ & $1.5/p_2$ \\
			$(d=4)$	&  & & & 	$A^{(3)}:$ & \multicolumn{4}{c|}{$1/(p_1+p_2)$} \\
			& & & & $A^{(4)}:$ &  \multicolumn{4}{c|}{$0.8/(p_1+p_2)$}\\ \hline
			C1 & \multicolumn{9}{l}{same as setting A1 with $t_4$ noise for the VAR system} \\ 
			C2 & \multicolumn{9}{l}{same as setting B1 with $t_8$ noise for the VAR system} \\ 
			C3 & \multicolumn{9}{l}{same as setting B2 with sub-exponential noise for the VAR system} \\ 
			C4 & \multicolumn{9}{l}{same as setting B2 with sub-exponential noise for the VAR system and 500 observations} \\ 
			\specialrule{.1em}{.05em}{.05em} 
		\end{tabular}
	\end{table} 
	
	Specifically, in settings A1\,--\,A4, $(F^\top_t,X^\top_t)^\top$ jointly follows a $\VAR(1)$ model. The (average) signal-to-noise ratio for each regression of $Y_t$ is 1.5. For settings A1 and A2, the transition matrix $A$ is uniformly sparse, with A2 corresponding to a larger system; for settings A3 and A4, we increase the density level (the proportion of nonzero entries) for the transition matrices that govern the effect of $F_{t-1}$ on $F_{t}$ and $X_{t}$. In particular, for setting A4, we consider a large system with 500 coordinates in $X_t$, and the factor effect is almost pervasive on these coordinates (through the lags), as the density level of $A_{21}$ is set at 0.8. Settings B1, B2 and B3 consider settings with more lags ($d=2$ and $d=4$, respectively), and to compensate for the higher level of correlation between $F_t$ and $X_t$, we elevate the signal-to-noise for each regression of $Y_t$ to 2. For B1, the transition matrices for both lags ($A^{(1)}$ and $A^{(2)}$) have uniform sparsity patterns, with $A^{(2)}$ being slightly more sparse compared to $A^{(1)}$; for B2, the transition matrices for the first two lags have higher density in the component that governs the $F_{t-i}\rightarrow X_t$ cross effect, and those for the last two lags have uniform sparsity. B3 has approximately the same scale as observed in real data, and due to a small $p_2$, the system exhibits a higher sparsity level in general. In settings C1\,--\,C4, the error terms of the VAR system are generated from distributions with tails heavier than a Gaussian (e.g. $t$-distributions, squares of Gaussian which
	have sub-exponential tails), and the joint process $(F_t',X_t)'$ will be heavy-tailed as a result of the recursive data generating mechanism.
	
	\paragraph{Performance evaluation.} We consider both the estimation and the forecasting performance of the proposed estimation procedure. The performance metrics used for estimation are sensitivity (SEN), specificity (SPC) and the relative error in Frobenius norm (Err) for the sparse components (transition matrices $A$ and the coefficient matrix $\Gamma$), defined as
	\begin{equation*}
	\textrm{SEN} = \frac{\textrm{TP}}{\textrm{TP}+\textrm{FN}},\quad \textrm{SPC} = \frac{\textrm{TN}}{\textrm{FP}+\textrm{TN}}, \quad \textrm{Err} = \smalliii{\Delta_M}_\F/\smalliii{M^\star}_{\F}.
	\end{equation*}
	We also track the estimated size of the latent component (i.e., the rank constraint in~\eqref{opt:solveTheta}, jointly with $\lambda_{\Gamma}$ is selected by PIC), as well as the relative errors of $\widehat{\Theta}$, $\widehat{\mathbf{F}}$ and $\widehat{\Lambda}$. For forecasting, we focus on evaluating the $h$-step-ahead predictions for the $X_t$ block. Specifically, for settings where the VAR system is 1-lag dependent (A1--A4, C1), we consider $h=1$; for settings where the VAR system has more lag dependencies (B1--B3, C2--C4), we consider $h=1,2$. We use the same benchmark model as in 
	\citet{banbura2010large} which is based on a special case of the Minnesota prior distribution 
	\citep{litterman1986forecasting}, so that the for any generic time series $X_t\in\R^p$, each of its coordinates $j=1,\dots,p$ follows a centered random walk:
	\begin{equation}\label{md:benchmark}
	X_{t,j} = X_{t-1,j} + u_{t,j}, \qquad u_{t,j}\sim \mathcal{N}(0,\sigma^2_u).
	\end{equation}
	For each forecast $\widehat{x}_{T+h}$, its performance is evaluated based on the following two measures: 
	\begin{equation*}
	\text{rel-err} = \|\widehat{x}_{T+h}-x_{T+h}\|_2^2/\|x_{T+h}\|_2^2, \qquad  \text{rel-err-ratio} = \dfrac{ \tfrac{1}{p_2}\sum_{j=1}^{p_2} \big| \frac{\widehat{x}_{T+h,j}-x_{T+h,j}}{x_{T+h,j}}  \big|}{\tfrac{1}{p_2}\sum_{j=1}^{p_2} \big| \frac{\widetilde{x}_{T+h,j}-x_{T+h,j}}{x_{T+h,j}}  \big|},
	\end{equation*}
	where $\text{rel-err}$ measures the $\ell_2$ norm of the relative error of the forecast to the true value;
	whereas for $\text{rel-err-ratio}$, it measures the ratio between the relative error of the forecast and the above described benchmark. In particular, its numerator and denominator respectively capture the averaged relative error of all coordinates of the forecast $\widehat{x}_{T+h}$ and that of the benchmark $\widetilde{x}_{T+h}$ that evolves according to~\eqref{md:benchmark}, while the ratio measures how much the forecast based on the proposed FAVAR model outperforms $(<1)$ or under-performs $(>1)$ compared to the benchmark. 
	
	All tabulated results are based on the average of 50 replications. Tables~\ref{table:sim-info-result}, \ref{table:sim-VAR-result} and~\ref{table:sim-forecast}, respectively, depict the performance of the estimates of the parameters in the calibration
	and the VAR equations, as well as the forecasting performance under the settings considered. 
	\begin{table}[!pht]
		\captionsetup{font=small}
		\fontsize{8}{9.5}\selectfont
		\centering
		\caption{Performance evaluation of estimated parameters in the calibration equation.}\vspace*{-2mm}\label{table:sim-info-result}
		\begin{tabular}{c|c|ccc|ccc}
			\specialrule{.1em}{.05em}{.05em} 
			& PIC-selected $r$ & $\Err(\widehat{\Theta})$ &  $\Err(\widehat{\mathbf{F}})$  &  $\Err(\widehat{\Lambda})$ & $\SEN(\widehat{\Gamma})$ & $\SPC(\widehat{\Gamma})$ & $\Err(\widehat{\Gamma})$  \\ \hline 
			A1 &  4.80(.40) &  0.32(.010) & 0.56(.074) & 0.67(.345) & 0.99(.007) & 0.98(.003) & 0.45(.013) \\
			A2 &  9.96(.19) & 0.32(.008) & 0.90(.065) & 2.54(1.30) & 0.99(.005) & 0.98(.001) & 0.52(.010) \\
			A3 & 4.78(.54) & 0.33(.048) & 0.73(.103) & 2.59(1.59) & 0.99(.003) & 0.99(.001) & 0.57(.009)\\
			A4 & 4.42(.49) & 0.38(.040) & 0.84(.100) & 2.66(2.14) & 0.97(.009) & 0.99(.001) & 0.59(.015) \\ \hline
			B1 & 5(0) & 0.23(.004) & 0.41(.043) & 0.54(.020) & 1.00(.000) & 0.97(.011) & 0.27(.014) \\  \hline
			B2 & 5(0) & 0.26(.007) & 0.38(.047) & 0.42(.087) & 1.00(.000) & 0.99(.002) & 0.37(.007) \\ 
			B3 &  5(0) & 0.25(.007) & 0.34(.031) & 0.34(.080) & 1.00(.000) & 0.99(.001) & 0.32(.012) \\ \hline
			C1 & 4.96(.20) & 0.32(.019) & 0.58(.075) & 0.86(.564) & 0.99 (.001) & 0.96(.009) & 0.47(.017) \\ \hline 
			C2 & 5(0) & 0.23(.005) & 0.43(.042) & 0.54(.155) & 1.00 (.000) & 0.96(.008) & 0.27(.010) \\  \hline
			C3 & 5(0) & 0.21(.006) & 0.39(.040) & 0.41(.123) & 1.00 (.000) & 0.97(.003) & 0.27(.052)  \\
			C4 & 5(0) & 0.20(.007) & 0.27(.028) & 0.25(.041) & 1.00 (.000) & 0.97(.011) & 0.18(.012)  \\
			\specialrule{.1em}{.05em}{.05em} 
		\end{tabular}
		
		\bigskip
		\caption{Performance evaluation of the estimated transition matrices in the VAR equation.}\vspace*{-2mm}\label{table:sim-VAR-result}
		\begin{tabular}{c|c|ccc|ccc}
			\specialrule{.1em}{.05em}{.05em} 
			& coef & $\SEN(\widehat{A})$ & $\SPC(\widehat{A})$ & $\Err(\widehat{A})$ & $\SEN(\widehat{A}_{22})$ & $\SPC(\widehat{A}_{22})$ & $\Err(\widehat{A}_{22})$\\ \hline 
			A1 & $A$ & 0.99(.003) & 0.95(.012) & 0.35(.019) & 0.99(.001) & 0.96(.013) & 0.31(.022) \\
			A2 & $A$ & 0.98(.008) & 0.97(.004) & 0.46(.018) & 0.99(.001) & 0.98(.003) & 0.39(.017)  \\
			A3 & $A$ & 0.86(.050) & 0.98(.006) & 0.73(.029) & 0.93(.032) & 0.98(.005) & 0.65(.034) \\
			A4 & $A$ & 0.75(.046) & 0.92(.002) & 0.71(0.024) & 0.99(.001) & 0.92(.002) & 0.60(.018)  \\   \hline
			B1 & $A^{(1)}$ & 0.99(.003) & 0.98(.002) & 0.47(.017) & 0.99(.002) & 0.98(.002) & 0.46(.017) \\ 
			& $A^{(2)}$ & 0.97(.010) & 0.98(.002) & 0.55(.017) & 0.98(.011) & 0.98(.003) & 0.55(.018)  \\ \hline 
			B2 & $A^{(1)}$ & 0.89(.017) & 0.88(.003) & 0.71(.014) & 0.90(.017) & 0.99(.003) & 0.70(.014)  \\
			& $A^{(2)}$ & 0.75(.028) & 0.88(.003) & 0.89(.020) & 0.77(0.032) & 0.88(.003) & 0.90(.021)  \\ 
			&$A^{(3)}$ & 0.84(.025) & 0.88(.003) & 0.85(.015) & 0.85(.027) & 0.88(.004) & 0.84(.018) \\
			& $A^{(4)}$ & 0.72(.022)&  0.88(.003) & 0.99(.017) & 0.73(.025) & 0.88(.003) &  0.98(.017)  \\ \hline
			B3 & $A^{(1)}$ & 0.93(.034) & 0.96(.010) & 0.61(.043) & 0.94(.035) & 0.97(.009) & 0.60(.045) \\
			& $A^{(2)}$ & 0.77(.078) & 0.96(.010) & 0.74(.044) & 0.78(.084) & 0.97(.010) & 0.74(.046) \\
			& $A^{(3)}$ & 0.80(.098) & 0.96(.012) & 0.75(.052) & 0.81(.102) & 0.97(.010) & 0.74(.056)  \\
			& $A^{(4)}$ & 0.74(.122) & 0.97(.011) & 0.78(.059) &  0.72(.134) & 0.97(.009) & 0.79(.065)   \\ \hline
			C1 & $A$	& 0.99(.007) & 0.95(.012) & 0.42(.024) & 0.99(.002) & 0.96(.011) & 0.38(.024) \\ 
			C2 & $A^{(1)}$ & 0.99(.004) & 0.98(.002) & 0.46(.013) & 0.99(.003) & 0.98(.002) & 0.45(.015) \\
			& $A^{(2)}$ & 0.98(.008) & 0.97(.003) & 0.54(.018) & 0.98(.009) & 0.98(.003) & 0.54(.019) \\    \hline
			C3 & $A^{(1)}$ & 0.93(.013) & 0.42(.005) & 1.54(.024) & 0.93(.013) & 0.42(.006) & 1.61(.027)  \\
			& $A^{(2)}$ & 0.86(.019) & 0.44(.006) & 2.11(.029) & 0.86(.023) & 0.44(.006) & 2.30(.032)  \\ 
			&$A^{(3)}$ & 0.88(.023) & 0.44(.006) & 2.06(.028) & 0.89(.023) & 0.44(.005) & 2.07(.028) \\
			& $A^{(4)}$ & 0.82(.023)&  0.44(.006) & 2.51(.043) & 0.83(.025) & 0.44(.006) &  2.51(.041)  \\  
			C4 & $A^{(1)}$ & 0.89(.016) & 0.96(.002) & 0.67(.013) & 0.89(.016) & 0.96(.002) & 0.65(.014)  \\
			& $A^{(2)}$ & 0.73(.025) & 0.96(.006) & 0.78(.029) & 0.74(.026) & 0.96(.002) & 0.79(.011)  \\ 
			&$A^{(3)}$ & 0.82(.027) & 0.96(.002) & 0.74(.015) & 0.82(.028) & 0.96(.002) & 0.74(.017) \\
			& $A^{(4)}$ & 0.60(.031)&  0.96(.002) & 0.87(.014) & 0.60(.033) & 0.96(.002) &  0.87(.015)  \\  
			\specialrule{.1em}{.05em}{.05em}    
		\end{tabular}
		
		\bigskip
		\caption{Performance evaluation of forecasting.}\vspace*{-2mm}\label{table:sim-forecast}
		\begin{tabular}{c|cccc}
			\specialrule{.1em}{.05em}{.05em} 
			& \multicolumn{2}{c}{$h=1$} & \multicolumn{2}{c}{$h=2$}  \\
			\cmidrule(r){2-3} \cmidrule(l){4-5}
			& rel-err & rel-err-ratio & rel-err & rel-err-ratio \\ \hline
			A1 & 0.53(.117) &  0.38(.065) & - & - \\
			A2 & 0.60(.075) & 0.38(.046) & - & -\\
			A3 & 0.80(.075) &  0.45(.064) &  - & -\\
			A4 &  0.56(.109) & 0.40(.055) & - & -\\ \hline
			B1 &  0.62(.060) & 0.35(.171) & 0.66(.127) & 0.24(.071) \\
			B2 & 0.89(.091) & 0.42(.217) & 0.94(.173) & 0.29(.118) \\
			B3 & 0.81(.094) & 0.32(.129) & 0.90(.402) & 0.26(.174) \\ \hline
			C1 & 0.59(.176) & 0.50(.118)& - & - \\
			C2 & 0.59(.121) & 0.41(.350) & 0.61(.270) & 0.26(.089) \\ \hline
			C3 & 1.25(.305) & 0.19(.081) & 1.26(.396) & 0.15(.059) \\
			C4 & 0.52(.073) & 0.12(.071) & 0.51(.168) & 0.07(.034) \\
			\specialrule{.1em}{.05em}{.05em} 
		\end{tabular}	
	\end{table}
	Based on the results listed in Tables~\ref{table:sim-info-result} and~\ref{table:sim-VAR-result}, we notice that in all settings, the parameters in the calibration equation $\widehat{\Theta}$ and $\widehat{\Gamma}$ are well estimated, while
	the rank slightly underestimated. Further, the SEN and SPC measures of $\widehat{\Gamma}$ show excellent performance regarding
	support recovery. It is worth pointing out that the estimation accuracy of the parameters in the calibration equation 
	strongly depends on the signal-to-noise ratio of $Y_t$. In particular, if the signal-to-noise ratio in A1-A4 is increased 
	to 1.8, the rank is always correctly selected by PIC, and the estimation relative error of $\widehat{\Theta}$ further decreases(results omitted for space considerations)\footnote{This also comes up when comparing the relative error of $\widehat{\Theta}$ in the A1-A4 settings to that in the B1-B2 ones, where the latter two have a higher SNR.}. Under the given IR, we decompose the estimated factor hyperplane into the factor block and its loadings. The results show that both quantities exhibit a higher relative error compared to that of the factor hyperplane. Of note, the loadings estimates exhibit a lot of variability as indicated by the high standard deviation in the Table.  
	
	Regarding the estimates in the VAR equation, for settings A1, A2 and B1 that are characterized by an adequate degree of sparsity, the recovery of the skeleton of the transition matrices is very good. However, performance deteriorates if the latent factor becomes ``more pervasive" (settings A3 and A4), which translates to the $A_{21}$ block having lower sparsity. On the other hand,
	this does not have much impact on the recovery of the $A_{22}$ sub-block, as for these two settings, SEN and SPC of $A_{22}$ still remain at a high level. For 
	settings with more lags, performance deteriorates (as expected) although SEN and SPC remain fairly satisfactory. On the other hand, the relative error of the transition matrices increases markedly. Nevertheless, the estimates of the first lag transition
	matrix is better than the remaining ones. Further, the results indicate that smaller size VAR systems (B3) exhibit better
	performance than larger ones. Finally, in terms of forecasting (results depicted in Table~\ref{table:sim-forecast}), the one-step-ahead forecasting value yields approximately 50\% to 90\% rel-err (compared to the truth), depending on the specific setting and the actual SNR, while it outperforms the forecast of the benchmark by around 40\% (based on the $\text{rel-err-ratio}$ measure). Of note, the $2$-step-ahead forecasting value for settings with more lags outperforms the benchmark by an even wider margin with the rel-err-ratio decreasing to less than 0.3. 
	
	Finally, the proposed methodology is robust in the presence of heavier than Gaussian tails in the VAR processes.
	Further, note that in setting C3 wherein the temporal dependence is strong and the error terms are generated according to a sub-exponential distribution, the performance of the estimated transition matrices deteriorates significantly, as expected from the theoretical results outlined in Appendix C. Nevertheless, with proper compensation in terms of sample size (setting C4), the performance improves markedly.
	
	%%%%%%%%%%%%%%%%%%%%%%%%%%%%%%%%%%%%%%%%%
	%%% Section 5: Real Data
	%%%%%%%%%%%%%%%%%%%%%%%%%%%%%%%%%%%%%%%%%
	\section{Application to Commodity Price Interlinkages}\label{sec:realdata}
	
	Interlinkages between commodity prices represent an active research area in economics and have been a source of concern
	for policymakers. Commodity prices, unlike stocks and bonds, are determined more strongly by global demand and supply considerations. Nevertheless, other factors are also at play as outlined next. The key ones are: (i) the state of the global macro-economy and the state of the business cycle that manifest themselves as direct demand for commodities; (ii) monetary policy, specifically, interest rates that impact the opportunity cost for holding inventories, as well as having an impact on investment and hence production capacity that subsequently contribute to changes in supply and demand in the market; and (iii) the relative performance of other asset classes through portfolio allocation \citep[see][and references therein]{frankel2008effect,frankel2014effects}. 
	We employ the FAVAR model and the proposed estimation method to investigate interlinkages amongst major commodity prices. The 
	$X_t$ block corresponds to the set of commodity prices of interest, while the $Y_t$ block contains representative indicators for the global economic environment.  We extract the factors $F_t$ based on the calibration equation and then consider the augmented VAR system of $(F_t,X_t)$, so that the estimated interlinkages amongst commodity prices are based on a larger
	information set that takes into account broader economic activities.
	
	\paragraph{Data.} The commodity price data ($X_t$) are retrieved from the International Monetary Fund, comprising of 16 commodity prices in the following categories: Metal, Energy (oil) and Agricultural. The set of economic indicators ($Y_t$) 
	contain core macroeconomic variables and stock market composite indices from major economic entities including China, EU, Japan, UK and US, with a total number of 54 indicators. Specifically, the macroeconomic variables primarily account for: Output \& Income (e.g. industrial production index), Labor Market (unemployment), Money \& Credit (e.g. M2), Interest \& Exchange Rate (e.g. Fed Funds Rate and the effective exchange rate), and Price Index (e.g. CPI). For variables that
	reflect interest rates, we use both the short-term interest rate such as 6-month LIBOR, and the 10-year T-bond yields from the secondary market. Further, to ensure stationarity of the time series, we take the difference of the logarithm for $X_t$; for $Y_t$, we apply the same transformation as proposed in \citet{stock2002forecasting}. A complete list of the commodity prices and economic indicators used in this study is provided in Appendix~\ref{appendix:list}. For all time series considered, we use monthly data spanning the January 2001 to December 2016 period. Further, based on previous empirical findings in the literature
	related to the global financial crisis of 2008 \citep{stock2017twenty}, we break the analysis into the following three sub-periods \citep{stock2017twenty}: pre-crisis (2001--2006), crisis (2007--2010) and post-crisis (2011--2016), each having sample size (available time points) 72, 48, and 72, respectively\footnote{For each individual time series, we test for its normality using data spanning the pre-crisis, crisis, and post-crisis periods, respectively. Based on the Shapiro-Wilk test, the null hypothesis of normality is not rejected for selected time series (e.g., ALUMINUM) and rejected for others (e.g., OIL). However, when testing for multivariate normality of the joint distribution of all time series resp. across the three periods, we fail to reject the null hypothesis. The latter result may be due to inadequate power of the test given the relatively small sample size.}. 
	
	We apply the same estimation procedure for each of the above three sub-periods. Starting with the calibration equation, we estimate the factor hyperplane $\Theta$ and the sparse regression coefficient matrix $\Gamma$, then extract the factors based on the estimated factor hyperplane under the (IR) condition. For each of the three sub-periods, 4, 3, and 3 factors are respectively identified based on the PIC criterion, with the key variable loadings (collapsed into categories) on each extracted factor listed in Table~\ref{table:factor_composition}, after adjusting for $\Gamma X_t$. 
	\begin{table}[!ht]
		\centering\scriptsize
		\caption{Composition of the factors identified for three sub-periods. $+$, $-$ and $*$ respectively stand for positive (all economic indicators in that category have a positive sign in $\Lambda$), negative and mixed (sign) contribution.}\label{table:factor_composition}
		\begin{tabular}{l|cccc|ccc|ccc}
			\specialrule{.1em}{.05em}{.05em} 
			& \multicolumn{4}{c|}{pre-crisis} & \multicolumn{3}{c|}{crisis} & \multicolumn{3}{c}{post-crisis} \\
			&F1	&F2	&F3	&F4	&F1	&F2	&F3	&F1	&F2	&F3	\\ \hline
			bond return 			&$-$&	&$+$&$+$&$-$&$+$&	&	&	&$-$ \\
			economic output 		&$+$&	&	&	&	&	&$+$&	&$+$&	\\
			equity return 			&$+$&	&	&	&$-$&$-$&	&$-$&&$+$\\
			interest/exchange rate	&	&	&$*$&	&	&	&	&$*$&	&\\
			labor 					&	&$+$&	&	&$-$&	&$-$&	&	&	\\
			money \& credit 		&	&	&$+$&	&$+$&	&	&	&$+$&\\
			price index 			&	&$+$&	&	&	&	&$+$&	&	&$-$ \\
			trade 					&	&$-$&	&	&	&$*$&	&$*$&	&\\
			\specialrule{.1em}{.05em}{.05em} 
		\end{tabular}
	\end{table}
	Based on the composition of the factors, we note that the factors summarize both the macroeconomic environment and also capture information from the secondary market (bond \& equity return), as suggested by economic analysis of potential
	contributors to commodity price movements \citep{frankel2008effect,frankel2014effects}. Hence, the obtained factors
	summarize the necessary information to include in the VAR system that examines commodity price interlinkages over time.
	Further, across all three periods considered, Economic Output and Money \& Credit indicators contribute positively to the factor composition. In particular, the positive contribution from the M2 measure of money supply for the US during the crisis period and that from the Fed Funds Rate post crisis are pronounced; hence, the estimated factors strongly reflect the effect of the Quantitative Easing policy adopted by the US central bank. The contribution of the other categories are mixed, with that from bond returns being noteworthy due to their role as a proxy for long-term interest rates, which impact both the cost of investment in increasing production capacity and on holding inventories, as well as on the composition of asset 
	portfolios across a range of investment possibilities (stocks, bonds, commodities, etc.). 
	
	Next, using these estimated factors, we fit a sparse $\VAR(2)$ model to the augmented $(\widehat{F}^\top_t,X^\top_t)^\top$ system. The estimated transition matrices are depicted in Figures~\ref{fig:pre-crisis} to~\ref{fig:post-crisis} as networks\footnote{In all three figures, the left panel corresponds to $\widehat{A}^{(1)}$ and the right panel corresponds to $\widehat{A}^{(2)}$. Node sizes are proportional to node weighted degrees. Positive edges are in red and negative edges are in blue. Edges with higher saturation have larger magnitudes.}. It is apparent that the factors play an important role, both as emitters and receivers. The effects from the first lag are generally stronger to that from the second one. In particular, focusing on the first lag, the dominant nodes in the system have shifted over time from (OIL, SOYBEANS, ZINC) pre crisis to (SUGAR, WHEAT, COPPER) during the crisis, then to (OIL, SOYBEANS, RICE) post crisis. Based on node weighted degree, the role of OIL is dominant in both pre- and post-crisis periods, but is much weaker during the crisis. 
	
	%%%% Figures of Real Data
	\begin{figure}[H]
		\centering
		\caption{Estimated transition matrices for Pre-crisis period.}\label{fig:pre-crisis}\vspace*{-2mm}
		\begin{boxedminipage}{0.47\textwidth}
			\includegraphics[scale=0.4]{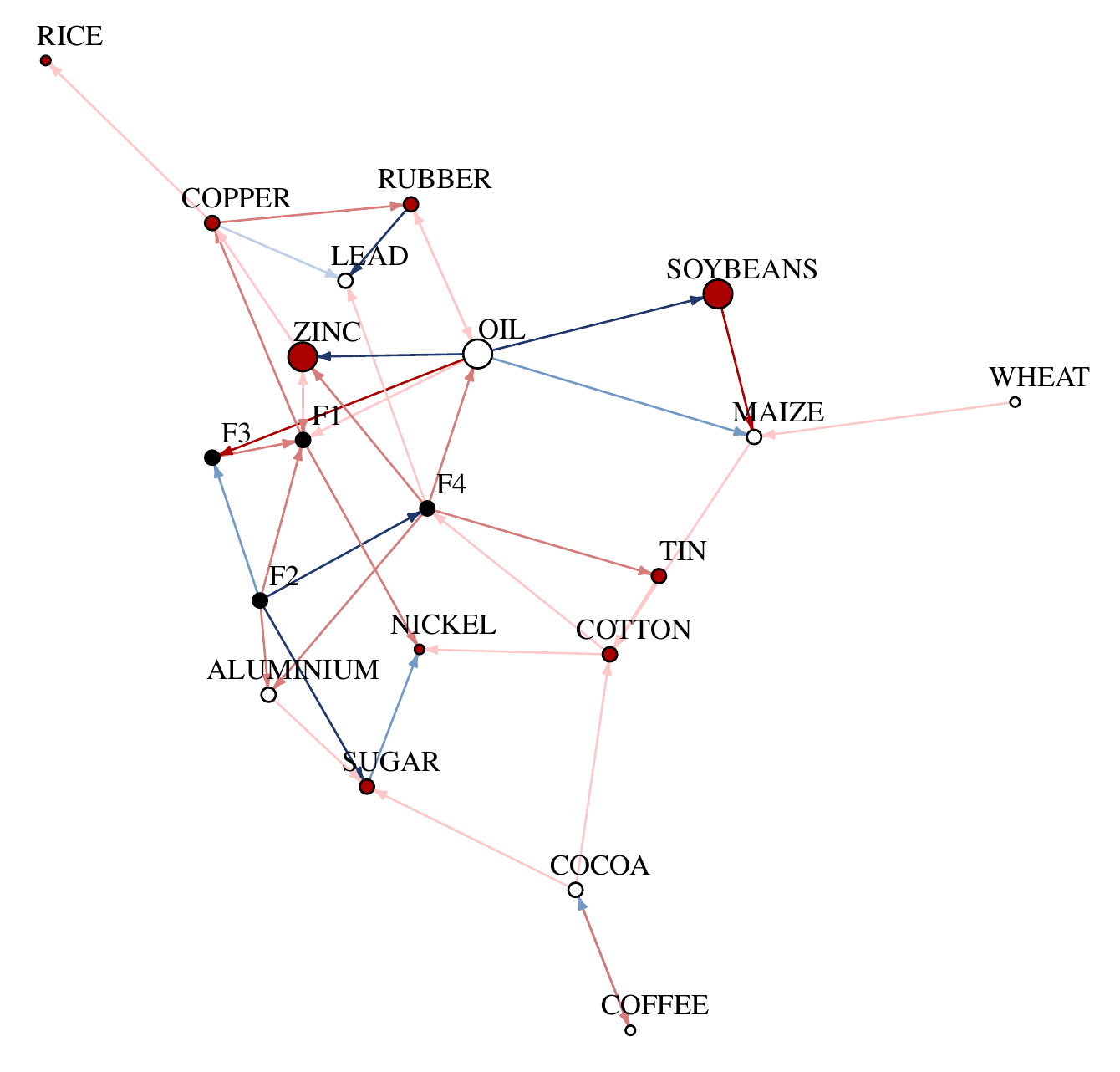}
		\end{boxedminipage}\quad 
		\begin{boxedminipage}{0.47\textwidth}
			\includegraphics[scale=0.398]{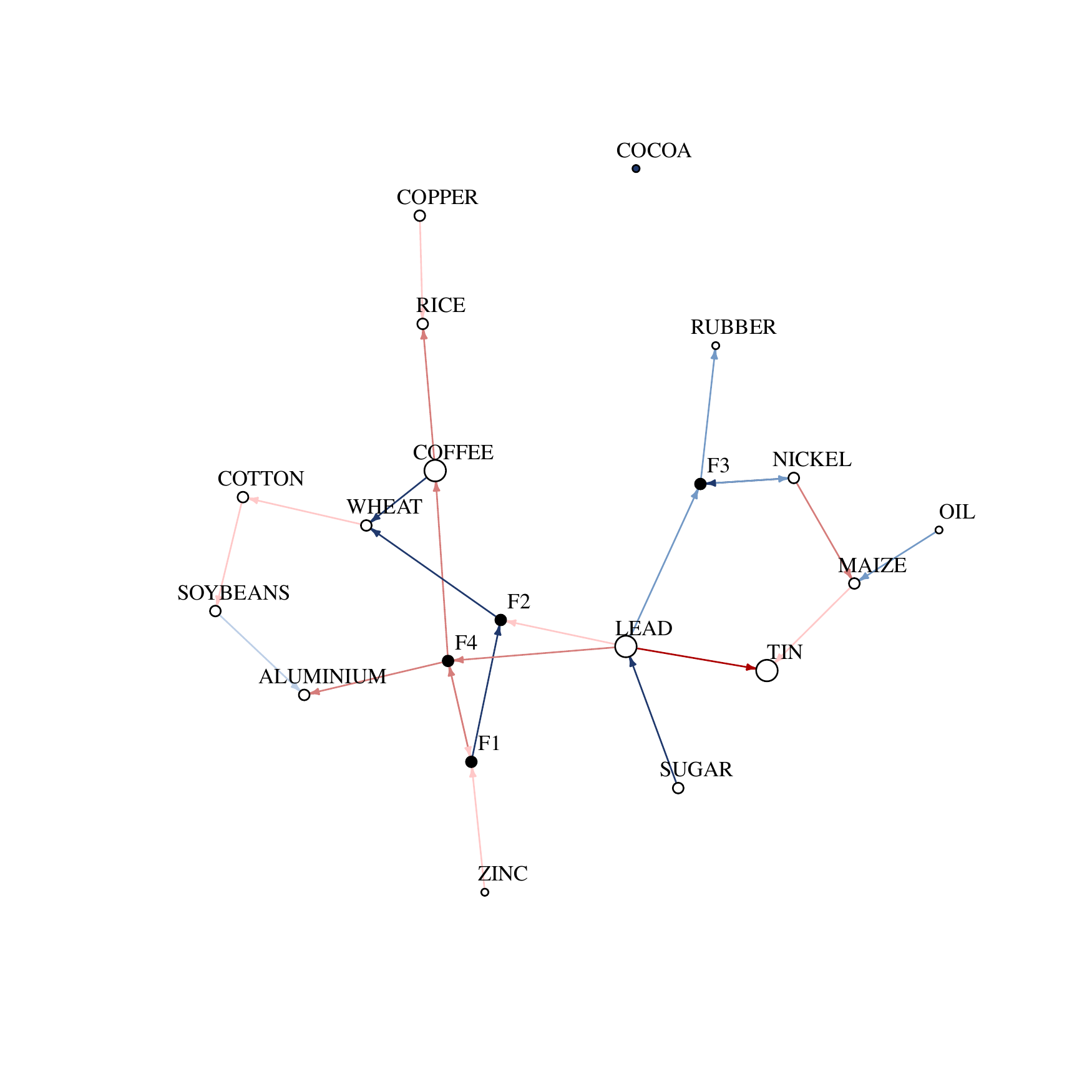}
		\end{boxedminipage}
	\end{figure}
	\begin{figure}[h]
		\centering
		\caption{Estimated transition matrices for the Crisis period. }\label{fig:crisis}\vspace*{-2mm}
		\begin{boxedminipage}{0.47\textwidth}
			\includegraphics[scale=0.4]{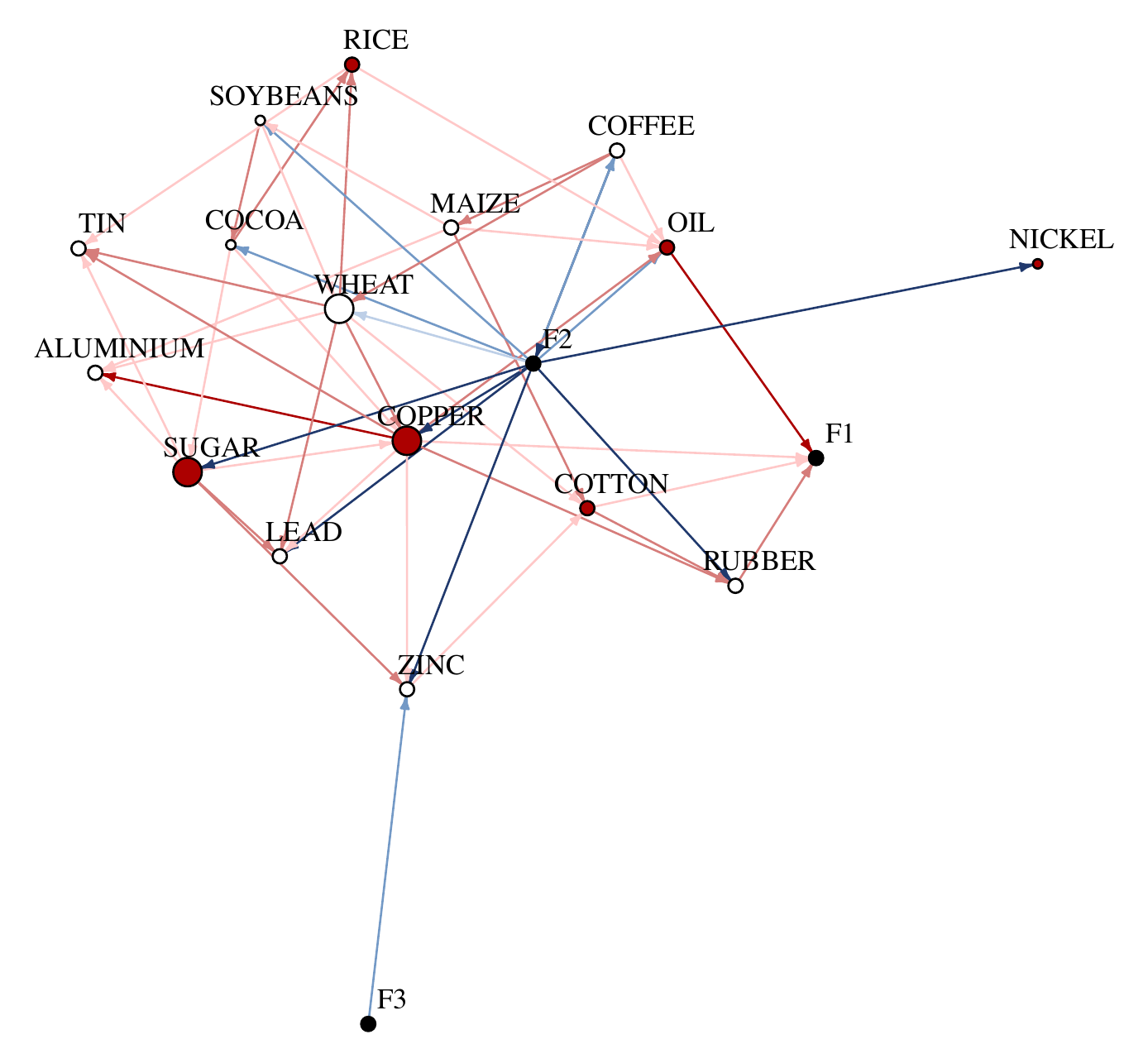}
		\end{boxedminipage}\quad 
		\begin{boxedminipage}{0.47\textwidth}
			\includegraphics[scale=0.385]{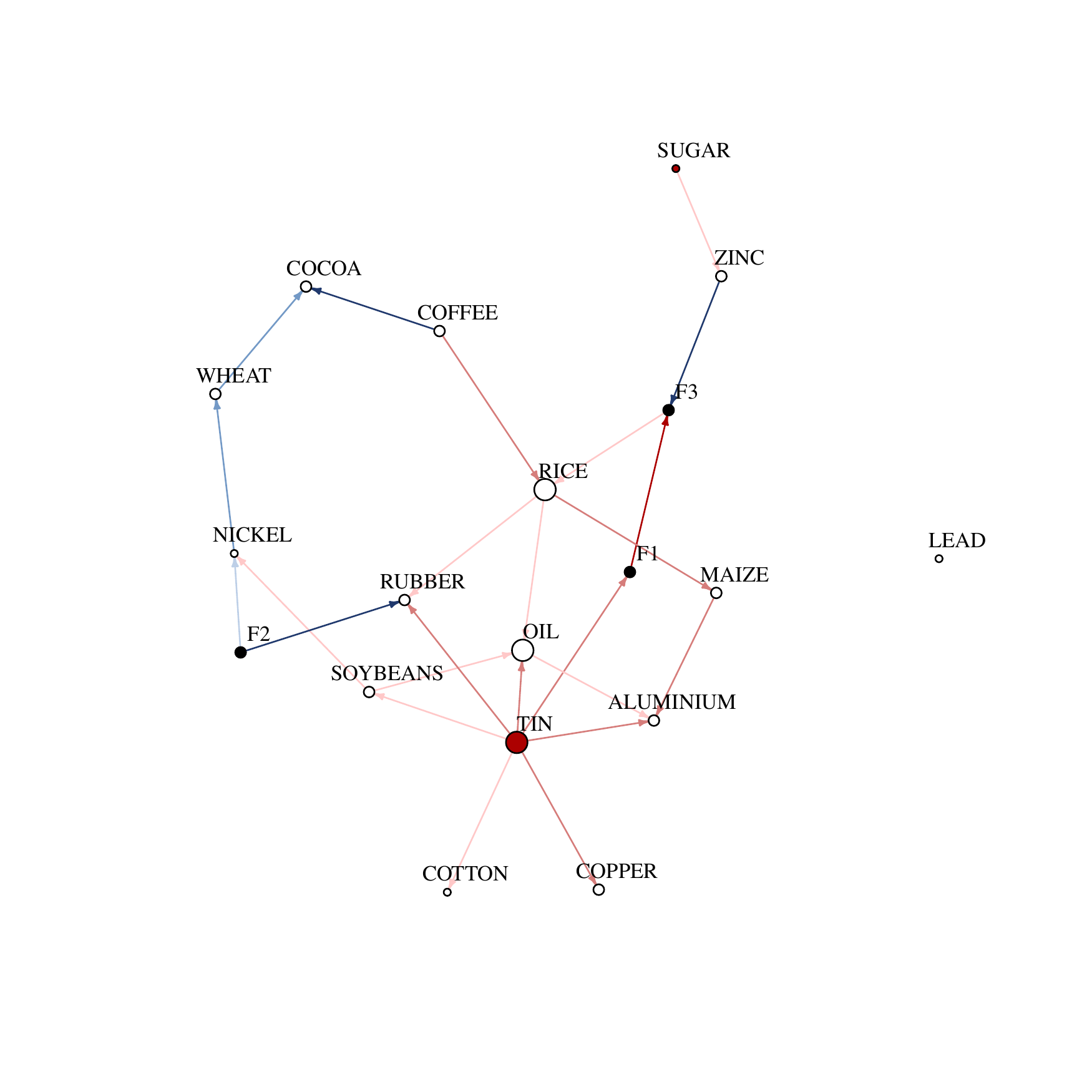}
		\end{boxedminipage}\vspace*{2mm}
		
		\caption{Estimated transition matrices for Post-crisis period}\label{fig:post-crisis}\vspace*{-2mm}
		\begin{boxedminipage}{0.47\textwidth}
			\includegraphics[scale=0.4]{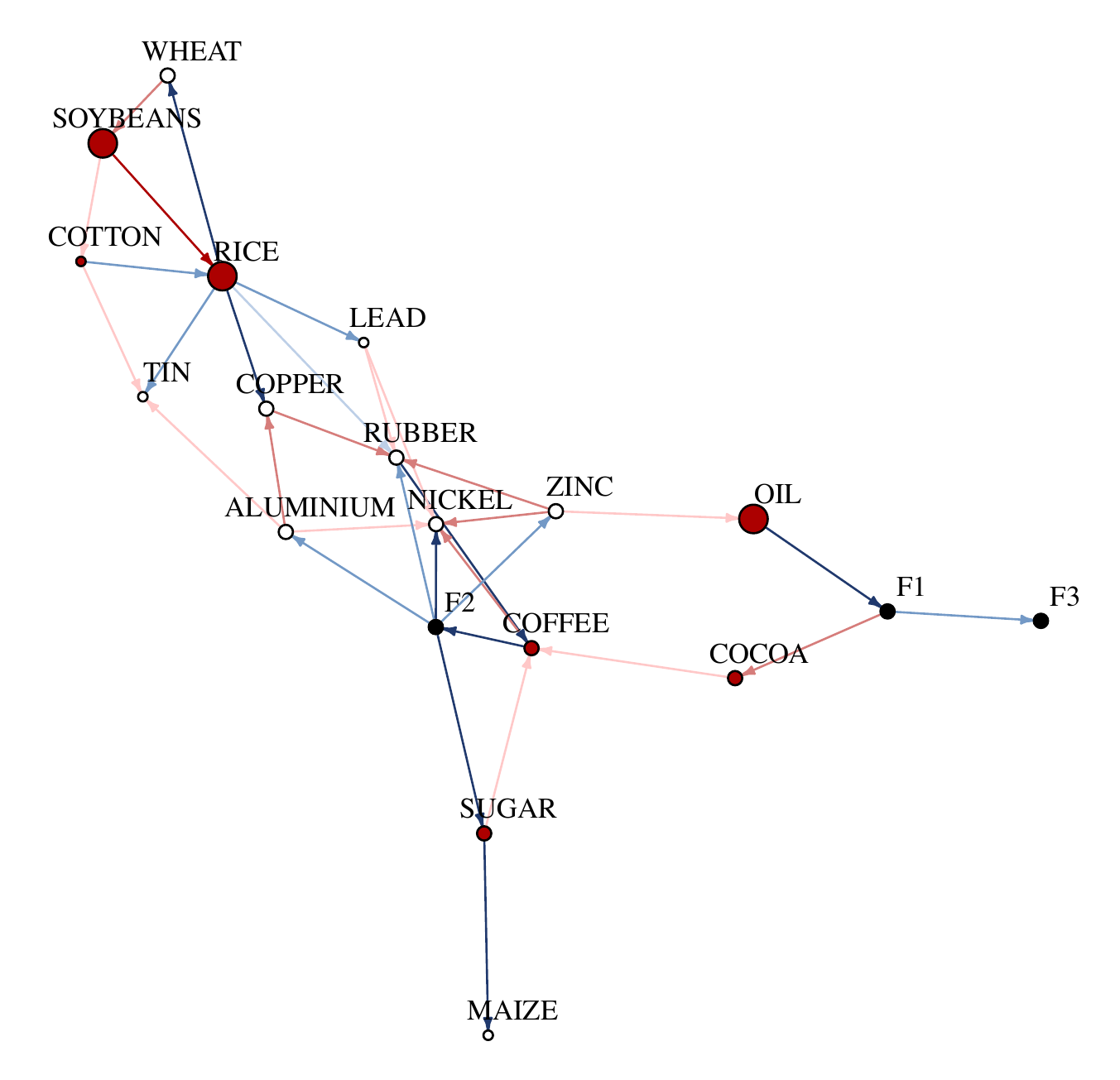}
		\end{boxedminipage}\quad 
		\begin{boxedminipage}{0.47\textwidth}
			\includegraphics[scale=0.4]{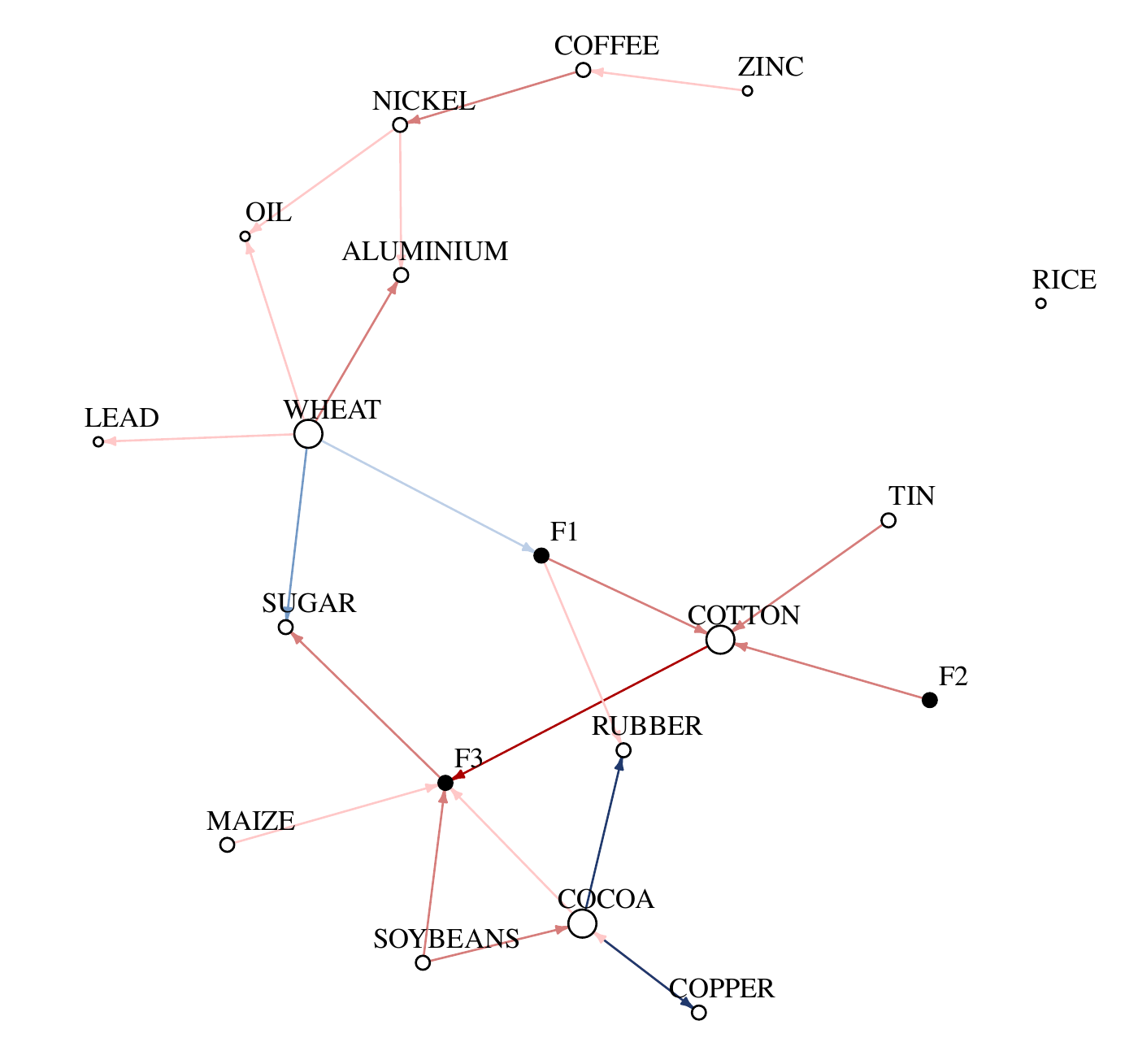}
		\end{boxedminipage} 
	\end{figure}
	
	Another key feature of the interlinkage networks is their increased connectivity during the crisis period, vis-a-vis the pre- and post-crisis periods. The same empirical finding has been noted for stock returns \citep[see][and references therein]{lin2017regularized}. Before the global financial crisis of 2008, commodity prices were fast rising primarily due to
	increased demand from China. Specifically, as Chinese industrial production quadrupled between 2001 and 2011, its consumption
	of industrial metals (Copper, Zinc, Aluminum, Lead) increased by 330\%, while its oil consumption by 98\%. This strong demand shock led to a sharp rise in these commodity prices, particularly accentuated beginning in 2006 (the onset of the crisis
	period considered in our analysis), briefly disrupted with a quick plunge of commodity prices in 2008 and their subsequent recovery in the ensuing period until late 2010, when demand from China subsided, which coupled with weak demand from
	the EU, Japan and the US in the aftermath of the crisis created an oversupply that put downward pressure on prices.
	These events induce strong inter-temporal and cross-temporal correlations amongst commodity prices, and hence are reflected in
	their estimated interlinkage network.

	%%%%%%%%%%%%%%%%%%%%%%%%%%%%%%%%%%%%%%%
	%%% Section 6: Discussion
	%%%%%%%%%%%%%%%%%%%%%%%%%%%%%%%%%%%%%%%
	\section{Discussion}\label{sec:discussion}
	This paper considered the estimation of FAVAR model under the high-dimensional scaling. It introduced
	an identifiability constraint (IR+Compactness) that is suitable for high-dimensional settings, and when such a constraint is incorporated in the optimization problem based upon the calibration equation, the global optimizer corresponds to model parameter estimates with bounded statistical errors. This development also allows for accurate estimation of the transition matrices of the VAR system, despite the plug-in factor block contains error due to the fact that it is an estimated quantity. Extensive numerical work illustrates the overall good performance of the proposed empirical implementation procedure, but also illustrates that the imposed constraint is not particularly stringent, especially in settings where the coefficient matrix $\Gamma$ of the observed predictor variables in the calibration equation exhibits sufficient level of sparsity.
	
	The key advantage of the FAVAR model is that it can leverage information from a large number of variables, while modeling the cross-temporal dependencies of a smaller number of them that are of primary interest to the analyst. 
	
	Recall that the nature of the FAVAR model results in estimating the transition matrix of a VAR system with one block of the observations (factors) being an estimated quantity, rather than conducting the estimation based on observed samples. Similar in flavor problems have been examined in the high-dimensional iid setting \citep[e.g.][]{loh2012high}, as well as low dimensional time series settings; for example, \citet{chanda1996asymptotic} examine parameter estimation of a
	univariate autoregressive process with error-in-variables and in more recent work \citet{komunjer2014measurement} investigate parameter identification of VAR-X and dynamic panel VAR models subject to measurement errors.

	\vskip 0.2in
	\bibliographystyle{chicago}
	\bibliography{ref_bib}
	
	% Manual newpage inserted to improve layout of sample file - not
	% needed in general before appendices/bibliography.
	\clearpage
	\appendix
	\numberwithin{lemma}{section}
	
	%%%%%%%%%%%%%%%%%%%%%%%%%%%%%%%%%%%%%%
	%%%%%% Proof of Theorem & Proposition
	%%%%%%%%%%%%%%%%%%%%%%%%%%%%%%%%%%%%%%
	\section{Proofs for Theorems and Propositions}\label{sec:proof-thm-prop}
	
	This section is divided into two parts. In the first part, we provide proofs for the proposition and theorem related to Stage I estimates, i.e., $\widehat{\Theta}$ and $\widehat{\Gamma}$. In the second part, we give proofs for the statements related to Stage II estimates, namely $\widehat{A}$, with an emphasis on how to obtain the final high probability error bound through properly conditioning on related events. 
	
	\paragraph{Part 1.} Proofs for the $\widehat{\Theta}$ and $\widehat{\Gamma}$ estimates. 
	\begin{proof}[Proof of Proposition~\ref{prop:Theta-Gamma-fixed}]
		Using the optimality of $(\widehat{\Gamma},\widehat{\Theta})$ and the feasibility of $(\Gamma^\star,\Theta^\star)$, the following {\em basic inequality} holds:
		\begin{align}\label{eqn:basic1}
		\frac{1}{2n} \smalliii{\mathbf{X}\Delta_{\Gamma}^\top + \Delta_{\Theta}}_{\F}^2 \leq \frac{1}{n}\Big(\llangle \Delta_\Gamma^\top,\mathbf{X}^\top \mathbf{E}\rrangle + \llangle \Delta_{\Theta},\mathbf{E}\rrangle  \Big) + \lambda_\Gamma\Big(\smallii{\Gamma^\star}_1 - \smallii{\widehat{\Gamma}}_1 \Big),
		\end{align}
		which after rearranging terms gives
		\begin{align}\label{eqn:basic2}
		\begin{split}
		\frac{1}{2n}\smalliii{\mathbf{X}\Delta_{\Gamma}^\top}_{\F}^2 + &\frac{1}{2}\smalliii{\Delta_\Theta/\sqrt{n}}_{\F}^2  \leq  \frac{1}{n}\llangle \mathbf{X}\Delta_{\Gamma}^\top,\widehat{\Theta}-\Theta^\star \rrangle \\
		&+ \frac{1}{n} \Big( \llangle \Delta_{\Gamma}^\top,\mathbf{X}^\top\mathbf{E}\rrangle + \llangle \Delta_{\Theta},\mathbf{E} \rrangle\Big) + \lambda_\Gamma\Big(\smallii{\Gamma^\star}_1 - \smallii{\widehat{\Gamma}}_1 \Big).
		\end{split}
		\end{align}
		The remainder of the proof proceeds in three steps: in Step (i), we obtain a lower bound for the left-hand-side (LHS) leveraging the RSC condition; in Step (ii), an upper bound for the right-hand-side (RHS) based on the designated choice of $\lambda_\Gamma$ is derived; in Step (iii), the two sides are aligned to yield the desired error bound after rearranging terms.
		
		To complete the proof, we first define a few quantities that are associated with the support set of $\Gamma$ and its complement:
		\begin{equation*}
		\begin{split}
		\mathbb{S} &:= \big\{\Delta \in \R^{q\times p_2}|\Delta_{ij}=0 \text{ for }(i,j)\notin S_{\Gamma^\star} \big\}, \\
		\mathbb{S}^c &:= \big\{\Delta \in \R^{q\times p_2}|\Delta_{ij}=0 \text{ for }(i,j)\in S_{\Gamma^\star} \big\},
		\end{split}
		\end{equation*}
		where $S_{\Gamma^\star}$ is the support of $\Gamma^\star$. Further, define $\Delta_{\mathbb{S}}$ and $\Delta_{\mathbb{S}^c}$  as
		\begin{equation*}
		\Delta_{\mathbb{S},ij} = 1\{(i,j)\in S_{\Gamma^\star}\}\Delta_{ij}, \qquad \Delta_{\mathbb{S}^c,ij} = 1\{(i,j)\in S^c_{\Gamma^\star}\}\Delta_{ij}, 
		\end{equation*}
		and note that they satisfy
		\begin{equation*}
		\Delta = \Delta_{\mathbb{S}} + \Delta_{\mathbb{S}^c}, \qquad \|\Delta\|_1 = \|\Delta_{\mathbb{S}}\|_1 + \|\Delta_{\mathbb{S}^c}\|_1,
		\end{equation*}
		and
		\begin{equation}\label{eqn:BSnorms}
		\|\Delta_{\mathbb{S}}\|_1 \leq \sqrt{s} \smalliii{\Delta_{\mathbb{S}}}_{\F} \leq \sqrt{s_{\Gamma^\star}} \smalliii{\Delta}_{\F}.
		\end{equation}
		
		\medskip\noindent Step (i). Since $\mathbf{X}$ satisfies the RSC condition, the first term on the LHS of~\eqref{eqn:basic2} is lower bounded by 
		\begin{equation}\label{eqn:lowerRSC}
		\frac{\alpha_{\text{RSC}}^{\mathbf{X}}}{2}\smalliii{\Delta_\Gamma}_{\F}^2 - \tau_{\mathbf{X}} \smallii{\Delta_\Gamma}_1^2. 
		\end{equation}
		To get a lower bound for~\eqref{eqn:lowerRSC}, consider an upper bound for $\smallii{\Delta_\Gamma}_1$ with the aid of~\eqref{eqn:basic1}. Specifically, for the first two terms in the RHS of~\eqref{eqn:basic1}, by H\"{o}lder's inequality, the following inequalities hold for the inner products:
		\begin{equation}\label{eqn:holder}
		\begin{split}
		\llangle \Delta_\Gamma^\top,\mathbf{X}^\top\mathbf{E} \rrangle & \leq \|\Delta_\Gamma\|_1\|\mathbf{X}^\top \mathbf{E}\|_\infty\\
		\llangle \Delta_\Theta,\mathbf{E} \rrangle & \leq \smalliii{\Delta_\Theta}_*\smalliii{\mathbf{E}}_{op} = n\smalliii{\Delta_{\Theta}}_*\Lambda^{1/2}_{\max}(S_{\mathbf{E}});
		\end{split}
		\end{equation}
		for the last term, since
		\begin{equation*}
		\begin{split}
		\|\widehat{\Gamma}\|_1 = \| \Gamma^\star_\mathbb{S} + \Gamma^\star_{\mathbb{S}^c} + \Delta_{\Gamma|\mathbb{S}} + \Delta_{\Gamma|\mathbb{S}^c} \|_1 &= \| \Gamma^\star_\mathbb{S} +  \Delta_{\Gamma|\mathbb{S}}\|_1 + \|\Delta_{\mathbb{S}|\mathbb{S}^c} \|_1 \\
		&\geq \|\Gamma^\star_\mathbb{S}\|_1 - \|\Delta_{\Gamma|\mathbb{S}}\|_1 + \|\Delta_{\Gamma|\mathbb{S}^c} \|_1,
		\end{split}
		\end{equation*}
		the following inequality holds:
		\begin{equation}\label{eqn:B1normdiff}
		\|\Gamma^\star\|_1 -\|\widehat{\Gamma}\|_1 \leq \|\Delta_{\Gamma|\mathbb{S}}\|_1 - \|\Delta_{\Gamma|\mathbb{S}^c} \|_1.
		\end{equation} 
		Using the non-negativity of the RHS in~\eqref{eqn:basic1}, by choosing 
		\begin{equation}\label{eqn:lambda1}
		\lambda_\Gamma\geq \max\Big\{ 2\|\mathbf{X}^\top \mathbf{E}/n\|_\infty,~\Lambda^{1/2}_{\max}(S_{\mathbf{E}})\Big\},
		\end{equation}
		the following inequality holds:
		\begin{equation*}
		\begin{split}
		0\leq \frac{\lambda_\Gamma}{2}\|\Delta_{\Gamma}\|_1 + &\lambda_\Gamma\smalliii{\Delta_{\Theta}/\sqrt{n}}_* + \lambda_\Gamma(\|\Delta_{\Gamma|\mathbb{S}}\|_1 - \|\Delta_{\Gamma|\mathbb{S}^c} \|_1) \\
		&= \frac{3\lambda_\Gamma}{2}\|\Delta_{\Gamma|\mathbb{S}}\|_1 - \frac{\lambda_\Gamma}{2}\|\Delta_{\Gamma|\mathbb{S}^c}\|_1 + \lambda_\Gamma\smalliii{\Delta_{\Theta}/\sqrt{n}}_*.
		\end{split}
		\end{equation*}
		Since $\Delta_{\Theta}=\widehat{\Theta}-\Theta^\star$ has rank at most $p_1+r$, $\smalliii{\Delta_{\Theta}/\sqrt{n}}_* \leq \sqrt{p_1+r}\smalliii{\Delta_{\Theta}/\sqrt{n}}_{\F}$. It follows that 
		\begin{equation*}
		\begin{split}
		\frac{\lambda_\Gamma}{2}\|\Delta_{\Gamma|\mathbb{S}^c}\|_1  &\leq \lambda_\Gamma\sqrt{p_1+r}\smalliii{\Delta_{\Theta}/\sqrt{n}}_{\F} + \frac{3\lambda_\Gamma}{2}\|\Delta_{\Gamma|\mathbb{S}}\|_1,\\
		\frac{\lambda_\Gamma}{2}\|\Delta_{\Gamma|\mathbb{S}}\|_1 +\frac{\lambda_\Gamma}{2}\|\Delta_{\Gamma|\mathbb{S}^c}\|_1  &\leq \lambda_\Gamma\sqrt{p_1+r}\smalliii{\Delta_\Theta/\sqrt{n}}_{\F} + \frac{3\lambda_\Gamma}{2}\|\Delta_{\Gamma|\mathbb{S}}\|_1 + \frac{\lambda_\Gamma}{2}\|\Delta_{\Gamma|\mathbb{S}}\|_1, \\
		\|\Delta_{\Gamma}\|_1 & \leq \sqrt{4(p_1+r)}\smalliii{\Delta_\Theta/\sqrt{n}}_{\F} + 4\|\Delta_{\Gamma|\mathbb{S}}\|_1 \\
		& \leq \sqrt{4(p_1+r)} \smalliii{\Delta_\Theta/\sqrt{n}}_{\F} + 4\sqrt{s_{\Gamma^\star}}\smalliii{\Delta_{\Gamma}}_{\F},
		\end{split}
		\end{equation*}
		where the second line is obtained by adding $\frac{\lambda_\Gamma}{2}\|\Delta_{\Gamma|\mathbb{S}}\|_1$ on both sides, and the last inequality uses~\eqref{eqn:BSnorms}. Further, by the Cauchy-Schwartz inequality, we have
		\begin{equation*}
		\smallii{\Delta_{\Gamma}}_1\leq \sqrt{(\sqrt{4(p_1+r)})^2 + (4\sqrt{s})^2} \sqrt{\smalliii{\Delta_{\Gamma}}_{\F}^2 + \smalliii{\Delta_\Theta/\sqrt{n}}_{\F}^2},
		\end{equation*}
		that is,
		\begin{equation}\label{eqn:DeltaB1bound}
		\|\Delta_{\Gamma}\|_1^2 \leq 4\big(p_1+r + 4s\big) \Big[ \smalliii{\Delta_{\Gamma}}_{\F}^2 + \smalliii{\Delta_\Theta/\sqrt{n}}_{\F}^2\Big]. 
		\end{equation}
		Combine~\eqref{eqn:lowerRSC} and~\eqref{eqn:DeltaB1bound}, a lower bound for the LHS of~\eqref{eqn:basic2} is given by 
		\begin{equation}\label{eqn:LHSlow2-nc}
		\Big(\frac{\alpha_{\text{RSC}}^{\mathbf{X}}}{2} - 4\tau_{\mathbf{X}}\big(p_1+r + 4s \big)\Big)\smalliii{\Delta_{\Gamma}}_{\F}^2 + \Big(\frac{1}{2}- 4\tau_{\mathbf{X}}\big(p_1+r + 4s \big) \Big) \smalliii{\Delta_\Theta/\sqrt{n}}_{\F}^2.
		\end{equation}
		
		\medskip\noindent Step (ii). For the first term in the RHS of~\eqref{eqn:basic2}, using the duality of $\ell_1$-$\ell_\infty$ dual norm pair, the following inequality holds:
		\begin{equation}\label{eqn:firstterm}
		\begin{split}
		\frac{1}{n}|\llangle \mathbf{X}\Delta_{\Gamma}^\top,\widehat{\Theta}-\Theta^\star\rrangle|  &\leq \frac{1}{n}|\llangle \Delta_{\Gamma}^\top,\mathbf{X}^\top\widehat{\Theta}\rrangle| + \frac{1}{n}|\llangle \Delta_{\Gamma}^\top,\mathbf{X}^\top\Theta^\star\rrangle| \\
		&\leq  \|\Delta_{\Gamma}\|_1 \|\mathbf{X}^\top\widehat{\Theta}/n\|_\infty +  \|\Delta_{\Gamma}\|_1  \|\mathbf{X}^\top\Theta^\star/n\|_\infty\\
		& \leq \|\Delta_{\Gamma}\|_1 \cdot\smallii{\mathbf{X}/n}_1\cdot \|\widehat{\Theta}\|_\infty + \|\Delta_{\Gamma}\|_1 \cdot\smallii{\mathbf{X}/n}_1\cdot \|\Theta^\star\|_\infty.
		\end{split}
		\end{equation}
		Using the fact that both $\Theta^\star$ and $\widehat{\Theta}$ are feasible and satisfy the box constraint $\|\Theta\|_\infty \leq \frac{\phi}{\kappa(\mathcal{R}^*)\smalliii{\mathbf{X}/\sqrt{n}}_{\op}}$, it follows that 
		\begin{equation*}
		\smalliii{\mathbf{X}/n}_1\|\widehat{\Theta}\|_\infty \leq \frac{\phi}{\kappa(\mathcal{R}^*)} \quad \text{and} \quad \smalliii{\mathbf{X}/n}_1\|\Theta^\star\|_\infty \leq \frac{\phi}{\kappa(\mathcal{R}^*)},
		\end{equation*}
		Consequently, \eqref{eqn:firstterm} is upper bounded by $\tfrac{2\phi}{\kappa(\mathcal{R}^*)}\cdot \|\Delta_{\Gamma}\|_1$. By additionally requiring $\lambda_\Gamma$ to satisfy 
		\begin{equation*}
		\lambda_\Gamma\geq 4\phi/\kappa(\mathcal{R}^*),
		\end{equation*}
		and combining~\eqref{eqn:holder}, \eqref{eqn:B1normdiff} and~\eqref{eqn:lambda1}, the following upper bound holds for the RHS of~\eqref{eqn:basic2}:
		\begin{align}
		\tfrac{\lambda_\Gamma}{2}\smallii{\Delta_{\Gamma}}_{1} + \tfrac{\lambda_\Gamma}{2}\|\Delta_{\Gamma}\|_1 &+ \lambda_\Gamma\sqrt{p_1+r}\smalliii{\Delta_\Theta/\sqrt{n}}_{\F} + \lambda_\Gamma(\|\Delta_{\Gamma|\mathbb{S}}\|_1 - \|\Delta_{\Gamma|\mathbb{S}^c} \|_1) \notag \\
		&\leq \lambda_\Gamma\Big( 2\sqrt{s_{\Gamma^\star}}\smalliii{\Delta_{\Gamma}}_{\F} + \sqrt{p_1+r}\smalliii{\Delta_\Theta/\sqrt{n}}_{\F} \Big) \notag \\ 
		&\leq \lambda_\Gamma\sqrt{4s_{\Gamma^\star} + p_1+r} \sqrt{\smalliii{\Delta_{\Gamma}}_{\F}^2 +\smalliii{\Delta_\Theta/\sqrt{n}}_{\F}^2 }. \label{eqn:RHSup-nc}
		\end{align}
		
		\medskip\noindent Step (iii). Combine~\eqref{eqn:LHSlow2-nc} and~\eqref{eqn:RHSup-nc}, by rearranging terms and requiring 
		$\tau_{\mathbf{X}}$ to satisfy $\tau_{\mathbf{X}}( p_1+r + 4s_{\Gamma^\star}) < \min\{\alpha_{\text{RSC}}^{\mathbf{X}},1\}/16$, the following inequality holds:
		\begin{equation*}
		\begin{split}
		\frac{\min\{\alpha_{\text{RSC}}^{\mathbf{X}},1\}}{4}&\Big(\smalliii{\Delta_{\Gamma}}_{\F}^2  + \smalliii{\Delta_{\Theta}/\sqrt{n}}_{\F}^2 \Big)  \\
		&\leq  \lambda_\Gamma\sqrt{4s_{\Gamma^\star} + p_1+r} \sqrt{\smalliii{\Delta_{\Gamma}}_{\F}^2 +\smalliii{\Delta_\Theta/\sqrt{n}}_{\F}^2 },
		\end{split}
		\end{equation*}
		which gives
		\begin{equation*}
		\smalliii{\Delta_{\Gamma}}_{\F}^2 + \smalliii{\Delta_{\Theta}/\sqrt{n}}_{\F}^2 \leq \frac{16\lambda_\Gamma^2\Big( p_1+r + 4s_{\Gamma^\star} \Big) }{\min\{\alpha_{\text{RSC}}^{\mathbf{X}},1\}^2}.
		\end{equation*}
	\end{proof}
	
	\begin{proof}[Proof sketch for Theorem~\ref{thm:info}]
		First we note that the requirement on the tuning parameter $\lambda_\Gamma$ determines the leading term in the ultimate high probability error bound. By Lemma~\ref{lemma:Emax}, to have adequate concentration for the leading eigenvalue $\Lambda_{\max}(S_{\mathbf{E}})$ of the sample covariance matrices, the requirement imposed on the sample size makes $\sqrt{\log(p_2q)/n}$ a lower order term relative to $\Lambda^{1/2}_{\max}(\Sigma_e)$, with the latter being an $\mathcal{O}(1)$ term. Consequently, the choice of the tuning parameter effectively becomes 
		\begin{equation*}
		\lambda_\Gamma\asymp \mathcal{O}(1),
		\end{equation*}
		The conclusion readily follows as a result of Proposition~\ref{prop:Theta-Gamma-fixed}. 
	\end{proof}
	
	\paragraph{Part 2.} This part contains the proofs for the results related to $\widehat{A}$.
	
	\begin{proof}[Proof sketch for Proposition~\ref{prop:A-fixed}] The result follows along the lines of \citet[Proposition 4.1]{basu2015estimation}. In particular, in \citet{basu2015estimation}, the authors consider estimation of $A$ based on the directly observed samples of the $X_t$ process, with the restricted eigenvalue (RE) condition imposed on the corresponding Hessian matrix and the tuning parameter selected in accordance to the deviation bound defined in Definition~\ref{defn:db}.  
		
		On the other hand, in the current setting, estimation of the transition matrix is based on quantities that are surrogates for the true sample quantities. Consequently, as long as the required conditions are imposed on their counterparts associated with these surrogate quantities, the conclusion directly follows.  
		
		Finally, we would like to remark that the RSC condition used is in essence identical to the RE condition required in \citet{basu2015estimation} in the setting under consideration. 
	\end{proof}
	
	\begin{proof}[Proof of Theorem~\ref{thm:A}]
		First, we note that under (IR), by Theorem~\ref{thm:info}, there exists some constant $K_1$ that is independent of $n,p_1,p_2$ and $q$ such that the following event holds with probability at least $\mathsf{P}_1:=1-c_1\exp(-c_2\log(p_2q))$:
		\begin{equation*}
		\mathcal{E}_1 := \Big\{\smalliii{\Delta_{\mathbf{F}}/\sqrt{n}}_{\F} \leq K_1 \Big\}.
		\end{equation*}
		Conditional on $\mathcal{E}_1$, by Proposition~\ref{prop:A-fixed}, Lemmas~\ref{lemma:RSCZ} and~\ref{lemma:deviation-A}, with high probability, the following event holds:
		\begin{equation*}
		\mathcal{E}_2 := \Big\{ \smalliii{\Delta_A}_{\F} \leq \varphi(n,p_1,p_2,K_1) \Big\}, 
		\end{equation*}
		for some function $\varphi(\cdot)$ that not only depends on sample size and dimensions, but also on $K_1$, provided that the ``conditional" RSC condition is satisfied. What are left to be examined are: (i) what does $\mathcal{E}_1$ imply in terms of the RSC condition being satisfied {\em unconditionally}; and (ii) what does $\mathcal{E}_1$ imply in terms of the bound in $\mathcal{E}_2$, 
		
		Towards this end, for (i), we note that since
		\begin{equation*}
		\Lambda^{1/2}_{\max}(S_{\Delta_{\mathbf{F}_{n-1}}}) = \smalliii{\Delta_{\mathbf{F}}/\sqrt{n}}_{op} \leq \smalliii{\Delta_{\mathbf{F}}/\sqrt{n}}_{\F} \leq K_1, 
		\end{equation*}
		then as long as $C_Z$ in condition C3 satisfies $C_Z\geq c_0K_1$ with the specified $c_0\geq 6\sqrt{165\pi}$, with probability at least $P_1P_{2,\RSC}$ where we define $\mathsf{P}_{2,\RSC}:=1-c'_1\exp(-c'_2n)$, by Lemma~\ref{lemma:RSCZ} the required RSC condition is guaranteed to be satisfied with a positive curvature. For (ii), with the aid of Lemma~\ref{lemma:deviation-A}, with probability at least $\mathsf{P}_1\mathsf{P}_{2,{\text{DB}}}$ where we define $\mathsf{P}_{2,\text{DB}}:=1-c_1'\exp(-c_2'\log(p_1+p_2))$, the following bound holds for the deviation bound $C(n,p_1,p_2)$ {\em unconditionally}:\footnote{Note that it can be shown that $\smalliii{\varepsilon_n}_{\F}^2 = O(\smalliii{\Delta_{\mathbf{F}}}_{\F}^2)$} 
		\begin{equation*}
		\begin{split}
		C(&n,p_1,p_2) \leq ~C_1\big(\mathcal{M}(f_Z)+\frac{\Sigma_w}{2\pi}+\mathcal{M}(f_{Z,W})\big)\sqrt{\frac{\log(p_1+p_2)}{n}} \\ &+C_2\mathcal{M}^{1/2}(f_Z)\sqrt{\frac{\log(p_1+p_2)+\log p_1}{n}} 
		~+ C_3\Lambda^{1/2}_{\max}(\Sigma_w)\sqrt{\frac{\log(p_1+p_2)}{n}} + C_4,
		\end{split}
		\end{equation*}
		where the constants $\{C_i\}$ have already absorbed the upper error bound $K_1$ of the Stage I estimates, compared with the original expression in Proposition~\ref{prop:A-fixed}. With the required sample size, the constant becomes the leading term, so that there exists some constant $K_2$ such that {\em unconditionally}:
		\begin{equation*}
		C(n,p_1,p_2)\leq K_2 \asymp \mathcal{O}(1).
		\end{equation*}
		Combine (i) and (ii), and with probability at least $\min\{\mathsf{P}_1\mathsf{P}_{2,\RSC}, \mathsf{P}_1\mathsf{P}_{2,{\text{DB}}}\}$, the bound in Theorem~\ref{thm:A} holds. 
	\end{proof}
	
	%%%%%%%%%%%%%%%%%%%%%%%%%%%%%%%%%%%
	%%%% Proof of Lemmas
	%%%%%%%%%%%%%%%%%%%%%%%%%%%%%%%%%%
	
	\section{Proof for Lemmas}\label{appendix:lemma}
	In this section, we provide proofs for the lemmas in Section~\ref{sec:theory-random}. 
	%**********************
	% PROOF OF LEMMA 1
	%**********************
	\begin{proof}[Proof of Lemma~\ref{lemma:err-Lambda}]
		Note that
		\begin{equation*}
		\begin{split}
		\widehat{\Theta} = \Theta^\star + \Delta_\Theta& = (\mathbf{F}+\Delta_{\F})(\Lambda^\star + \Delta_\Lambda)^\top \\
		\Delta_{\Theta} &= \Delta_{\F} (\Lambda^\star)^\top + \widehat{\mathbf{F}}\Delta_\Lambda^\top.
		\end{split}
		\end{equation*}
		Multiply the left inverse of $\widehat{\mathbf{F}}$ which gives 
		\begin{equation*}
		\Delta_\Lambda^\top  =  \big(\widehat{\mathbf{F}}^\top\widehat{\mathbf{F}}\big)^{-1}\widehat{\mathbf{F}}^\top  \Delta_{\Theta} + \big(\widehat{\mathbf{F}}^\top\widehat{\mathbf{F}}\big)^{-1}\widehat{\mathbf{F}}^\top  \Delta_{\mathbf{\mathbf{F}}} (\Lambda^\star)^\top.
		\end{equation*}
		Since for some generic matrix $M$, we have $\smalliii{M^{-1}}_{\F}\geq (\smalliii{M}_{\F})^{-1}$, an application of the triangle inequality gives
		\begin{equation*}
		\begin{split}
		\smalliii{\Delta_\Lambda}_{\F} & \leq \frac{\smalliii{\widehat{\mathbf{F}}}_{\F}} { \smalliii{\widehat{\mathbf{F}}^\top\widehat{\mathbf{F}} }_{\F}}\Big( \smalliii{\Delta_\Theta }_{\F} + \smalliii{\Delta_\mathbf{F} (\Lambda^\star)^\top }_{\F}\Big) \\
		& =   \frac{\smalliii{\widehat{\mathbf{F}}/\sqrt{n}}_{\F}} { \smalliii{\frac{1}{n}\widehat{\mathbf{F}}^\top\widehat{\mathbf{F}} }_{\F}} (\frac{1}{\sqrt{n}})\Big( \smalliii{\Delta_\Theta}_{\F} + \smalliii{\Delta_\mathbf{F} (\Lambda^\star)^\top }_{\F}\Big)\\
		& \leq \sqrt{p_1}\Lambda^{-1/2}_{\max}(S_{\widehat{\mathbf{F}}}) \smalliii{\Delta_\Theta/\sqrt{n} }_{\F} \Big( 1+ \smalliii{\Lambda^\star }_{\F}\Big),
		\end{split}
		\end{equation*}
		where $S_{\widehat{\mathbf{F}}}:=\frac{1}{n}\widehat{\mathbf{F}}^\top\widehat{\mathbf{F}}$, and the numerator and the denominator of $\frac{\smalliii{\widehat{\mathbf{F}}}_{\F}} { \smalliii{\widehat{\mathbf{F}}^\top\widehat{\mathbf{F}} }_{\F}}$ are respectively by 
		\begin{equation*}
		\smalliii{\widehat{\mathbf{F}}}_{\F}\leq \sqrt{p_1}\smalliii{\widehat{\mathbf{F}}}_{\op}, \qquad \smalliii{\widehat{\mathbf{F}}^\top\widehat{\mathbf{F}}}_{\F}\geq \smalliii{\widehat{\mathbf{F}}^\top\widehat{\mathbf{F}}}_{\op}.
		\end{equation*}
		Further, note that $\smalliii{\widehat{\mathbf{F}}/\sqrt{n}}^2_{\op} = \Lambda_{\max}(S_{\widehat{\mathbf{F}}}) = \smalliii{S_{\widehat{\mathbf{F}}}}_{\op}$. What remains is to obtain a lower bound for
		\begin{equation*}
		\Lambda^{1/2}_{\max}\big(S_{\widehat{\mathbf{F}}}\big) = \smalliii{ (\mathbf{F}+\Delta_{\mathbf{F}})/\sqrt{n} }_{\op}.
		\end{equation*}
		One such bound is given by 
		\begin{equation*}
		\begin{split}
		\smalliii{ (\mathbf{F}+\Delta_{\mathbf{F}})/\sqrt{n} }_{\op} \geq \smalliii{\mathbf{F}/\sqrt{n}}_{\op} - \smalliii{\Delta_{\mathbf{F}}/\sqrt{n}}_{\op} &\geq \smalliii{\mathbf{F}/\sqrt{n}}_{\op} - \smalliii{\Delta_{\mathbf{F}}/\sqrt{n}}_{\F}\\
		&\geq \Lambda^{1/2}_{\max}(S_{\mathbf{F}})- \smalliii{\Delta_{\Theta}/\sqrt{n}}_{\F}, 
		\end{split}
		\end{equation*}
		which leads to the following bound for $\smalliii{\Delta_\Lambda}_{\F}$, provided that the RHS is positive:
		\begin{equation*}
		\frac{\smalliii{\Delta_\Lambda}_{\F}}{\smalliii{\Lambda^\star}_{\F}}\leq \sqrt{p_1}\frac{\smalliii{\Delta_{\Theta}/\sqrt{n}}_{\F}}{\Lambda^{1/2}_{\max}(S_{\mathbf{F}})- \smalliii{\Delta_{\Theta}/\sqrt{n}}_{\F}}\Big(1 + 1/\smalliii{\Lambda^\star}_{\F} \Big).
		\end{equation*}
	\end{proof}
	
	%*********************
	% PROOF OF LEMMA 2
	%*********************
	\begin{proof}[Proof of Lemma~\ref{lemma:RSCX}]
		First, suppose we have
		\begin{equation}\label{eqn:RSC-stronger}
		\frac{1}{2}v' S_{\mathbf{X}} v = \frac{1}{2}v'\Big( \frac{\mathbf{X}'\mathbf{X}}{n} \Big)v\geq \frac{\alpha_{\text{RSC}}}{2} \|v\|_2^2 - \tau_n\|v\|_1^2,\quad \forall~v\in\R^p;
		\end{equation}
		then, for all $\Delta\in\R^{p\times p}$, and letting $\Delta_j$ denote its $j$th column, the RSC condition automatically holds since
		\begin{equation*}
		\begin{split}
		\frac{1}{2n}\smalliii{\mathbf{X}\Delta}_{\F}^2 = \frac{1}{2}\sum_{j=1}^q \Delta_j'\big( \tfrac{\mathbf{X}'\mathbf{X}}{n} \big) \Delta_j &\geq \frac{\alpha_{\text{RSC}}}{2} \sum_{j=1}^q \|\Delta_j\|_2^2 - \tau_n \sum_{j=1}^q \|\Delta_j\|_1^2 \\
		& \geq \frac{\alpha_{\text{RSC}}}{2}\smalliii{\Delta}_{\F}^2 - \tau_n \|\Delta\|_1^2.
		\end{split}
		\end{equation*}
		Therefore, it suffices to verify that~\eqref{eqn:RSC-stronger} holds. In \citet[Proposition 4.2]{basu2015estimation}, the authors prove a similar result under the assumption that $X_t$ is a $\text{VAR}(d)$ process. Here, we adopt the same proof strategy and state the result for a {\em more general process} $X_t$. 
		
		Specifically, by \citet[Proposition 2.4(a)]{basu2015estimation}, $\forall v\in\R^p,\|v\|\leq 1$ and $\eta >0$,
		\begin{equation*}
		\PP \Big[ \big| v'\big(S_{\mathbf{X}}-\Sigma_X(h)\big) v \big| >  2\pi\mathcal{M}(g_X)\eta \Big]\leq 2\eta \exp \Big(-cn\min\{\eta^2,\eta\} \Big). 
		\end{equation*}
		Applying the discretization in \citet[Lemma F.2]{basu2015estimation} and taking the union bound, define $\mathbb{K}(2s):=\{v\in\R^p, \|v\|\leq 1, \|v\|_0\leq 2k\}$, and the following inequality holds:
		\begin{equation*}
		\begin{split}
		\PP \Big[ \sup\limits_{v\in\mathbb{K}(2k)}\big| &v'\big(S_{\mathbf{X}}-\Sigma_X(h)\big) v \big| >  2\pi\mathcal{M}(g_X)\eta \Big] \\
		& \leq 2\exp\Big( -cn\min\{\eta,\eta^2\} + 2k \min\{\log p, \log (21 ep/2k) \} \Big). 
		\end{split}
		\end{equation*}
		With the specified $\gamma=54\mathcal{M}(g_X)/\mathfrak{m}(g_X)$, set $\eta=\gamma^{-1}$, then apply results from \citet[Lemma 12]{loh2012high} with $\Gamma=S_{\mathbf{X}}-\Sigma_X(0)$ and $\delta=\pi\mathfrak{m}(g_X)/27$, so that the following holds
		\begin{equation*}
		\frac{1}{2}v' S_{\mathbf{X}} v \geq \frac{\alpha_{\text{RSC}}}{2}\|v\|^2 - \frac{\alpha_{\text{RSC}}}{2k}\|v\|_1^2,
		\end{equation*}
		with probability at least $1-2\exp\big(-cn\min\{\gamma^{-2},1\} + 2k\log p \big)$ and note that $\min\{\gamma^{-2},1\}=\gamma^{-2}$ since $\gamma>1$. Finally, let $k= \min\{cn\gamma^{-2}/(c'\log p),1\}$ for some $c'>2$, and conclude that with probability at least $1-c_1\exp(-c_2n)$, the inequality in~\eqref{eqn:RSC-stronger} holds with 
		\begin{equation*}
		\alpha_{\text{RSC}} = \pi\mathfrak{m}(g_X), \qquad \tau_n = \alpha_{\text{RSC}}\gamma^2\frac{\log p}{2n},
		\end{equation*}
		and so does the RSC condition. 
	\end{proof}
	%**********************
	% PROOF OF LEMMA 3
	%**********************
	\begin{proof}[Proof of Lemma~\ref{lemma:boundXE}]
		We note that
		\begin{equation*}
		\frac{1}{n}\vertii{\mathbf{X}^\top \mathbf{E}}_\infty = \max\limits_{1\leq i,j\leq p} \big|e_i^\top \big(\mathbf{X}^\top \mathbf{E}/n\big)e_j\big|,
		\end{equation*}
		where $e_i$ is the $p$-dimensional standard basis with its $i$-th entry being 1. Applying \citet[Proposition 2.4(b)]{basu2015estimation}, for an arbitrary pair of $(i,j)$, the following inequality holds:
		\begin{equation*}
		\PP\Big[ \big|e_i^\top \big(\mathbf{X}^\top \mathbf{E}/n\big)e_j \big| > 2\pi \big(\mathcal{M}(g_X) + \frac{\Lambda_{\max}(\Sigma_e)}{2\pi} \big)\eta\Big] \leq 6\exp\Big( -cn \min\{\eta^2,\eta\}\Big),
		\end{equation*}	
		and note that $e_t$ is a pure noise term that is assumed to be independent of $X_t$; hence, there is no cross-dependence term to consider. Take the union bound over all $1\leq i\leq p_2,1\leq j\leq q$, and the following bound holds:
		\begin{equation*}
		\begin{split}
		\PP \Big[ \max\limits_{1\leq i\leq p_2,1\leq j\leq q} \big|e_i^\top \big(\mathbf{X}^\top \mathbf{E}/n\big)e_j\big|&  > 2\pi \big(\mathcal{M}(g_X) + \frac{\Lambda_{\max}(\Sigma_e)}{2\pi}\big)\eta\Big] \\
		& \leq 6\exp \Big( -cn \min\{\eta^2,\eta\} + \log (p_2q)\Big).
		\end{split}{}
		\end{equation*}
		Set $\eta=c'\sqrt{\log p/n}$ for $c'>(1/c)$ and with the choice of $n\gtrsim \log (p_2q)$, $\min\{\eta^2,\eta\} = \eta^2$, then with probability at least $1-c_1\exp(-c_2\log p_2q)$, there exists some $c_0$ such that the following bound holds:
		\begin{equation*}
		\frac{1}{n}\vertii{\mathbf{X}^\top \mathbf{E}}_\infty\leq c_0 \big(2\pi\mathcal{M}(g_X) + \Lambda_{\max}(\Sigma_e)\big)\sqrt{\frac{\log (p_2q)}{n}}.
		\end{equation*}
	\end{proof}
	
	%***********************
	% PROOF OF LEMMA 4
	%**********************
	\begin{proof}[Proof of Lemma~\ref{lemma:Emax}]
		For $\mathbf{E}$ whose rows are iid realizations of a sub-Gaussian random vector $e_t$, by \citet[Lemma 9]{wainwright2009sharp}, the following bound holds:
		\begin{equation*}
		\mathbb{P}\Big[ \vertiii{S_{\mathbf{E}} - \Sigma_e }_{op}\geq \Lambda_{\max}(\Sigma_e) \delta(n,q,\eta)\Big]\leq 2\exp(-n\eta^2/2), 
		\end{equation*}
		where $\delta(n,q,\eta):=2\big(\sqrt{\frac{q}{n}}+\eta\big)+\big(\sqrt{\frac{q}{n}}+\eta\big)^2$. In particular, by triangle inequality, with probability at least $1-2\exp(-n\eta^2/2)$, 
		\begin{equation*}
		\smalliii{S_{\mathbf{E}}}_{op} \leq \smalliii{\Sigma_\epsilon}_{op} + \smalliii{S_{\mathbf{E}}-\Sigma_\epsilon}_{op} \leq \Lambda_{\max}(\Sigma_\epsilon) + \Lambda_{\max}(\Sigma_\epsilon) \delta(n,q,t).
		\end{equation*}
		So for $n\gtrsim q$, by setting $\eta=1$, which yields $\delta(n,q,\eta)\leq 8$ so that with probability at least $1-2\exp(-n/2)$, the following bound holds:
		\begin{equation*}
		\Lambda_{\max}(S_{\mathbf{E}}) \leq 9 \Lambda_{\max}(\Sigma_\epsilon).
		\end{equation*}
	\end{proof}

	%**********************
	% PROOF OF LEMMA 5
	%**********************
	\begin{proof}[Proof of Lemma~\ref{lemma:Smax}]
		To prove this lemma, we use a similar strategy as in the proof of \citet[Lemma 3]{negahban2011estimation} while taking into consideration the temporal dependence present in the rows of $\mathbf{X}$. In the remainder of the proof,
		we use $p$ (instead of $p_2$) to denote generically the dimension of the process. 
		
		Let $S^p=\{u\in\mathbb{R}^p|\|u\|=1\}$ denote the $p$-dimensional unit sphere. Then, $\Lambda_{\max}(S_\mathbf{X})$ is the operator norm of $S_{\mathbf{X}}$, which has the following variational representation form:
		\begin{equation*}
		\Lambda_{\max}(S_\mathbf{X}) = \frac{1}{n}\vertiii{\mathbf{X}'\mathbf{X}}_{\text{op}} = \frac{1}{n} \sup\limits_{u\in S^{p}}u'\mathbf{X}'\mathbf{X}u.
		\end{equation*} 
		For a positive scalar $s$, define
		\begin{equation*}
		\Psi(s) := \sup\limits_{u\in sS^{p_1}}\langle \mathbf{X} u,  \mathbf{X}u\rangle;
		\end{equation*}
		the goal is to establish an upper bound for $\Psi(1)/n$. Let $\mathcal{A}=\{u^1,\cdots,u^A\}$ denote the $1/4$ covering of $S^{p}$. \citet{negahban2011estimation} established that
		\begin{equation*}
		\Psi(1)\leq 4\max\limits_{u^a\in\mathcal{A}}\langle \mathbf{X} u^a,  \mathbf{X}u^a\rangle;
		\end{equation*} 
		further, according to \citet{anderson2011statistical}, there exists a $1/4$ covering of $S^{p}$ with at most $|\mathcal{A}|\leq 8^{p}$ elements. Consequently, 
		\begin{equation*}
		\mathbb{P}\Big[ \big|\frac{1}{n}\Psi(1)\big|\geq 4\delta  \Big] \leq 8^{p} \max\limits_{u^a} \mathbb{P}\Big[\frac{|(u^a)'\mathbf{X}\mathbf{X}(u^a)|}{n} \geq \delta\Big].
		\end{equation*}
		What remains to be bounded is $\frac{1}{n}u'\mathbf{X}'\mathbf{X}u$, for an arbitrary $u\in S^{p}$. By \citet[Proposition 2.4(b)]{basu2015estimation}, we have
		\begin{equation*}
		\mathbb{P}\Big[ \Big|u'\Big(\big(\frac{\mathbf{X}'\mathbf{X}}{n}\big) - \Sigma_X(0)\Big)u\Big| > 2\pi\mathcal{M}(f_X)\eta \Big] \leq 2\exp\left(-cn\min\{\eta,\eta^2\}\right),
		\end{equation*}
		and thus 
		\begin{equation*}
		\mathbb{P}\Big[ u'\big(\frac{\mathbf{X}'\mathbf{X}}{n}\big)u > 2\pi\mathcal{M}(f_X)\eta + \smalliii{\Sigma_X(0)}_{\op}\Big] \leq 2\exp\left(-cn\min\{\eta,\eta^2\}\right).
		\end{equation*}
		Therefore, it follows that
		\begin{equation*}
		\begin{split}
		\mathbb{P}\Big[ \big|\frac{1}{n}\Psi(1)\big|\geq 8\pi\mathcal{M}(f_X)\eta + 4\vertiii{\Sigma_{X}(0)}_{\text{op}}\Big] \leq 2\exp\big(p\log 8 - cn\min\{\eta,\eta^2\}\big).
		\end{split}
		\end{equation*}
		With the specified choice of sample size $n$, the probability vanishes by choosing $\eta = c'_0$ for constant $c'_0$ sufficiently large. Finally, by Proposition 2.3 in \citet{basu2015estimation}, $\vertiii{\Sigma_{X}(0)}_{\text{op}}\leq 2\pi\mathcal{M}(f_X)$, and thus the conclusion in Lemma~\ref{lemma:Smax} holds.
	\end{proof}
	
	\begin{proof}[Proof of Lemma~\ref{lemma:RSCZ}]
		It suffices to show that the following inequality holds with high probability for some curvature $\alpha^{\widehat{\mathbf{Z}}}_{\RSC}>0$ and tolerance $\tau_{\mathbf{Z}}$, where we define $\widehat{S}_{\mathbf{Z}}:=\tfrac{1}{n}\widehat{\mathbf{Z}}_{n-1}^\top\widehat{\mathbf{Z}}_{n-1}$:
		\begin{equation*}
		\frac{1}{2}\theta^\top\widehat{S}_{\mathbf{Z}}\theta \geq \frac{\alpha^{\widehat{\mathbf{Z}}}_{\RSC}}{2}\|\theta\|^2 - \tau_{\mathbf{Z}}\|\theta\|_1^2, \qquad \forall~\theta\in\R^{p}.
		\end{equation*}
		Define $S_{\mathbf{Z}} :=\tfrac{1}{n}\mathbf{Z}_{n-1}^\top\mathbf{Z}_{n-1}$, then $\widehat{S}_{\mathbf{Z}}$ can be written as
		\begin{equation}\label{eqn:0}
		\widehat{S}_{\mathbf{Z}} = S_{\mathbf{Z}}  + \Big(\frac{1}{n} \mathbf{Z}^\top_{n-1}\Delta_{\mathbf{Z}_{n-1}} + \frac{1}{n}\Delta_{\mathbf{Z}_{n-1}}^\top\mathbf{Z}_{n-1} \Big)+ \Big(\frac{1}{n}\Delta_{\mathbf{Z}_{n-1}}^\top\Delta_{\mathbf{Z}_{n-1}}\Big),
		\end{equation}
		First, notice that the last term satisfies the following natural lower bound {\em deterministically}, since  $\Delta_{\mathbf{F}}$ is assumed non-random and $\Delta_{\mathbf{Z}}=[\Delta_{\mathbf{F}},O]$:
		\begin{equation*}
		\theta^\top \Big(\frac{1}{n}\Delta_{\mathbf{Z}_{n-1}}^\top\Delta_{\mathbf{Z}_{n-1}}\Big) \theta \geq 0\qquad \forall~\theta\in\R^p,
		\end{equation*} 
		which however, does not contribute to the ``positive" part of curvature. For the first two terms, we adopt the following strategy, using Lemma 12 in \citet{loh2012high} as an intermediate step. Specifically, \citet[Lemma 12]{loh2012high} proves that for any fixed generic matrix $\Gamma\in\R^{p\times p}$ that satisfies $|\theta^\top\Gamma\theta|\leq \delta$ for any $\theta\in\mathbb{K}(2s)$\footnote{$\mathbb{K}(2s):=\{ \theta: \|\theta\|_0=2s \}$ is the set of $2s$-sparse  vectors.}, the following bound holds
		\begin{equation}\label{eqn:Loh}
		|\theta^\top\Gamma\theta|\leq 27\delta \big( \|\theta\|_2^2 + \frac{1}{s}\|\theta\|_1^2\big), ~~~~~\forall~\theta\in\R^p.
		\end{equation}
		Then, based on~\eqref{eqn:Loh}, consider $\Gamma=\widehat{\Gamma}-\Sigma$ then rearrange terms, so that $\theta^\top \widehat{\Gamma}\theta \geq \theta^\top \Sigma \theta - \frac{27\delta}{2}\big(\|\theta\|_2^2 + \tfrac{1}{2}\|\theta\|_1^2 \big)$. The RE condition follows by setting $\delta$ to be some quantity related to $\Lambda_{\min}(\Sigma)$.

		In light of this, for the first two terms in~\eqref{eqn:0}, let
		\begin{equation*}
		\Psi:=S_{\mathbf{Z}} + \Big(\frac{1}{n} \mathbf{Z}^\top_{n-1}\Delta_{\mathbf{Z}_{n-1}} + \frac{1}{n}\Delta_{\mathbf{Z}_{n-1}}^\top\mathbf{Z}_{n-1} \Big),
		\end{equation*}
		denote their sum, in order to obtain an upper bound for $\big|\theta^\top\big(\Psi-\Sigma_Z(0)\big)\theta\big|$, so that Lemma 12 in \citet{loh2012high} can be applied. To this end, since 
		\begin{equation*}
		\Big| \theta^\top\big[ \Psi-\Sigma_Z(0)\big] \theta \Big| \leq \Big| \theta'(S_{\mathbf{Z}}-\Sigma_Z(0))\theta \Big| + \Big| \theta'\Big(\frac{1}{n} \mathbf{Z}'_{n-1}\Delta_{\mathbf{Z}_{n-1}} + \frac{1}{n}\Delta_{\mathbf{Z}_{n-1}}'\mathbf{Z}_{n-1} \Big)\theta \Big|,
		\end{equation*}
		we consider getting upper bounds for each of the two terms:
		\begin{equation*}
		\text{(i)}~~\Big| \theta'(S_{\mathbf{Z}}-\Sigma_Z(0))\theta \Big|~,\qquad \text{(ii)}~~\Big| \theta'\Big(\frac{1}{n} \mathbf{Z}'_{n-1}\Delta_{\mathbf{Z}_{n-1}} + \frac{1}{n}\Delta_{\mathbf{Z}_{n-1}}'\mathbf{Z}_{n-1} \Big)\theta \Big|.
		\end{equation*}	
		For (i), we follow the derivation in \citet[Proposition 2.4(a)]{basu2015estimation}, that is, for all $\|\theta\|\leq 1$, 
		\begin{equation*}
		\mathbb{P}\Big[ \Big| \theta'\big( S_{\mathbf{Z}}-\Sigma_Z(0)\big)\theta\Big| > 2\pi \mathcal{M}(f_Z)\eta\Big]\leq 2\exp\big[-cn\min\{\eta^2,\eta\}\big],
		\end{equation*}
		and further with probability at least $$1-2\exp\big(-cn \min\{\eta^2,\eta\} + 2s\min\{\log p,\log(21ep/2s)\} \big),$$ the following bound holds:
		\begin{equation}\label{eqn:2s-sparse}
		\sup\limits_{\theta\in\mathbb{K}(2s)} \Big| \theta^\top\big( S_{\mathbf{Z}}-\Sigma_Z(0)\big)\theta \Big| < 2\pi\mathcal{M}(f_Z)\eta.
		\end{equation}
		For (ii), the two terms are identical, with either one given by
		\begin{equation*}
		\frac{1}{n}(\mathbf{Z}_{n-1}\theta)^\top(\Delta_{\mathbf{Z}_{n-1}}\theta).
		\end{equation*}
		To obtain its upper bound, consider the following inequality, based on which we bound the two terms in the product separately:
		\begin{equation}\label{eqn:two-term}
		\sup\limits_{\theta\in\mathbb{K}(2s)}\Big|\frac{1}{n}\langle \mathbf{Z}_{n-1}\theta, \Delta_{\mathbf{Z}_{n-1}}\theta\rangle\Big|\leq \Big(\sup\limits_{\theta\in\mathbb{K}(2s)}\smallii{\frac{\mathbf{Z}_{n-1}\theta}{\sqrt{n}}}\Big) \Big(\sup\limits_{\|\theta\|\leq 1}\smallii{\frac{\Delta_{\mathbf{Z}_{n-1}}\theta}{\sqrt{n}}}\Big).
		\end{equation}
		For the first term in~\eqref{eqn:two-term}, since rows of $\mathbf{Z}_{n-1}$ are time series realizations from~\eqref{md:FAVAR-VAR1}, then if we let $\xi:=\mathbf{Z}_{n-1}\theta$, $\xi\sim\mathcal{N}(0_{n\times 1},Q_{n\times n})$ is Gaussian with $Q_{st} = \theta'\Sigma_Z(t-s)\theta$. To get its upper bound, we bound its square, and use again~\eqref{eqn:2s-sparse}, that is, 
		{\small\begin{equation*}
			\sup\limits_{\theta\in\mathbb{K}(2s)} \Big| \theta^\top\Big(\frac{1}{n}\mathbf{Z}_{n-1}^\top\mathbf{Z}_{n-1} \Big)\theta \Big| \leq  \sup\limits_{\theta\in\mathbb{K}(2s)}\theta'\Sigma_Z(0)\theta + 2\pi \mathcal{M}(f_Z)\leq 2\pi \mathcal{M}(f_Z) + 2\pi \mathcal{M}(f_Z)\eta.
			\end{equation*}}%
		For the second term $\|\Delta_{\mathbf{Z}_{n-1}}\theta/\sqrt{n}\|$, this is non-random, and for all $\|\theta\|\leq 1$, $\|\Delta_{\mathbf{Z}_{n-1}}\theta/\sqrt{n}\|\leq \Lambda^{1/2}_{\max}\big( S_{\Delta_{\mathbf{Z}_{n-1}}}\big) = \Lambda^{1/2}_{\max}\big( S_{\Delta_{\mathbf{F}_{n-1}}}\big)$. Therefore, the following bound holds for~\eqref{eqn:two-term}:
		{\small\begin{equation}\label{eqn:bound2nd}
			\sup\limits_{\theta\in\mathbb{K}(2s)}\Big|\frac{1}{n}\langle \mathbf{Z}_{n-1}\theta, \Delta_{\mathbf{Z}_{n-1}}\theta\rangle\Big|\leq \Lambda^{1/2}_{\max}\big(S_{\Delta_{\mathbf{F}_{n-1}}}\big)\sqrt{2\pi\mathcal{M}(f_Z) + 2\pi\mathcal{M}(f_Z)\eta}.
			\end{equation}}%
		Combine~\eqref{eqn:2s-sparse} and~\eqref{eqn:bound2nd} that are respectively the bounds for (i) and (ii), and the following bound holds with probability at least $1-2\exp\big(-cn \min\{\eta^2,\eta\} + 2s\min\{\log p,\log(21ep/2s)\} \big)$:
		{\small\begin{equation}\label{eqn:bound12}
			\sup\limits_{\theta\in\mathbb{K}(2s)}\Big| \theta^\top \Big(\Psi - \Sigma_Z(0)\Big) \theta \Big| \leq 2\pi\mathcal{M}(f_Z)\eta + 2\Lambda^{1/2}_{\max}\big( S_{\Delta_{\mathbf{F}_{n-1}}} \big)\sqrt{2\pi\mathcal{M}(f_Z) + 2\pi\mathcal{M}(f_Z)\eta}.
			\end{equation}}%
		Now applying \citet[Lemma 12]{loh2012high} to $\Gamma=\Psi - \Sigma_Z(0)$, and $\delta$ being the RHS of~\eqref{eqn:bound12}, then the following bound holds:
		\begin{equation*}
		\theta^\top\widehat{S}_{\mathbf{Z}}\theta \geq 2\pi\mathfrak{m}(f_Z) \|\theta\|_2^2 - 27\delta(\|\theta\|_2^2 + \frac{1}{s}\|\theta\|_1^2) = \big(2\pi\mathfrak{m}(f_Z) -27\delta\big)\|\theta\|^2 - \frac{27\delta}{s}\|\theta\|_1^2.
		\end{equation*}
		By setting $\eta=\omega^{-1}:=\frac{\mathfrak{m}(f_Z)}{54\mathcal{M}(f_Z)}$, 
		\begin{equation*}
		\begin{split}
		\delta = \frac{\pi}{27}\mathfrak{m}(f_Z) + 2 \Lambda^{1/2}_{\max}\big( S_{\Delta_{\mathbf{F}_{n-1}}} \big)&\sqrt{2\pi \mathcal{M}(f_Z) + \pi\mathfrak{m}(f_Z)/27} \\
		&\leq \frac{\pi}{27}\mathfrak{m}(f_Z) + 2 \Lambda^{1/2}_{\max}\big( S_{\Delta_{\mathbf{F}_{n-1}}} \big)\sqrt{\frac{55\pi}{27}\mathcal{M}(f_Z)}.
		\end{split}
		\end{equation*}
		Since we have required that $\mathfrak{m}(f_Z)/\mathcal{M}^{1/2}(f_Z) > c_0\cdot\Lambda^{1/2}_{\max}(S_{\Delta_{\mathbf{F}_{n-1}}})$ with $c_0\geq 6\sqrt{165\pi}$, $2\pi\mathfrak{m}(f_Z) - 27\delta > 0$. Therefore, the RSC condition is satisfied with curvature
		\begin{equation*}
		\alpha^{\widehat{\mathbf{Z}}}_{\RSC} = 2\pi\mathfrak{m}(f_Z) - 27\delta = \pi\mathfrak{m}(f_Z) - 54\Lambda^{1/2}_{\max}\big(S_{\Delta_{\mathbf{F}_{n-1}}}\big) \sqrt{2\pi\mathcal{M}(f_Z) + \pi\mathfrak{m}(f_Z)/27} > 0, 
		\end{equation*}
		and tolerance $27\delta/(2s)$, with probability at least $1-2\exp\Big(-cn\omega^{-2}+ 2s\log p\Big)$. Finally, set $s=\lceil {cn\omega^{-1}/4\log p}\rceil$, we get the desired conclusion. 
	\end{proof}
	
	\begin{proof}[Proof of Lemma~\ref{lemma:deviation-A}]
		First, we note that the quantity of interest can be upper bounded by the following four terms:
		{\small \begin{align}
			\frac{1}{n}\|\widehat{\mathbf{Z}}^\top_{n-1} \Big( \widehat{\mathbf{Z}}_n -&  \widehat{\mathbf{Z}}_{n-1}(A^\star)^\top\Big)\|_\infty \notag\\
			& =  \frac{1}{n}\smallii{ \Big(\mathbf{Z}_{n-1} + \Delta_{\mathbf{Z}_{n-1}}\Big)^\top \Big( \mathbf{W} + \Delta_{\mathbf{Z}_{n}} - \Delta_{\mathbf{Z}_{n-1}}(A^\star)^\top  \Big) }_\infty & \notag\\
			& \leq \smallii{ \frac{1}{n}\mathbf{Z}_{n-1}^\top \mathbf{W}}_\infty + \smallii{ \frac{1}{n}\Delta_{\mathbf{Z}_{n-1}}^\top \mathbf{W}}_\infty + \smallii{ \frac{1}{n}\mathbf{Z}^\top_{n-1}\Big( \Delta_{\mathbf{Z}_{n}} - \Delta_{\mathbf{Z}_{n-1}}(A^\star)^\top  \Big) }_\infty  \notag \\
			& \quad + \smallii{ \frac{1}{n}\Delta_{\mathbf{Z}_{n-1}}^\top\Big( \Delta_{\mathbf{Z}_{n}} - \Delta_{\mathbf{Z}_{n-1}}(A^\star)^\top  \Big) }_\infty  \notag\\
			& := T_1 + T_2 + T_3 + T_4. \label{eqn:deviation}
			\end{align}}%
		We provide bounds on each term in~\eqref{eqn:deviation} sequentially. $T_1$ is the standard Deviation Bound, which according to previous derivations (e.g., \citet{basu2015estimation} for the expression specifically derived for $\VAR(1)$) satisfies 
		\begin{equation*}
		\frac{1}{n}\smallii{\mathbf{Z}_{n-1}^\top\mathbf{W}}_\infty \leq c_0\big[\mathcal{M}(f_Z) + \mathcal{M}(f_W) + \mathcal{M}(f_{Z,W^+})\big]\sqrt{\frac{\log (p_1+p_2)}{n}}
		\end{equation*}
		with probability at least $1-c_1\exp(-c_2\log (p_1+p_2))$ for some $\{c_i\}$. For $T_2$, since rows of $\mathbf{W}$ are iid realizations from $\mathcal{N}(0,\Sigma_w)$, then for $\Delta_{\mathbf{Z}_{n-1}}^\top \mathbf{W} \in \R^{(p_1+p_2)\times(p_1+p_2)}$ which has at most $p_1\times (p_1+p_2)$ nonzero entries, each entry $(i,j)$ given by
		\begin{equation*}
		\kappa_{ij}:=\big(\frac{1}{n}\Delta_{\mathbf{Z}_{n-1}}^\top \mathbf{W}\big)_{ij} = \frac{1}{n}\Delta_{\mathbf{Z}_{n-1},\cdot i}^\top \mathbf{W}_{\cdot j}
		\end{equation*}
		is Gaussian, and the following tail bound holds:  
		\begin{equation*}
		\begin{split}
		\mathbb{P}\big[ |\kappa_{ij}| \geq t  \big]& \leq e\cdot \exp\Big( -\frac{cnt^2}{\Lambda_{\max}(\Sigma_w)\max\limits_{i\in\{1,\dots,p_1+p_2\}}\|\Delta_{\mathbf{Z}_{\cdot i}}/\sqrt{n}\|^2_2 }\Big) \\
		& =  e\cdot \exp\Big( -\frac{cnt^2}{\Lambda_{\max}(\Sigma_w)\max\limits_{i\in\{1,\dots,p_1\}}\|\Delta_{\mathbf{F}_{\cdot i}}/\sqrt{n}\|^2_2 }\Big).
		\end{split}
		\end{equation*}
		Taking the union bound over all $p_1\times (p_1+p_2)$ nonzero entries, the following bound holds:
		\begin{equation*}
		\mathbb{P}\Big[\frac{1}{n}\smallii{\Delta_{\mathbf{Z}_{n-1}}^\top \mathbf{W}}_\infty \geq t\Big] \leq \exp\Big(- \frac{cnt^2}{\Lambda_{\max}(\Sigma_w)\max\limits_{i\in\{1,\dots,p_1\}}\|\Delta_{\mathbf{F}_{\cdot i}}/\sqrt{n}\|^2_2 } + \log\big(ep_1(p_1+p_2)\big)\Big).
		\end{equation*}
		Choose $t = c_0\big(\Lambda^{1/2}_{\max}(\Sigma_w)\max\limits_{i=1,\dots,p_1}\|\Delta_{\mathbf{F}_{\cdot i}}/\sqrt{n}\|\big) \sqrt{\frac{\log (p_1(p_1+p_2))}{n}}$, the following bound holds with probability at least $1-\exp\Big(-c_1\log\big(p_1(p_1+p_2)\big)\Big)$:
		\begin{equation*}
		\frac{1}{n}\smallii{\Delta_{\mathbf{Z}_{n-1}}^\top \mathbf{W}}_\infty \leq c_0\Lambda^{1/2}_{\max}(\Sigma_w)\max\limits_{i=1,\dots,p_1}\|\Delta_{\mathbf{F}_{\cdot i}}/\sqrt{n}\|\sqrt{\frac{\log p_1 + \log(p_1+p_2)}{n}}.
		\end{equation*}
		For $T_3$, let $\varepsilon_n := \Delta_{\mathbf{Z}_{n}} - \Delta_{\mathbf{Z}_{n-1}}(A^\star)^\top = [\Delta_{\mathbf{F}_n}-\Delta_{\mathbf{F}_{n-1}}(A_{11}^\star)^\top, -\Delta_{\mathbf{F}_{n-1}}(A_{21}^\star)^\top ]$, then each entry of $\tfrac{1}{n}\mathbf{Z}_{n-1}^\top \varepsilon_n$ is given~by
		\begin{equation*}
		\Big(\tfrac{1}{n}\mathbf{Z}_{n-1}^\top \varepsilon_n\Big)_{ij} = \frac{1}{n}\mathbf{Z}_{n-1,\cdot i}^{\top} \varepsilon_{n,\cdot j},
		\end{equation*}
		and it has $(p_1+p_2)\times (p_1+p_2)$ entries. Next, note that column $i$ of $\mathbf{Z}_{n-1}\in\R^n$ can be viewed as a mean-zero Gaussian random vector with covariance matrix $Q^i$ where $(Q^i)_{st} = [\Sigma_Z(t-s)]_{ii}$ satisfying $\Lambda_{\max}(Q^i)\leq \Lambda_{\max}(\Sigma_Z(0))\leq 2\pi\mathcal{M}(f_Z)$, so for any $(i,j)$, $\big(\tfrac{1}{n}\mathbf{Z}_{n-1}^\top \varepsilon_n\big)_{ij}$ satisfies  
		\begin{equation*}
		\mathbb{P}\Big[\big|\big(\tfrac{1}{n}\mathbf{Z}_{n-1}^\top \varepsilon_n\big)_{ij}\big|>t\Big] \leq \exp\Big( 1- \frac{cnt^2}{\Lambda_{\max}(\Sigma_Z(0))\max\limits_{j\in\{1,\dots,p_1\}}\|\varepsilon_{n,\cdot j}/\sqrt{n}\|^2}\Big).
		\end{equation*}
		Again by taking the union bound over all $(p_1+p_2)^2$ entries, and let $$t=c_0\big(2\pi\mathcal{M}(f_Z)\big)^{1/2}\max\limits_{j\in\{1,\dots,p_1\}}\|\varepsilon_{n,\cdot j}/\sqrt{n}\|\sqrt{\frac{\log p_1 + \log (p_1+p_2)}{n}},$$ the following bound holds w.p. at least $1-\exp(-c_1\log(p_1+p_2))$:
		\begin{equation*}
		\begin{split}
		\frac{1}{n}\|\mathbf{Z}_{n-1}^\top &\Big( \Delta_{\mathbf{Z}_{n}} - \Delta_{\mathbf{Z}_{n-1}}(A^\star)^\top\Big) \|_\infty \\
		&\leq c_0\big(2\pi\mathcal{M}(f_Z)\big)^{1/2}\max\limits_{j\in\{1,\dots,(p_1+p_2)\}}\|\varepsilon_{n,\cdot j}/\sqrt{n}\|\sqrt{\frac{\log (p_1+p_2)}{n}}.
		\end{split}
		\end{equation*}
		For $T_4$, it is deterministic, and satisfies
		\begin{equation*}
		\begin{split}
		\frac{1}{n}\smallii{ \Delta_{\mathbf{Z}_{n-1}}^\top\Big( \Delta_{\mathbf{Z}_{n}} - \Delta_{\mathbf{Z}_{n-1}}(A^\star)^\top  \Big) }_\infty &\leq 
		\|\frac{1}{n}\Delta_{\mathbf{Z}_{n-1}}^\top \Delta_{\mathbf{Z}_{n}}\|_\infty + \|\frac{1}{n}\Delta_{\mathbf{Z}_{n-1}}^\top \Delta_{\mathbf{Z}_{n-1}}(A^\star)^\top \|_\infty \\
		& = \|\frac{1}{n}\Delta_{\mathbf{F}_{n-1}}^\top \Delta_{\mathbf{F}_{n}}\|_\infty + \|\frac{1}{n}\Delta_{\mathbf{F}_{n-1}}^\top \Delta_{\mathbf{F}_{n-1}}(A_{11}^\star)^\top \|_\infty
		\end{split}
		\end{equation*}
		Combine all terms, and there exist some constant $C_1,C_2,C_3$ and $c_1,c_2$ such that with probability at least $1-c_1\exp\big(-c_2\log (p_1+p_2)\big)$, the bound in~\eqref{eqn:deviationCnp} holds.
	\end{proof}
	
	%%%%%%%%%%%%%%%%%%%%%%%%%%%%%%%%%%%
	%%%% alpha-sub-exponential tail
	%%%%%%%%%%%%%%%%%%%%%%%%%%%%%%%%%%%
	\section{Generalization of the Main Results to Sub-exponential Tailed Error Processes: a Sketch}\label{appendix:subexponential}
	
	In this section, we provide the counterpart of Theorem~\ref{thm:info} for the case where the underlying processes are linear with generalized sub-exponential tails. Specifically, the stable joint VAR process $Z_t=(F_t',X_t)'$ has the following moving average representation with absolutely summable coefficients $B_\ell$'s (c.f. \citet{rosenblatt2012stationary}):
	\begin{equation*}
	Z_t = \sum_{\ell=0}^\infty B_{\ell}w_{t-\ell}. 
	\end{equation*}
	In the case where the process is Gaussian, the $w_{t}$'s correspond to Gaussian white noise processes. Throughout this section, we relax the Gaussian assumption and assume $w_t$ is a white noise process whose coordinates have the following $\alpha$-sub-exponential tail decay, that is, there exist two constants $a,b$ such that the following holds:
	\begin{equation}\label{eqn:subexp}
	\mathbb{P}\Big( |w_{tj}| \geq \xi \Big) \leq a\exp(-b\xi^\alpha),~~\forall~\xi>0.
	\end{equation}
	Specifically, the case of sub-Gaussian tails corresponds to $\alpha=2$, whereas for $\alpha\in(0,1]$ it leads to distributions with heavier tails, such as the sub-exponential distribution ($\alpha=1$) or the Weibull distribution; see also \citet{erdHos2012bulk,gotze2019concentration}. As a consequence, $X_t$ and $F_t$ deviate from being Gaussian due to the recursive data generating mechanism. Additionally, we assume the noise term of the calibration equation $e_t$ comes from the same $\alpha$-sub-exponential family.
	
	\begin{proposition}[High probability error bounds for $\widehat{\Theta}$ and $\widehat{\Gamma}$]\label{thm:info-sube} 
		Suppose we are given some randomly observed snapshots $\{x_1,\dots,x_n\}$ and $\{y_1,\dots,y_n\}$ obtained from the stable processes $X_t$ and $Y_t$, whose dynamics are described in~\eqref{md:FAVAR-VAR1} and~\eqref{md:FAVAR-info}. 
		Assume that the same conditions as in Theorem~\ref{thm:info} hold. Then, there exist universal positive constants $\{C_i\}$ and $\{c_i\}$ such that by solving~\eqref{opt:solveTheta} with regularization parameter 
		\begin{equation}
		\begin{split}
		\lambda_\Gamma = \max\Big\{ C_1(2\pi\mathcal{M}(f_X)+\Lambda_{\max}(\Sigma_e))\dfrac{(\log p_2 + \log q)^{1/\alpha}}{\sqrt{n}},
		~C_2\phi/\sqrt{nq},~C_3\Lambda^{1/2}_{\max}(\Sigma_e)  \Big\},
		\end{split}
		\end{equation}
		the solution $(\widehat{\Theta},\widehat{\Gamma})$ has the following bound with probability at least $1-c_1\exp\{-c_2\big(\log(p_2q)\big)^{2/\alpha}\}$:
		\begin{equation}\label{bound:sube}
		\smalliii{\Delta_{\Theta}/\sqrt{n}}_{\F}^2 +\smalliii{\Delta_{\Gamma}}_{\F}^2 \lesssim \frac{\lambda^2_\Gamma}{\mathfrak{m}(f_X)} \psi(s_{\Gamma^\star},p_1,r),
		\end{equation}
		for a sufficiently large sample size and some function $\psi(\cdot)$ that depends linearly on $s_{\Gamma^\star}, p_1$ and $r$.
	\end{proposition}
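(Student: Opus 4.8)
The plan is to reuse the two-part architecture behind Theorem~\ref{thm:info}, keeping the deterministic half intact and replacing only the probabilistic inputs. Proposition~\ref{prop:Theta-Gamma-fixed} is distribution-free: it presumes a realization $\mathbf{X}$ satisfying the RSC condition together with a choice of $\lambda_\Gamma$ dominating $2\|\mathbf{X}^\top\mathbf{E}/n\|_\infty$, $4\phi/\sqrt{nq}$ and $\Lambda_{\max}^{1/2}(S_{\mathbf{E}})$, and then derives~\eqref{eqn:boundThetaB} by convex-analytic manipulation alone. Hence it transfers verbatim. What must be redone for $\alpha$-sub-exponential innovations are the three probabilistic lemmas that verify, under random realizations, that (i) $\mathbf{X}$ satisfies RSC with curvature of order $\mathfrak{m}(f_X)$ and a negligible tolerance, (ii) $\|\mathbf{X}^\top\mathbf{E}/n\|_\infty$ is suitably small, and (iii) $\Lambda_{\max}(S_{\mathbf{E}})\lesssim\Lambda_{\max}(\Sigma_e)$ — the analogues of Lemmas~\ref{lemma:RSCX},~\ref{lemma:boundXE} and~\ref{lemma:Emax}. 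Only the pointwise tail bound driving all three changes.

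The technical heart is a Hanson--Wright / Bernstein-type concentration inequality for bilinear forms $u^\top\big(\tfrac1n\mathbf{X}^\top\mathbf{V}\big)w$ when $X_t=\sum_{\ell\ge 0}B_\ell^X w_{t-\ell}$ and $V_t=\sum_{\ell\ge 0}B_\ell^V w_{t-\ell}'$ are stable linear processes driven by innovations with the $\alpha$-sub-exponential tail~\eqref{eqn:subexp}. Expressing such a form as a quadratic form in the underlying independent innovation coordinates and bounding the operator and Hilbert--Schmidt norms of the associated coefficient matrix by spectral-density extremes exactly as in \citet[Prop.~2.4]{basu2015estimation} — so that the ``variance proxy'' is governed by $\mathcal{M}(f_X)$, $\mathcal{M}(f_V)$ and their cross-spectrum, uniformly via absolute summability of $\{B_\ell\}$ — one invokes a generalized Hanson--Wright inequality for sub-Weibull vectors \citep[e.g.][]{gotze2019concentration}. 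Since the product of two $\alpha$-sub-exponential variables is $(\alpha/2)$-sub-exponential, the resulting tail is of the form $\exp\!\big(-c\min\{n\eta^2,\,(n\eta^2)^{\alpha/2}\}\big)$, which forces the rate $\eta\asymp(\log p)^{1/\alpha}/\sqrt n$ and the failure probability $\exp\{-c(\log(p_2q))^{2/\alpha}\}$ appearing in the statement, in place of the Gaussian $\sqrt{\log p/n}$ and $\exp\{-c\log(p_2q)\}$; this deviation inequality is of independent interest already for the standard sparse VAR model.

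With this pointwise bound in hand, the remainder is routine and parallels the Gaussian proofs: for RSC one combines the pointwise inequality with the sparse discretization of \citet[Lemma~F.2]{basu2015estimation} and the sparse-to-dense extension of \citet[Lemma~12]{loh2012high} — both deterministic and tail-agnostic — to obtain curvature $\pi\mathfrak{m}(f_X)$ and a tolerance of order $(\log p_2)^{2/\alpha}/n$; for $\|\mathbf{X}^\top\mathbf{E}/n\|_\infty$ one unions over the $p_2q$ entries; and for $\Lambda_{\max}(S_{\mathbf{E}})$ one uses a $1/4$-net of the $q$-sphere of size $8^q$ together with the pointwise bound, which under $n\gtrsim q$ yields the claimed control. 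Feeding these into Proposition~\ref{prop:Theta-Gamma-fixed} with the stated $\lambda_\Gamma$ produces~\eqref{bound:sube}, the ``sufficiently large $n$'' caveat being exactly what renders $(\log(p_2q))^{1/\alpha}/\sqrt n$ lower-order than the $\mathcal{O}(1)$ term $\Lambda_{\max}^{1/2}(\Sigma_e)$, so that $\lambda_\Gamma\asymp\mathcal{O}(1)$ and $\psi$ remains linear in $(s_{\Gamma^\star},p_1,r)$. I expect the genuine difficulty to lie in the sub-Weibull Hanson--Wright step under temporal dependence: one must track how $\{B_\ell\}$ enters both the Frobenius-type term (governing the Gaussian-like regime) and the operator-norm term (governing the heavy-tailed regime) of the quadratic-form matrix, confirm that stability plus absolute summability convert these uniformly into $\mathcal{M}(f_X)$-type quantities, and ensure the $(\alpha/2)$-exponent survives the union bounds without eroding the stated probability.
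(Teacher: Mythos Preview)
Your plan is essentially the same as the paper's and is sound: reuse Proposition~\ref{prop:Theta-Gamma-fixed} verbatim and re-derive the three probabilistic ingredients (RSC for $\mathbf{X}$, the $\|\mathbf{X}^\top\mathbf{E}/n\|_\infty$ bound, and concentration of $\Lambda_{\max}(S_{\mathbf{E}})$) via a sub-Weibull Hanson--Wright inequality \`a la \citet{gotze2019concentration}, combined with the same discretization and sparse-to-dense machinery. That is exactly what the paper does in Lemmas~\ref{lemma:auxaux}--\ref{lemma:Emax-sube}.

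There are, however, two quantitative slips in the heavy-tailed part of your proposal that you should repair. First, the second term in the quadratic-form tail is $(n\eta)^{\alpha/2}$, not $(n\eta^2)^{\alpha/2}$: applying the Hanson--Wright bound with $A=\mathrm{I}_n$ (rank $n$, operator norm $1$) and deviation level $n\eta$ gives
\[
\exp\!\Big[-c\min\big\{(n\eta)^2/n,\;(n\eta)^{\alpha/2}\big\}\Big]=\exp\!\Big[-c\min\big\{n\eta^2,\;(n\eta)^{\alpha/2}\big\}\Big],
\]
and it is precisely this exponent that, after setting $\eta\asymp(\log(p_2q))^{1/\alpha}/\sqrt{n}$ and checking which branch dominates, delivers the failure probability $\exp\{-c(\log(p_2q))^{2/\alpha}\}$ announced in the statement. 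Your form $(n\eta^2)^{\alpha/2}$ would instead yield only $\exp\{-c\log(p_2q)\}$. Second, and more consequential, the operator-norm concentration for $S_{\mathbf{E}}$ does \emph{not} go through under $n\gtrsim q$ in the sub-exponential regime: taking $\eta$ of constant order and unioning over the $8^q$ net forces $\min\{n,(n)^{\alpha/2}\}\gtrsim q$, i.e.\ $n^{\alpha/2}\gtrsim q$ (so $\sqrt{n}\gtrsim q$ when $\alpha=1$). This is the content of Lemma~\ref{lemma:Emax-sube} and is precisely what the ``sufficiently large sample size'' caveat in the proposition is hiding. The RSC tolerance likewise comes out as $\log p_2/n^{\alpha/2}$ rather than $(\log p_2)^{2/\alpha}/n$. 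None of this changes your architecture, but the rates in the $\alpha<2$ regime are governed by the $(n\eta)^{\alpha/2}$ branch and are strictly worse than you indicate.
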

	Note that the bounds for each individual probabilistic event (e.g., RSC condition, deviation bound) differ from those in the Gaussian case, although their expressions in~\eqref{bound:sube} do not exhibit marked differences compared to the Gaussian case; specifically, the bound for $\smalliii{\Delta_{\Theta}/\sqrt{n}}_{\F}^2 +\smalliii{\Delta_{\Gamma}}_{\F}^2$ is governed by the more stringent sample size requirement amongst its building components (i.e., concentration in the operator norm) and the slowest term in terms of probability decay.
	
	\bigskip
	In the rest of this section, we sketch the statements and proofs for key lemmas that underlie the high probability statements, assuming $\alpha$-sub-exponential tail decay where $\alpha\in(0,1]\cup\{2\}$. In particular, one can verify that the rates obtained below would coincide with the Gaussian case, if $\alpha=2$. Similar arguments can be applied to the Stage II estimate to arrive at the counterpart of Theorem~\ref{thm:A}, which are omitted.
	
	\medskip
	%%%%%%%%%%%%%% Generalization of HW to samples of X_t %%%%%%%%%%%%%%%%%%%
	Lemmas~\ref{lemma:auxaux} generalizes Hanson-Wright type concentration inequality to samples of ${X_t}$.
	\begin{lemma}\label{lemma:auxaux} Consider some generic $p$-dimensional linear process given in the form of $X_t := \sum_{\ell=0}^\infty \Phi_{\ell}u_{t-\ell}$, where $u_t$ is i.i.d coming from the $\alpha$-sub-exponential family defined in~\eqref{eqn:subexp}. Denote its realization by $\mathbf{X}\in\R^{n\times p}$ with $n$ consecutive observations stacked in its rows. Then for a deterministic $np\times np$ matrix $A$, there exists some constant $C$ such that the following bound holds: 
		\begin{equation}\label{eqn:boundTalpha}
		\mathbb{P}\Big( \Big|  \text{vec}(\mathbf{X}^\top)^\top A\,\text{vec}(\mathbf{X}^\top) - \mathbb{E}\big[  \text{vec}(\mathbf{X}^\top)^\top A\, \text{vec}(\mathbf{X}^\top)\big] \Big| > 2\pi\eta \mathcal{M}(f_X)\Big) \leq \mathcal{T}(\eta,\alpha,A),
		\end{equation}
		where
		\begin{equation}\label{eqn:Talpha}
		\mathcal{T}(\eta,\alpha,A) := 2\exp\Big[-C\min\Big\{ \frac{\eta^2}{\text{rk}(A)\smalliii{A}^2_{\op}},\Big(\frac{\eta}{\smalliii{A}_{\op}}\Big)^\frac{\alpha}{2} \Big\}  \Big].
		\end{equation} 
	\end{lemma}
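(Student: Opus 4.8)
\textbf{Overall approach.} The plan is to reduce the quadratic form $\text{vec}(\mathbf{X}^\top)^\top A\,\text{vec}(\mathbf{X}^\top)$ in the linear-process variables back to a quadratic form in the i.i.d.\ innovations $\{u_{t}\}$, and then apply a generalized Hanson--Wright inequality for $\alpha$-sub-exponential random variables (such as the ones in \citet{gotze2019concentration} or \citet{erdHos2012bulk}). Concretely, write $\text{vec}(\mathbf{X}^\top) = \mathcal{B}\,\text{vec}(\mathbf{U}^\top)$ for a suitable (doubly infinite, but absolutely summable) block-Toeplitz operator $\mathcal{B}$ built from the coefficients $\Phi_\ell$, where $\mathbf{U}$ stacks the relevant innovations. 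Then the quadratic form becomes $\text{vec}(\mathbf{U}^\top)^\top (\mathcal{B}^\top A\,\mathcal{B})\,\text{vec}(\mathbf{U}^\top)$, a quadratic form in i.i.d.\ $\alpha$-sub-exponential entries, to which the generalized Hanson--Wright bound applies with deviation parameter controlled by $\smalliii{\mathcal{B}^\top A\,\mathcal{B}}_{\op}$ and $\smalliii{\mathcal{B}^\top A\,\mathcal{B}}_{\F}$ (or the rank-times-operator-norm surrogate).

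\textbf{Key steps, in order.}
\begin{enumerate}
\item[(i)] Express $\mathbf{X}$ in terms of the innovations via the moving-average representation and package this as a linear map $\mathcal{B}$ acting on a (truncated or infinite) innovation vector; verify absolute summability of $\{\Phi_\ell\}$ ensures $\mathcal{B}$ is a bounded operator. Truncating the MA expansion at a large order and controlling the tail contribution (again by absolute summability) makes everything finite-dimensional if one prefers to avoid operator-theoretic language.
\item[(ii)] Bound $\smalliii{\mathcal{B}}_{\op}$ by a quantity proportional to $\sqrt{2\pi\mathcal{M}(f_X)}$ --- this is where the spectral-density extreme enters. The standard fact (used repeatedly in \citet{basu2015estimation}, e.g.\ their Proposition~2.3 and surrounding lemmas) is that the covariance operator of a stacked stationary process has operator norm at most $2\pi\mathcal{M}(f_X)$; since $\mathcal{B}\mathcal{B}^\top$ is (up to the innovation covariance) that covariance operator, $\smalliii{\mathcal{B}}_{\op}^2 \lesssim 2\pi\mathcal{M}(f_X)$.
\item[(iii)] Therefore $\smalliii{\mathcal{B}^\top A \mathcal{B}}_{\op} \leq \smalliii{\mathcal{B}}_{\op}^2 \smalliii{A}_{\op} \lesssim 2\pi\mathcal{M}(f_X)\smalliii{A}_{\op}$, and $\text{rk}(\mathcal{B}^\top A \mathcal{B}) \leq \text{rk}(A)$, so $\smalliii{\mathcal{B}^\top A \mathcal{B}}_{\F}^2 \leq \text{rk}(A)\smalliii{\mathcal{B}^\top A\mathcal{B}}_{\op}^2 \lesssim \text{rk}(A)\,(2\pi\mathcal{M}(f_X))^2\smalliii{A}_{\op}^2$.
\item[(iv)] Apply the $\alpha$-sub-exponential Hanson--Wright inequality: for i.i.d.\ entries with tail decay exponent $\alpha$, the quadratic form concentrates with the two-regime bound $\exp[-C\min\{\,t^2/\|M\|_{\F}^2,\ (t/\|M\|_{\op})^{\alpha/2}\,\}]$ where $M = \mathcal{B}^\top A\mathcal{B}$. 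Substituting the bounds from (iii) and setting $t = 2\pi\eta\mathcal{M}(f_X)$ --- so that the $\mathcal{M}(f_X)$ factors cancel appropriately --- yields exactly $\mathcal{T}(\eta,\alpha,A)$ in~\eqref{eqn:Talpha}. Finally, account for the mean-centering: $\mathbb{E}[\text{vec}(\mathbf{X}^\top)^\top A\,\text{vec}(\mathbf{X}^\top)] = \text{tr}(M)$, absorbed automatically in the Hanson--Wright statement.
\end{enumerate}

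\textbf{Main obstacle.} The genuinely delicate point is step (iv): one needs a Hanson--Wright--type inequality valid for $\alpha \in (0,1]$, where the innovations are only $\alpha$-sub-exponential (heavier than sub-exponential when $\alpha<1$), and moreover the relevant quadratic-form matrix $M = \mathcal{B}^\top A\mathcal{B}$ is infinite-dimensional (or only approximately finite-dimensional after truncation). One must either invoke an off-the-shelf result from the recent concentration literature (\citet{gotze2019concentration,erdHos2012bulk}) stated at the required generality, or carefully justify the truncation argument so that the contribution of the MA tail to both $\smalliii{M}_{\op}$ and $\smalliii{M}_{\F}$ is negligible and does not spoil the two-regime bound. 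A secondary subtlety is bookkeeping the constant $C$ uniformly in $\alpha$ over $(0,1]\cup\{2\}$, and checking that plugging $\alpha=2$ indeed recovers the sub-Gaussian (Gaussian) rate used elsewhere in the paper; this requires only that the generalized Hanson--Wright constant be chosen not to degenerate as $\alpha\to 2$.
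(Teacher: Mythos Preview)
Your plan is correct and lands on the same external input as the paper---the generalized Hanson--Wright inequality of \citet{gotze2019concentration}---together with the spectral bound $\smalliii{\Omega}_{\op}\le 2\pi\mathcal{M}(f_X)$ from \citet[Proposition~2.3]{basu2015estimation}. The paper's argument, however, takes a shorter route than your MA-operator construction: it simply writes $\text{vec}(\mathbf{X}^\top)\stackrel{d}{=}\Omega^{1/2}Z$ with $\Omega$ the (finite, $np\times np$) covariance of the stacked vector and $Z$ isotropic, and then applies G\"otze et al.\ directly to the matrix $\Omega^{1/2}A\,\Omega^{1/2}$, bounding $\smalliii{\Omega^{1/2}A\,\Omega^{1/2}}_{\op}\le \smalliii{\Omega}_{\op}\smalliii{A}_{\op}\le 2\pi\mathcal{M}(f_X)\smalliii{A}_{\op}$ and $\smalliii{\Omega^{1/2}A\,\Omega^{1/2}}_{\F}\le \sqrt{\text{rk}(A)}\,\smalliii{\Omega^{1/2}A\,\Omega^{1/2}}_{\op}$. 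Because $\Omega$ is finite-dimensional, this completely sidesteps the truncation and operator-theoretic issues you flag as the ``main obstacle.'' Conversely, your route through the block-Toeplitz MA operator $\mathcal{B}$ has one genuine advantage: it makes transparent \emph{why} the whitened vector $Z$ inherits independent $\alpha$-sub-exponential coordinates---namely, $Z$ is (a standardized version of) the stacked innovation sequence $\{u_t\}$---which is exactly the hypothesis the Hanson--Wright bound needs and which the paper's $\Omega^{1/2}Z$ notation leaves implicit. In short: same engine, same norm bounds; the paper's presentation is more compact, yours is more explicit about the independence structure, and your truncation worry dissolves once you work with the finite covariance $\Omega$ rather than the infinite filter $\mathcal{B}$.
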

	\begin{proof}
		Let $\text{vec}(\mathbf{X}^\top) \stackrel{d}{=} \Omega^{1/2} Z$ where $\Omega$ is the covariance matrix of the $np$-dimensional random vector $\text{vec}(\mathbf{X}^\top)$ and $Z$ satisfies $\mathbb{E}Z=0,\mathbb{E}(ZZ^\top)=\mathrm{I}_{np}$. Applying \citet[Proposition 1.1]{gotze2019concentration} gives
		\begin{equation}\label{bound00}
		\begin{split}
		\mathbb{P}\Big( \Big|  \text{vec}(\mathbf{X}^\top)^\top A\,\text{vec}(\mathbf{X}^\top) &- \mathbb{E}\big[  \text{vec}(\mathbf{X}^\top)^\top A\, \text{vec}(\mathbf{X}^\top)\big] \Big| > 2\pi\eta \mathcal{M}(f_X)\Big)  \\
		&= \mathbb{P}\Big( \Big| Z^\top \Omega^{1/2} A\Omega^{1/2}Z - \mathbb{E}\big[ Z^{\top}\Omega^{1/2}A\Omega^{1/2}Z\big] \Big| > 2\pi\eta\mathcal{M}(f_X)\Big)\\
		&\leq 2\exp\Big\{-c_0 \cdot \nu\Big(\Omega^{1/2}A\Omega^{1/2},\alpha,2\pi\eta\mathcal{M}(f_X)\Big)\Big\} 
		\end{split}
		\end{equation}
		where 
		\begin{equation}\label{eqn:nu}
		\nu(A,\alpha,t):= \min\Big\{ \frac{t^2}{M^4\smalliii{A}^2_{\F}}, \Big( \frac{t}{M^2\smalliii{A}_{\op}}\Big)^{\alpha/2}\Big\};
		\end{equation}
		both $c_0$ and $M$ are constants that depend on $a,b$. Next, we consider the bounds for various norms of $\Omega^{1/2}A\Omega^{1/2}$:
		\begin{itemize}
			\setlength{\itemindent}{-1.5em}
			\small
			\item[--] $\smalliii{\Omega^{1/2}A\Omega^{1/2}}_{\op} \leq \smalliii{\Omega}_{\op}\smalliii{A}_{\op}\leq 2\pi\mathcal{M}(f_X)\smalliii{A}_{\op}$ where the last inequality follows from \citet[Proposition 2.3]{basu2015estimation} which applies to general linear processes;
			\item[--] $\smalliii{\Omega^{1/2}A\Omega^{1/2}}_{\F} \leq \sqrt{\text{rk}(\Omega^{1/2}A\Omega^{1/2})}\smalliii{\Omega^{1/2}A\Omega^{1/2}}_{\op} \leq 2\pi\sqrt{\text{rk}(A)}\smalliii{A}_{\op}\mathcal{M}(f_X)$;
		\end{itemize}
		Therefore, the last expression in~\eqref{bound00} can be upper bounded by~\eqref{eqn:Talpha} and the claim in~\eqref{eqn:boundTalpha} follows. 
	\end{proof}

	%%%%%%%%%%%%%% Generalization of Proposition 2.4 [BM2015AoS] %%%%%%%%%%%%%%%%%%
	Lemma~\ref{lemma:importantaux} is a generalization of Proposition 2.4 in \citet{basu2015estimation} to the case where the underlying processes come from the $\alpha$-sub-exponential family.
	\begin{lemma}\label{lemma:importantaux}
		Consider some generic linear processes given in the form of $X_t := \sum_{\ell=0}^\infty \Phi_{\ell}u_{t-\ell}$, where $u_t$ comes from the $\alpha$-sub-exponential family. Let $\Sigma_X(0):=\text{Cov}(X_t,X_t)$. Denote its realization by $\mathbf{X}\in\R^{n\times p}$ and sample covariance by $S:=\frac{1}{n}\mathbf{X}^\top\mathbf{X}$,respectively.
		\begin{enumerate}[(i)]
			\item For unit vectors $v_1$ and $v_2$ satisfying $\|v_1\|\leq 1,\|v_2\|\leq 1$, the following bound holds:
			\begin{equation*}
			\mathbb{P}\Big( |v_1'(S-\Sigma_X(0))v_1| > 2\pi \eta \mathcal{M}(f_X)  \Big)  \leq \mathcal{T}'\big(\eta, \alpha, n\big),
			\end{equation*}
			and
			\begin{equation*}
			\mathbb{P}\Big( |v_1'(S-\Sigma_X(0))v_2| > 6\pi \eta \mathcal{M}(f_X)  \Big)  \leq 2\mathcal{T}'\big(\eta, \alpha, n\big).
			\end{equation*}
			\item Consider the linear process $Z_t:=\sum_{\ell=0}^\infty \Psi_{\ell}w_{t-\ell}\in\R^{q}$ with $w_t$ coming from the same family of distributions as $u_t$ and satisfies $\text{Cov}(X_t,Z_t)=0$; $\mathbf{Z}$ is similarly defined. Then, the following bound holds:
			\begin{equation*}
			\mathbb{P}\Big( |v_1'(\mathbf{X}^\top\mathbf{Z})v_2 | > 2\pi \eta ( \mathcal{M}(f_Z) +  \mathcal{M}(f_Z) + \mathcal{M}(f_{X,Z}) )  \Big)  \leq 3\,\mathcal{T}'\big(\eta, \alpha, n\big),
			\end{equation*}
			where $\mathcal{M}(f_{X,Z})$ is identically defined to the quantity in Section~\ref{sec:theory}.
		\end{enumerate}
		$\mathcal{T}'$ has the following functional form: 
		$$\mathcal{T}'(\eta,\alpha,n) = c_1\exp\big[ -c_2 \min\{ n\eta^2,(n\eta)^{\alpha/2} \}\big], ~~ \text{for some constants $c_1,c_2$}.$$
	\end{lemma}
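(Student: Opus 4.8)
The plan is to derive all three bounds from a single application of the Hanson--Wright--type inequality in Lemma~\ref{lemma:auxaux}, by writing each (bi)linear form as a centered quadratic form $\text{vec}(\cdot)^\top A\,\text{vec}(\cdot)$ in the vectorized data, with an explicit inner matrix $A$ of controlled rank and operator norm, and then reading off $\mathcal{T}(\eta,\alpha,A)$ and collapsing it to the asserted $\mathcal{T}'(\eta,\alpha,n)$.

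For part (i), diagonal case: note $v_1^\top S v_1=\tfrac1n\sum_{t=1}^n (v_1^\top X_t)^2=\text{vec}(\mathbf{X}^\top)^\top A\,\text{vec}(\mathbf{X}^\top)$ with $A:=\tfrac1n(\I_n\otimes v_1v_1^\top)$, whose expectation is $v_1^\top\Sigma_X(0)v_1$ because $\mathbb{E}X_t=0$. Lemma~\ref{lemma:auxaux} then applies verbatim; recording $\text{rk}(A)=n$ and $\smalliii{A}_{\op}=1/n$ gives $\mathcal{T}(\eta,\alpha,A)=2\exp[-C\min\{n\eta^2,(n\eta)^{\alpha/2}\}]$, which is exactly the target $\mathcal{T}'(\eta,\alpha,n)$. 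For the off-diagonal bilinear form I would symmetrize: $v_1^\top S v_2=\text{vec}(\mathbf{X}^\top)^\top A'\,\text{vec}(\mathbf{X}^\top)$ with $A':=\tfrac1n\,\I_n\otimes\tfrac12(v_1v_2^\top+v_2v_1^\top)$, which satisfies $\text{rk}(A')\le 2n$ and $\smalliii{A'}_{\op}\le 1/n$; this only replaces $n\eta^2$ by $n\eta^2/2$ in the exponent, absorbed into the constants defining $\mathcal{T}'$, and the coarse deviation level $6\pi\eta\mathcal{M}(f_X)$ leaves far more room than needed. (Alternatively, the polarization identity plus two applications of the diagonal bound reach the same conclusion.)

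For part (ii) I would pass to the stacked process $U_t:=(X_t^\top,Z_t^\top)^\top$, which is again a linear process with moving-average coefficients $\mathrm{diag}(\Phi_\ell,\Psi_\ell)$ and innovations $(u_t^\top,w_t^\top)^\top$; every coordinate of the innovation is a coordinate of $u_t$ or of $w_t$, hence $\alpha$-sub-exponential, so Lemma~\ref{lemma:auxaux} (and part (i)) apply to $U_t$ with the same constants. With $\tilde v_1:=(v_1^\top,0^\top)^\top$ and $\tilde v_2:=(0^\top,v_2^\top)^\top$ (unit vectors in $\mathbb{R}^{p+q}$) one gets $\tfrac1n v_1^\top\mathbf{X}^\top\mathbf{Z}\,v_2=\text{vec}(\mathbf{U}^\top)^\top A_U\,\text{vec}(\mathbf{U}^\top)$ with $A_U:=\tfrac1n\,\I_n\otimes\tfrac12(\tilde v_1\tilde v_2^\top+\tilde v_2\tilde v_1^\top)$, $\text{rk}(A_U)\le 2n$, $\smalliii{A_U}_{\op}=1/(2n)$, and, crucially, $\mathbb{E}[\text{vec}(\mathbf{U}^\top)^\top A_U\,\text{vec}(\mathbf{U}^\top)]=v_1^\top\Sigma_{X,Z}(0)v_2=0$ by the hypothesis $\Cov(X_t,Z_t)=0$. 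Lemma~\ref{lemma:auxaux} then gives the bound at deviation level $2\pi\eta\,\mathcal{M}(f_U)$ with failure probability of the $\mathcal{T}'$ form. It remains to replace $\mathcal{M}(f_U)$: writing $f_U(\omega)=\left[\begin{smallmatrix} f_X(\omega) & f_{X,Z}(\omega)\\ f_{X,Z}^*(\omega) & f_Z(\omega)\end{smallmatrix}\right]$ and splitting it into its block-diagonal part and its (Hermitian) block-off-diagonal part, Weyl's inequality gives $\Lambda_{\max}(f_U(\omega))\le\max\{\Lambda_{\max}(f_X(\omega)),\Lambda_{\max}(f_Z(\omega))\}+\smalliii{f_{X,Z}(\omega)}_{\op}$; taking essential suprema and using $\smalliii{f_{X,Z}(\omega)}_{\op}=\sqrt{\Lambda_{\max}(f_{X,Z}^*(\omega)f_{X,Z}(\omega))}\le\mathcal{M}(f_{X,Z})$ yields $\mathcal{M}(f_U)\le\mathcal{M}(f_X)+\mathcal{M}(f_Z)+\mathcal{M}(f_{X,Z})$, which is the asserted form.

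Since Lemma~\ref{lemma:auxaux} carries all the probabilistic weight, the proof is mostly bookkeeping, and the two points requiring genuine care are: (a) choosing the inner matrices $A,A',A_U$ so that $\text{rk}(\cdot)$ scales like $n$ and $\smalliii{\cdot}_{\op}$ like $1/n$ --- this is precisely what collapses $\mathcal{T}(\eta,\alpha,\cdot)$ to the clean $\mathcal{T}'(\eta,\alpha,n)$; and (b) the block-matrix spectral bound for $\mathcal{M}(f_U)$ in part (ii), the only place the cross-spectrum enters, which must be handled through the definition of $\mathcal{M}(f_{X,Z})$ for a non-Hermitian cross-spectrum. The exact numeric prefactors (the $6\pi$ in part (i), and the factors $2$ and $3$ multiplying $\mathcal{T}'$) are deliberately loose and follow by absorbing the lost factors of $2$ in the rank and norm estimates into the constants $c_1,c_2$ implicit in $\mathcal{T}'$; they are not a real obstacle.
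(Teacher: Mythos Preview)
Your proposal is correct and reaches the same conclusions, but the route is genuinely different from the paper's. The paper first projects down to \emph{scalar} processes: it sets $y_t:=v_1^\top X_t$ (and, for (ii), $\tilde y_t:=v_2^\top Z_t$), applies Lemma~\ref{lemma:auxaux} with the trivial choice $A=\I_n$ to the scalar realization $\mathbf{Y}$, and then pushes $\mathcal{M}(f_Y)\le\|v_1\|^2\mathcal{M}(f_X)$ back up via a spectral-density comparison. Bilinear and cross terms are handled exclusively by polarization identities (e.g.\ $2v_1^\top S v_2$ as a signed sum of three quadratic forms, and $\tfrac{2}{n}\sum y_t\tilde y_t$ as a telescoping sum of three centered squares), which is where the prefactors $2$ and $3$ on $\mathcal{T}'$ and the deviation levels $6\pi\eta$ and $\mathcal{M}(f_X)+\mathcal{M}(f_Z)+\mathcal{M}(f_{X,Z})$ come from.

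By contrast, you keep the problem in $\mathbb{R}^{np}$ (or $\mathbb{R}^{n(p+q)}$ after stacking) and encode the (bi)linear form directly in the inner matrix $A=\tfrac1n\I_n\otimes M$, controlling $\text{rk}(A)$ and $\smalliii{A}_{\op}$ by the rank and norm of $M$; for (ii) you stack $U_t=(X_t^\top,Z_t^\top)^\top$ and bound $\mathcal{M}(f_U)$ by a Weyl-type block-matrix inequality on the spectral density. Your route is a bit cleaner in that it needs only a single invocation of Lemma~\ref{lemma:auxaux} per bound and avoids the auxiliary comparison $\mathcal{M}(f_Y)\le\mathcal{M}(f_X)$; the paper's route has the advantage that it only ever needs $A=\I_n$, so the constants are transparent and the factors $2,3$ and $6\pi$ fall out of the polarization count rather than being absorbed into $c_1,c_2$. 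Both are valid; your explicit Kronecker-structure calculation and the block spectral bound for $\mathcal{M}(f_U)$ are the only places you must be careful, and both are handled correctly.
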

	\begin{proof}[Proof of Lemma~\ref{lemma:importantaux}]
		First we note that with $A=\mathrm{I}_n$ and the definition of $\mathcal{T}(\eta,\alpha,n)$, the following holds for some constant $C>0$:
		\begin{equation*}
		\mathcal{T}(n\eta,\alpha,A) = 2\exp\Big[ -C \min\{ n\eta^2,(n\eta)^{\alpha/2} \} \Big].
		\end{equation*}
		Let $y_t := v_1^\top X_t$ and $\mathbf{Y} = \mathbf{X}v_1 \in\mathbb{R}^n$ be $n$ consecutive observations of the scalar process $\{y_t\}$, then
		\begin{equation*}
		v_1'Sv_1 \stackrel{d}{=} \frac{1}{n}\mathbf{Y}^\top \mathbf{Y} \qquad \text{and}\qquad  v_1'\Sigma_X(0)v_1 = \mathbb{E}\big[\mathbf{Y}^\top \mathbf{Y}/n\big].
		\end{equation*}
		Apply Lemma~\ref{lemma:auxaux} to process $\{Y_t\}$ with $A=\I_{n}$ (since moment properties are preserved under linear transformations), to obtain
		\begin{equation*}
		\mathbb{P}\Big( |v_1'(S-\Sigma_X(0))v_1| > 2\pi \eta \mathcal{M}(f_Y)  \Big)  = \mathbb{P}\Big( \big| \mathbf{Y}^\top \mathbf{Y} - \mathbb{E}\mathbf{Y}^\top \mathbf{Y} \big| > 2\pi (n\eta) \mathcal{M}(f_Y) \Big) \leq \mathcal{T}'\big(\eta,\alpha,n\big).
		\end{equation*}
		Further, by Lemma C.6 in \citet{sun2018large}, it follows that $\mathcal{M}(f_Y) \leq \|v_1\|^2\mathcal{M}(f_X) = \mathcal{M}(f_X)$; hence, the following bound holds:
		\begin{equation*}
		\mathbb{P}\Big( |v_1'(S-\Sigma_X(0))v_1| > 2\pi \eta \mathcal{M}(f_X)  \Big)  \leq \mathcal{T}'\big(\eta,\alpha,n\big).
		\end{equation*}
		This proves the first part in (i). The rest of the proof follows along similar lines to the derivation of Proposition 2.4 in \citet{basu2015estimation}, and we give an outline without getting into too many details. For $|v_1'(S-\Sigma_X(0))v_2|$, one considers the decomposition
		\begin{equation*}
		2|v_1'(S-\Sigma_X(0))v_2| \leq  |v_1'(S-\Sigma_X(0))v_1| + |v_2'(S-\Sigma_X(0))v_2| + |(v_1+v_2)'(S-\Sigma_X(0))(v_1+v_2)|
		\end{equation*}
		with $\|v_1+v_2\|\leq 2$. Repeating the steps above to each of the three terms yields the desired result. 
		
		\medskip
		For $|v_1'(\mathbf{X}^\top\mathbf{Z})v_2|$, let $\widetilde{y}_t=v_2^\top Z_t$; then 
		$v_1'(\mathbf{X}^\top\mathbf{Z})v_2  = \frac{1}{n}\sum_{t=1}^n y_t\widetilde{y}_t$ and it satisfies the following decomposition
		\begin{equation*}
		\begin{split}
		\frac{2}{n}\sum_{t=1}^n y_t\widetilde{y}_t & = \Big[ \frac{1}{n}\sum_{t=1}^n (y_t + \widetilde{y}_t)^2 - \text{Var}(y_t + \widetilde{y}_t)\Big] - \Big[ \frac{1}{n}\sum_{t=1}^n y^2_t - \text{Var}(y_t )\Big] - \Big[ \frac{1}{n}\sum_{t=1}^n \widetilde{y}_t^2 - \text{Var}(\widetilde{y}_t)\Big] \\
		& =: \big[ \mathbf{G}^\top\mathbf{G} -  \mathbb{E} \mathbf{G}^\top\mathbf{G} \big]  - \big[ \mathbf{Y}^\top\mathbf{Y} -  \mathbb{E} \mathbf{Y}^\top\mathbf{Y} \big]  - \big[ \widetilde{\mathbf{Y}}^\top\widetilde{\mathbf{Y}} -  \mathbb{E} \widetilde{\mathbf{Y}}^\top\widetilde{\mathbf{Y}} \big], 
		\end{split}
		\end{equation*}
		where $\{g_t := y_t + \widetilde{y}_t\}$ is the summation process; $\mathbf{G}$ and $\widetilde{\mathbf{Y}}$ are analogously defined to $\mathbf{Y}$. Repeating the above steps to each term. Note that
		\begin{equation*}
		\mathcal{M}(f_g) \leq \mathcal{M}(f_Z) + \mathcal{M}(f_X) + \mathcal{M}(f_{X,Z}),
		\end{equation*}
		and this completes the proof. 
	\end{proof}
	
	%%%%%%%% deviation bound %%%%%%%%%%%%
	The following lemma considers the deviation bound. Of note, to ensure the deviation bound vanishes, the sample size requirement would be $n\gtrsim (\log p + \log q)^{\frac{2}{\alpha}}$.
	\begin{lemma}[high probability deviation bound]\label{lemma:deviation-sube} There exist positive constants $C$ and $c_i>0$ such that the following deviation bound holds
		\begin{equation*}
		\|\mathbf{X}^\top\mathbf{E}/n\|_\infty \leq C\cdot(\log p+\log q)^{\frac{1}{\alpha}}/\sqrt{n}
		\end{equation*}
		with probability at least 
		\begin{equation*}
		1 - c_1\exp\big\{ - c_2\big(\log (pq)\big)^{2/\alpha} \big\},
		\end{equation*}
		for any random realizations $\mathbf{X}\in\R^{n\times p}$ and $\mathbf{E}\in\R^{n\times q}$, drawn from the linear processes $\{X_t\in\mathbb{R}^p\}$ and $\{\varepsilon_t\in\mathbb{R}^q\}$ that are constructed as linear filters of the white noise processes coming from some $\alpha$-sub-exponential family.
	\end{lemma}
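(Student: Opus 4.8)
The plan is to reproduce the argument of Lemma~\ref{lemma:boundXE}, replacing the Gaussian concentration bound of \citet[Proposition 2.4]{basu2015estimation} by its $\alpha$-sub-exponential analogue established in Lemma~\ref{lemma:importantaux}. First I would reduce to the entrywise maximum,
\begin{equation*}
\|\mathbf{X}^\top\mathbf{E}/n\|_\infty = \max_{1\le i\le p,\,1\le j\le q}\big| e_i^\top(\mathbf{X}^\top\mathbf{E}/n)e_j \big|,
\end{equation*}
where $e_i,e_j$ are the canonical basis vectors of $\R^p$ and $\R^q$, so that it suffices to control each of the $pq$ coordinates and then take a union bound.

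For a fixed pair $(i,j)$ I would invoke Lemma~\ref{lemma:importantaux}(ii) with $v_1=e_i$, $v_2=e_j$ and with the ``second'' process taken to be $\varepsilon_t$. Since $\varepsilon_t$ is a white-noise sequence independent of $\{X_t\}$, all cross-covariances $\Sigma_{X,\varepsilon}(h)$ vanish, so $f_{X,\varepsilon}\equiv 0$ and $\mathcal{M}(f_\varepsilon)=\tfrac{1}{2\pi}\Lambda_{\max}(\Sigma_\varepsilon)$; the per-coordinate tail bound then reads
\begin{equation*}
\mathbb{P}\Big( \big| e_i^\top(\mathbf{X}^\top\mathbf{E}/n)e_j \big| > 2\pi\eta\big(\mathcal{M}(f_X)+\tfrac{1}{2\pi}\Lambda_{\max}(\Sigma_\varepsilon)\big)\Big)\le 3\,\mathcal{T}'(\eta,\alpha,n),
\end{equation*}
with $\mathcal{T}'(\eta,\alpha,n)=c_1\exp[-c_2\min\{n\eta^2,(n\eta)^{\alpha/2}\}]$. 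A union bound over the $pq$ coordinates then bounds the probability that the maximum exceeds the same threshold by $3pq\,\mathcal{T}'(\eta,\alpha,n)$.

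It remains to calibrate $\eta$. I would take $\eta=c_0(\log p+\log q)^{1/\alpha}/\sqrt{n}$, which puts the threshold at the advertised order $C(\log p+\log q)^{1/\alpha}/\sqrt{n}$ after absorbing $\mathcal{M}(f_X)$ and $\Lambda_{\max}(\Sigma_\varepsilon)$ into $C$, and makes $n\eta^2=c_0^2(\log p+\log q)^{2/\alpha}$. Provided the sample size is large enough that the light-tail branch $n\eta^2$—rather than the heavy-tail branch $(n\eta)^{\alpha/2}$—attains the minimum in $\mathcal{T}'$ and also dominates the union-bound factor $\log(pq)$, the failure probability is at most $c_1'\exp\{-c_2'(\log pq)^{2/\alpha}\}$; for $\alpha=2$ the minimum is $\asymp n\eta^2\asymp\log(pq)$ already under $n\gtrsim\log(pq)$, recovering Lemma~\ref{lemma:boundXE} verbatim, whereas for $\alpha\in(0,1]$ a mild (still polylogarithmic in $pq$) strengthening of the sample-size condition is needed, which in the theorems that use this lemma is subsumed by the standing requirement $n\gtrsim q$. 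The hard part is exactly this bookkeeping: for small $\alpha$ the branch $(n\eta)^{\alpha/2}$ decays slowly, so one must track carefully which term governs the minimum—this is the analogue of the role played by $n\gtrsim\log(p_2q)$ in the Gaussian proof, and it is also what forces the sample-size requirement $n\gtrsim(\log p+\log q)^{2/\alpha}$ under which the deviation bound itself is $o(1)$.
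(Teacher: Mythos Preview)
Your proposal is correct and follows essentially the same route as the paper: entrywise reduction, application of Lemma~\ref{lemma:importantaux}(ii) to each coordinate pair, a union bound over the $pq$ entries, and the calibration $\eta=c_0(\log p+\log q)^{1/\alpha}/\sqrt{n}$ together with the bookkeeping on which branch of $\min\{n\eta^2,(n\eta)^{\alpha/2}\}$ is active. One minor point: in the sub-exponential regime the ``standing requirement'' that subsumes the polylogarithmic condition is $n^{\alpha/2}\gtrsim q$ (from Lemma~\ref{lemma:Emax-sube}), not $n\gtrsim q$.
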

	\begin{proof}[Proof of Lemma~\ref{lemma:deviation-sube}]
		Apply Lemma~\ref{lemma:importantaux}, so that for any standard basis vector $e_k$ and $e_j$, the following holds:
		\begin{equation*}
		\mathbb{P}\Big( |e_k'(\mathbf{X}^\top\mathbf{E})e_j | > 2\pi \eta ( \mathcal{M}(f_X) +  \mathcal{M}(f_\varepsilon) + \mathcal{M}(f_{X,\varepsilon}) )  \Big)  \leq 3\,\mathcal{T}'\big(\eta,\alpha, n\big).
		\end{equation*}
		Taking the union bound across all $pq$ elements, with probability at least $1-3(pq)\mathcal{T}'\big(\eta, \alpha,n\big) = 1 - 3c_1\exp\{-c_2\min\{ n\eta^2,(n\eta)^{\alpha/2}\} +\log(pq)\}$, the following bound holds:
		\begin{equation*}
		\|\mathbf{X}^\top\mathbf{E}/n\|_\infty \leq 2\pi ( \mathcal{M}(f_X) +  \mathcal{M}(f_\varepsilon) + \mathcal{M}(f_{X,\varepsilon}) )\cdot \eta. 
		\end{equation*}
		Set $\eta:=c_0(\log p+\log q)^{\frac{1}{\alpha}}/\sqrt{n}$, the desired result holds for some sufficiently large $c_0$ provided that $n^{\alpha/4}\gtrsim \log(pq)^{(2/\alpha-1/2)}$ (which ensures that $\min\{n\eta^2,(n\eta)^{\alpha/2}\}$). Specifically, in the context of this problem, the most stringent sample size requirement is dictated by the concentration for the operator norm (see Lemma~\ref{lemma:Emax-sube}), and therefore this sample size requirement is automatically fulfilled. 
	\end{proof}
	
	%%%%%%%% RSC %%%%%%%%%%%%
	The following lemma verifies the RSC condition. 
	
	\begin{lemma}[Verification of RSC]\label{lemma:RSC-sube} Consider a snapshot of random realizations $\mathbf{X}\in\mathbb{R}^{n\times p}$ drawn from the linear process $X_t := \sum_{\ell=0}^\infty \Phi_{\ell}u_{t-\ell}$ with $u_t$ coming from the $\alpha$-sub-exponential family. Then RSC holds for $\mathbf{X}$ with parameter $\alpha_{\text{RSC}}=\pi\mathfrak{m}(f_X)$ and tolerance $\tau:= c_0\alpha_{\text{RSC}}\log p/(n^{\alpha/2})$, with probability at least $1-c_1\exp\{-c_2n^{\alpha/2}\}$.
	\end{lemma}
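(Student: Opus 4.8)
The plan is to mirror the proof of Lemma~\ref{lemma:RSCX}, substituting the Gaussian concentration inequality of \citet[Proposition~2.4(a)]{basu2015estimation} with its $\alpha$-sub-exponential counterpart Lemma~\ref{lemma:importantaux}(i), and tracking how the heavier tail degrades the admissible sparsity level in the discretization step. As a first step I would reduce the matrix statement to a quadratic-form statement: exactly as in the proof of Lemma~\ref{lemma:RSCX}, it suffices to show that, with the stated probability,
\begin{equation*}
\tfrac{1}{2}v^\top S_{\mathbf{X}} v \ \geq\ \tfrac{\alpha_{\RSC}}{2}\|v\|_2^2 \ -\ \tau\,\|v\|_1^2,\qquad \forall\, v\in\R^p,
\end{equation*}
since the version for arbitrary $\Delta\in\R^{p\times p}$ then follows by summing over columns of $\Delta$.

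Next I would establish a uniform deviation bound over sparse vectors. Fix $\eta=\gamma^{-1}$ with $\gamma:=54\mathcal{M}(f_X)/\mathfrak{m}(f_X)$, a constant independent of $n$ and $p$. By Lemma~\ref{lemma:importantaux}(i), for any fixed unit vector $v$,
\begin{equation*}
\PP\big[\,|v^\top(S_{\mathbf{X}}-\Sigma_X(0))v| > 2\pi\eta\,\mathcal{M}(f_X)\,\big] \ \leq\ \mathcal{T}'(\eta,\alpha,n) \ =\ c_1\exp\big[-c_2\min\{n\eta^2,(n\eta)^{\alpha/2}\}\big].
\end{equation*}
Covering the unit sphere inside each $2s$-sparse coordinate subspace by a $1/4$-net of cardinality at most $9^{2s}$, and taking a union bound over the $\binom{p}{2s}$ such subspaces --- the same discretization used in \citet[Lemma~F.2]{basu2015estimation}, which converts a pointwise bound on $v^\top\Gamma v$ into a bound on $\sup_{\theta\in\mathbb{K}(2s)}|\theta^\top\Gamma\theta|$ at the cost of an absolute constant factor --- upgrades this to
\begin{equation*}
\sup_{\|\theta\|_2\le 1,\ \|\theta\|_0\le 2s}\big|\theta^\top(S_{\mathbf{X}}-\Sigma_X(0))\theta\big| \ \leq\ 2\pi\eta\,\mathcal{M}(f_X) \ =\ \tfrac{\pi}{27}\,\mathfrak{m}(f_X),
\end{equation*}
with probability at least $1-c_1\exp\big[-c_2\min\{n\eta^2,(n\eta)^{\alpha/2}\}+2s\log p\big]$. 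Here is the one genuinely new point relative to the Gaussian argument: with $\eta$ constant, the exponent $\min\{n\eta^2,(n\eta)^{\alpha/2}\}$ equals $(n\eta)^{\alpha/2}$ up to constants when $\alpha\in(0,1]$ (and $n\eta^2$ when $\alpha=2$, recovering the Gaussian rate), so in all cases it scales as $n^{\alpha/2}$; the tail exponent thus degrades from $\asymp n$ to $\asymp n^{\alpha/2}$. Choosing $s=\lceil c_2 n^{\alpha/2}/(4\log p)\rceil$ absorbs the $2s\log p$ term and leaves probability at least $1-c_1\exp(-c_2 n^{\alpha/2}/2)$.

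Finally I would pass from sparse vectors to all vectors and read off the curvature and tolerance. Applying \citet[Lemma~12]{loh2012high} with $\Gamma=S_{\mathbf{X}}-\Sigma_X(0)$ and $\delta=\tfrac{\pi}{27}\mathfrak{m}(f_X)$ gives, for every $\theta\in\R^p$,
\begin{equation*}
\big|\theta^\top(S_{\mathbf{X}}-\Sigma_X(0))\theta\big| \ \leq\ 27\delta\Big(\|\theta\|_2^2+\tfrac{1}{s}\|\theta\|_1^2\Big) \ =\ \pi\mathfrak{m}(f_X)\Big(\|\theta\|_2^2+\tfrac{1}{s}\|\theta\|_1^2\Big),
\end{equation*}
and combining with $\theta^\top\Sigma_X(0)\theta\ge 2\pi\mathfrak{m}(f_X)\|\theta\|_2^2$ (Proposition~2.3 of \citet{basu2015estimation}) yields $\tfrac12\theta^\top S_{\mathbf{X}}\theta\ge \tfrac{\pi}{2}\mathfrak{m}(f_X)\|\theta\|_2^2-\tfrac{\pi\mathfrak{m}(f_X)}{2s}\|\theta\|_1^2$, i.e.\ $\alpha_{\RSC}=\pi\mathfrak{m}(f_X)$ and $\tau=\tfrac{\alpha_{\RSC}}{2s}\asymp \alpha_{\RSC}\log p/n^{\alpha/2}$, which is the claim; this reading also forces the implicit requirement $n\gtrsim(\log p)^{2/\alpha}$ (so that $s\ge 1$ and $\tau=o(1)$). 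The main obstacle is not any single estimate but the bookkeeping of the interplay between the net cardinality $e^{2s\log p}$ and the degraded tail $e^{-cn^{\alpha/2}}$: one must confirm that for constant $\eta$ the $(n\eta)^{\alpha/2}$ branch is the binding one (equivalently, $n^{1-\alpha/2}\gtrsim \gamma^{2-\alpha/2}$, automatic for large $n$), so that the largest $s$ compatible with a non-vanishing probability bound is $s\asymp n^{\alpha/2}/\log p$ --- precisely the scaling that produces the $n^{\alpha/2}$ in the denominator of $\tau$.
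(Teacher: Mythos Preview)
Your proposal is correct and follows essentially the same route as the paper's proof: reduce to the vector quadratic-form inequality, apply Lemma~\ref{lemma:importantaux}(i) pointwise, lift to the $2s$-sparse cone via the discretization of \citet[Lemmas~F.2--F.3]{basu2015estimation}, set $\eta=\mathfrak{m}(f_X)/(54\mathcal{M}(f_X))$, invoke \citet[Lemma~12]{loh2012high}, and finally choose $s\asymp n^{\alpha/2}/\log p$. Your explicit remark that the $(n\eta)^{\alpha/2}$ branch is binding for constant $\eta$ and $\alpha\in(0,1]$, together with the implied sample-size requirement $n\gtrsim(\log p)^{2/\alpha}$, makes the bookkeeping slightly more transparent than the paper's version, but the argument is otherwise the same.
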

	\begin{proof}[Proof of Lemma~\ref{lemma:RSC-sube}]
		Let $S = \frac{1}{n}\mathbf{X}^\top\mathbf{X}$. First, suppose we have
		\begin{equation}\label{eqn:RSC-stronger1}
		\frac{1}{2}v' S v = \frac{1}{2}v'\Big( \frac{\mathbf{X}'\mathbf{X}}{n} \Big)v\geq \frac{\alpha_{\text{RSC}}}{2} \|v\|_2^2 - \tau\|v\|_1^2,\quad \forall~v\in\R^{p};
		\end{equation}
		then, for all $\Delta\in\R^{p_Z\times p_Z}$, and letting $\Delta_j$ denote its $j$th column, the RSC condition automatically holds since
		\begin{equation*}
		\frac{1}{2T}\smalliii{\mathbf{X}\Delta}_{\F}^2 = \frac{1}{2}\sum_{j=1}^{p} \Delta_j'\big( \tfrac{\mathbf{X}'\mathbf{X}}{n} \big) \Delta_j \geq \frac{\alpha_{\text{RSC}}}{2} \sum_{j=1}^{p} \|\Delta_j\|_2^2 - \tau \sum_{j=1}^{p_Z} \|\Delta_j\|_1^2 \geq \frac{\alpha_{\text{RSC}}}{2}\smalliii{\Delta}_{\F}^2 - \tau \|\Delta\|_1^2.
		\end{equation*}
		Therefore, it suffices to verify that~\eqref{eqn:RSC-stronger1} holds. By Lemma~\ref{lemma:importantaux}, $\forall v\in\R^{p},\|v\|\leq 1$ and $\eta >0$,
		\begin{equation*}
		\mathbb{P} \Big[ \big| v'\big(S-\Sigma_X(0)\big) v \big| >  2\pi\mathcal{M}(f_X)\eta \Big]\leq 2\mathcal{T}'(\eta,\alpha,n).
		\end{equation*}
		Applying the discretization argument in \citet[Lemma F.2 \& Lemma F.3]{basu2015estimation}, define $\mathbb{K}(2s):=\{v\in\R^p, \|v\|\leq 1, \|v\|_0\leq 2s\}$, and taking the union bound in this $2s$-sparse cone gives the following inequality:
		\begin{equation}
		\begin{split}
		\mathbb{P} \Big[ \sup\limits_{v\in\mathbb{K}(2s)}\big| v'\big(S-\Sigma_X(0)\big) v \big| >  &2\pi\mathcal{M}(f_X)\eta \Big] \leq 2\cdot \min\{p^s, (21e\cdot p/s)^s\}\cdot \mathcal{T}'(\eta,\alpha,n)  \\
		&= 2c_1\exp\Big[ - c_2\min\{ n\eta^2,(n\eta)^{\alpha/2} \}  + s\min\{ \log p, \log(21ep/s) \} \Big].
		\end{split}
		\end{equation}
		Let $\eta=\mathfrak{m}(f_X)/[54\mathcal{M}(f_X)]$, then apply results from \citet[Lemma 12]{loh2012high} with $\Gamma=S-\Sigma_X(0)$ and $\delta=\pi\mathfrak{m}(f_X)/27$, so that the following holds
		\begin{equation*}
		\frac{1}{2}v' S v \geq \frac{\alpha_{\text{RSC}}}{2}\|v\|^2 - \frac{\alpha_{\text{RSC}}}{2s}\|v\|_1^2, \qquad \text{where}~~\alpha_{\text{RSC}} = \pi\mathfrak{m}(f_X),
		\end{equation*}
		with probability at least $1-2\min\{p^s, (21e\cdot p/s)^s\} \mathcal{T}'(\eta,\alpha,n)$. By letting $s:=c'_0n^{\alpha/2}/\log p$ for some small constant $c_0$, then $\tau$ can be expressed as $\tau = c_0\alpha_{\text{RSC}}\log p/(n^{\alpha/2})$ and the bound holds with probability at least $1-c_1\exp\{-c_2n^{\alpha/2}\}$.
	\end{proof}
	
	%%%%% Bound for operator norm %%%%%%%
	\begin{lemma}[High probability bound for $\Lambda_{\max}(S_{\mathbf{E}})$]\label{lemma:Emax-sube} Consider $\mathbf{E}\in\R^{n\times q}$ whose rows are independent realizations drawn from some mean-zero $\alpha$-sub-exponential distribution with covariance $\Sigma_e$. Then, the following holds for some constants $c_i>0$ provided that the sample size satisfies $n^{\alpha/2}\gtrsim q$:
		\begin{equation*}
		\Lambda_{\max}(S_{\mathbf{E}}) \leq c_0\Lambda_{\max}(\Sigma_e),
		\end{equation*}
		with probability at least $1-c_1\exp(-c_2n^{\alpha/2})$.
	\end{lemma}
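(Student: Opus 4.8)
The plan is to reproduce, in the $\alpha$-sub-exponential regime, the covering-number argument already used for the Gaussian analogue (Lemma~\ref{lemma:Emax}) and for $\Lambda_{\max}(S_{\mathbf{X}})$ (Lemma~\ref{lemma:Smax}), replacing the Gaussian quadratic-form tail bound by the one furnished by Lemma~\ref{lemma:importantaux}. First I would use the variational representation $\Lambda_{\max}(S_{\mathbf{E}}) = \sup_{u\in S^{q-1}} u^\top S_{\mathbf{E}} u$, so that $\Lambda_{\max}(S_{\mathbf{E}}) \leq \Lambda_{\max}(\Sigma_e) + \smalliii{S_{\mathbf{E}}-\Sigma_e}_{\op}$ and it suffices to control the operator norm of the centered sample covariance. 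Fixing a $1/4$-net $\mathcal{A}$ of $S^{q-1}$ with $|\mathcal{A}|\leq 8^{q}$ (exactly as in the proof of Lemma~\ref{lemma:Smax}) and invoking the standard bound $\smalliii{M}_{\op}\leq 4\max_{u\in\mathcal{A}}|u^\top M u|$ for symmetric $M$, the problem reduces to a pointwise tail estimate for $|u^\top(S_{\mathbf{E}}-\Sigma_e)u|$.

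Second, since the rows of $\mathbf{E}$ are i.i.d.\ draws from the $\alpha$-sub-exponential family, the sequence $\{e_t\}$ is a (degenerate) linear process with $\Sigma_E(h)=\Sigma_e\,\mathbf{1}\{h=0\}$, hence $f_E(\omega)\equiv\frac{1}{2\pi}\Sigma_e$ and $\mathcal{M}(f_E)=\frac{1}{2\pi}\Lambda_{\max}(\Sigma_e)$. Applying Lemma~\ref{lemma:importantaux}(i) with $v_1=u$ then gives, for every fixed unit vector $u$, $\mathbb{P}\big(|u^\top(S_{\mathbf{E}}-\Sigma_e)u| > \eta\,\Lambda_{\max}(\Sigma_e)\big) \leq \mathcal{T}'(\eta,\alpha,n) = c_1\exp\big[-c_2\min\{n\eta^2,(n\eta)^{\alpha/2}\}\big]$. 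A union bound over $\mathcal{A}$ yields $\mathbb{P}\big(\smalliii{S_{\mathbf{E}}-\Sigma_e}_{\op} > 4\eta\,\Lambda_{\max}(\Sigma_e)\big) \leq c_1\exp\big[q\log 8 - c_2\min\{n\eta^2,(n\eta)^{\alpha/2}\}\big]$.

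Third, I would fix $\eta$ to be a sufficiently large absolute constant. For $\alpha\in(0,1]\cup\{2\}$ and $n$ large this makes $(n\eta)^{\alpha/2}$ the active branch of the minimum, so the exponent becomes $q\log 8 - c_2' n^{\alpha/2}$; under the hypothesis $n^{\alpha/2}\gtrsim q$ the net term $q\log 8$ is absorbed, the probability is at least $1-c_1\exp(-c_2 n^{\alpha/2})$, and on this event $\Lambda_{\max}(S_{\mathbf{E}}) \leq (1+4\eta)\Lambda_{\max}(\Sigma_e)=:c_0\Lambda_{\max}(\Sigma_e)$, which is the claim. The one step that needs care is the bookkeeping in the exponent: one must verify that the chosen $\eta$ does force $\min\{n\eta^2,(n\eta)^{\alpha/2}\}=(n\eta)^{\alpha/2}$ (true once $\eta\gtrsim n^{-(2-\alpha)/(4-\alpha)}$, in particular for any fixed $\eta$ and $n$ large), and that the induced sample-size requirement $n^{\alpha/2}\gtrsim q$ is exactly the stated scaling — this is the $\alpha$-sub-exponential counterpart of the $n\gtrsim q$ condition in Lemma~\ref{lemma:Emax} and is the only genuinely new ingredient relative to the Gaussian proof; no further substantive obstacle is expected.
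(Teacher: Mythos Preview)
Your proposal is correct and follows essentially the same approach as the paper: the paper's proof explicitly says it reuses the covering argument from Lemma~\ref{lemma:Smax} (ignoring temporal dependence) together with the $\alpha$-sub-exponential tail bound of Lemma~\ref{lemma:importantaux}, takes a union bound over an $\varepsilon$-net of size $8^q$, and then fixes $\eta$ to a large constant so that the $n^{\alpha/2}\gtrsim q$ requirement absorbs the $q\log 8$ term. Your write-up spells out the same steps with slightly more detail (the triangle inequality reduction to $\smalliii{S_{\mathbf{E}}-\Sigma_e}_{\op}$ and the identification $\mathcal{M}(f_E)=\tfrac{1}{2\pi}\Lambda_{\max}(\Sigma_e)$), and your bookkeeping on which branch of $\min\{n\eta^2,(n\eta)^{\alpha/2}\}$ is active is accurate.
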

	\begin{proof}
		The main arguments of the proof follow closely along the lines of those in the proof of Lemma~\ref{lemma:Smax}, while ignoring the temporal dependence. Specifically, using similar covering arguments, with the tail decay as in Lemma~\ref{lemma:importantaux}, there exists some constant $c_i>0$ such that
		\begin{equation*}
		\mathbb{P}\Big[ \Lambda_{\max}(S_{\mathbf{E}}) \geq c_0\eta\smalliii{\Sigma_e}_{\op} + \smalliii{\Sigma_e}_{\op} \Big] \leq c_1 \exp\{  - c_2\min\{ n\eta^2,(n\eta)^{\alpha/2}\}  + q\log 8 \}.
		\end{equation*}
		By choosing $\eta$ to be a sufficiently large constant, with $n^{\alpha/2}\gtrsim q$, the statement in the lemma holds. 
		
		\begin{remark} To ensure concentration of the operator norm, with the specified choice of $\eta$, the sample size requirement in~\eqref{lemma:Emax-sube} is more stringent than that of the Gaussian case. In particular, for the case of sub-exponential tails with $\alpha=1$, this would imply a sample size requirement $\sqrt{n}\gtrsim q$. If however, the elements of the random noise vector $e_t$'s are bounded, that is, $\|e_t\|_2\leq \sqrt{C}$ almost surely for some $C>0$, one can directly apply the matrix Bernstein inequality to obtain the following bound \citep[Corollary 6.20]{wainwright2019high}: 
			\begin{equation*}
			\mathbb{P}\Big[ \Big|\Lambda_{\max}(S_{\mathbf{E}}) - \smalliii{\Sigma_e}_{\op} \Big| \geq \eta \Big] \leq 2q\,\exp\Big\{  \frac{-n\eta^2}{2C(\smalliii{\Sigma_e}_{\op}+\eta)} \Big\}.
			\end{equation*}
			Depending on how $C$ grows with $q$, the sample size requirement could potentially be more relaxed to attain concentration.  
		\end{remark}
		
	\end{proof}
	
	%%%%%%%%%%%%%%%%%%%%%%%%%%%%%%%%%%%
	%%%% Additional Numerical Work
	%%%%%%%%%%%%%%%%%%%%%%%%%%%%%%%%%%%
	\section{Additional Numerical Studies}\label{appendix:fail}
	
	In this section, we investigate selected scenarios where the relaxed implementation on estimating the calibration equation may fail to produce good estimates, due to the absence of the compactness constraint. For illustration purposes, it suffices to consider the setting where $X_t$ and $F_t$ jointly follow a multivariate Gaussian distribution and are independent
	and identically distributed across samples. Throughout, we set $n=200, p_1=5, p_2=50, q=100$, and $\Big(\begin{smallmatrix}
	X_t \\ F_t \end{smallmatrix}\Big)\sim\mathcal{N}(0,\Sigma)$ with  $\Sigma_{ij}=0.25~(i\neq j)$ and $\Sigma_{ii}=1$. The noise level is fixed at $\sigma_e=1$.
	
	First, we note that based on the performance evaluation shown in Section~\ref{sec:simulation}, the estimates demonstrate good performance even without the compactness constraint. The simulation settings are characterized by adequate sparsity
	in $\Gamma$, which in turn limits the size of the equivalence class $\mathcal{C}(Q_2)$ as mentioned in Section~\ref{sec:id}. Therefore, we focus on the following two issues:  (i) whether sparsity encourages additional ``approximate identification"; and (ii) whether a good initializer helps constrain estimates from subsequent iterations to a ball around the true value.
	
	We start by considering a non-sparse $\Gamma$. Specifically, for both $\Lambda$ and $\Gamma$, their entries are generated from $\mathsf{Unif}\{(-1.5,-1.2)\cup(1.2,1.5)\}$. Additionally, we specify one alternative model in $\mathcal{C}(Q_2)$ by setting $Q_2=\mathbf{5}_{p_1\times p_2}$, which will generate the corresponding $\check{\mathbf{F}}$, $\check{\Theta}$ and $\check{\Gamma}$. Table~\ref{table:nonsparse} depicts the performance of the estimated $\Theta$ based on different initializers:
	\begin{table}[!h]
		\captionsetup{font=small}
		\centering\scriptsize
		\caption{Performance evaluation of $\widehat{\Theta}$ obtained from different initializers under a non-sparse setting.}\label{table:nonsparse}
		\begin{tabular}{l|c|c|c|c}
			\hline
			initializer $\widehat{\Theta}^{(0)}$ & $\Theta^\star$ & $\mathbf{0}_{n\times q}$ & $\Theta^\star+ 0.1*\mathbf{Z}_{n\times q}$ & $\check{\Theta}$ \\ \hline
			Rel.Err & 0.09 & 0.63 & \text{fail to converge within 5000 iterations} & 1.82 (0.02, relative to $\check{\Theta}$)  \\
			\hline
		\end{tabular}	
	\end{table}
	
	\normalsize
	The results in Table~\ref{table:nonsparse} show that the algorithm converges (if at all) to different local optima whose
	values may deviate markedly for the true ones. Specifically, initializer $\Theta^\star + 0.1*\mathbf{Z}_{n\times q}$, where
	each entry $\Theta^\star$ is perturbed by an iid standard Gaussian random variable scaled by 0.1, fails to converge. Note
	that the perturbation is small, but the operator norm of the initializer far exceeds $\phi_0$.
	Initializer $\check{\Theta}$ yields an estimate that is far from the true data-generating factor hyperplane, yet close to its observationally equivalent one. This suggests that in non-sparse settings, without imposing the compactness constraint on the equivalence class, a good initializer is required for the actual relaxed implementation to produce a fairly good estimate of the true data generating parameters.
	
	However, this is not the case if there is sufficient sparsity in $\Gamma$. Specifically, using the same generating mechanism for $\Lambda$ and $\Gamma$ as in Section~\ref{sec:simulation}, we found that even with different initializers, the algorithm always produces estimates that are close to each other and also exhibit good performance. This finding strongly suggests
	that sparsity in $\Gamma$ effectively shrinks the size of the equivalence class and the algorithm after a few iterations
	produces updates that are close to each other, irrespective of the initializer employed. Hence, the effective equivalence class is constrained to the one whose elements are encoded by $\check{\Gamma}$ that have similar characteristics in terms of the location of the non-zero parameters to $\Gamma$. 
	
	Finally, we consider a case that lies between the above two settings, that is, there is a structured sparsity pattern 
	in $\Gamma$. Specifically, we set the last $5$ columns of $\Gamma$ to be dense while the remaining ones are sparse. The overall density level of $\Gamma$ is fixed at 10\%. Note that in this case, the size of the corresponding equivalence class is much larger to the one corresponding to a $\Gamma$ with 10\% uniformly distributed non-zeros entries, due to the presence
	of the five dense columns.
	\begin{table}[!h]
		\captionsetup{font=small}
		\centering\scriptsize
		\caption{Performance evaluation for $\widehat{\Theta}$ with different initializers under structured sparsity.}\label{table:structure-sparse}
		\begin{tabular}{l|c|c|c|c}
			\hline
			initializer $\widehat{\Theta}^{(0)}$ & $\Theta^\star$ & $\mathbf{0}_{n\times q}$ & $\Theta^\star+ 0.1*\mathbf{Z}_{n\times q}$ & $\mathbf{20}_{n\times q}$ \\ \hline
			Rel.Err & 0.65 & 0.65 & 0.65 & 0.68  \\
			\hline
		\end{tabular}
	\end{table}
	
	\normalsize
	As the results in Table~\ref{table:structure-sparse} indicate, when the initializer starts to deviate from the true value,  there exist initializers that would yield inferior estimates. 
	
	In summary, in a non-sparse setting without compactification of the equivalence class, different initializers yield drastically different estimates that are not close enough to the true data-generating model, as expected by the approximate
	(IR+) condition employed. The problem is largely mitigated for sufficiently sparse $\Gamma$, which leads to shrinking the equivalence class. However, an exact characterization of the equivalence class is hard to obtain in practice, since the
	location of the non-zero entries in $\Gamma$ is unknown. 
	
	%%%%%%%%%%%%%%%%%%%%%%%%%%%%%%
	%%%% List of Variables
	%%%%%%%%%%%%%%%%%%%%%%%%%%%%%%
	\section{List of Commodities and Macroeconomic Variables}\label{appendix:list}
	\begin{table}[H]
		\captionsetup{font=small}
		\scriptsize
		\centering
		%\begin{adjustwidth*}{-0.5in}{-0.5in}
		\caption{List of commodities considered in this study. Data source: International Monetary Fund.}\label{table:commodity} 
		\begin{tabular}{l|l|l}
			\hline
			Commodity & Key & Description \\ \hline
			ALUMINUM & PALUM & Aluminum, 99.5\% minimum purity, LME spot price\\
			COCOA & PCOCO & Cocoa beans, International Cocoa Organization cash price \\
			COFFEE & PCOFFOTM & Coffee, Other Mild Arabicas, International Coffee Organization New York cash price\\
			COPPER & PCOPP & Copper, grade A cathode, LME spot price\\
			COTTON & PCOTTIND & Cotton, Cotton Outlook 'A Index', Middling 1-3/32 inch staple \\
			LEAD & PLEAD & Lead, 99.97\% pure, LME spot price\\
			MAIZE & PMAIZMT & Maize (corn), U.S. No.2 Yellow, FOB Gulf of Mexico, U.S. price\\
			NICKEL & PNICK & Nickel, melting grade, LME spot price\\
			OIL & POILAPSP & Crude Oil (petroleum), simple average of three spot prices\\
			RICE & PRICENPQ & Rice, 5 percent broken milled white rice, Thailand nominal price quote\\
			RUBBER & PRUBB & Rubber, Singapore Commodity Exchange, No. 3 Rubber Smoked Sheets, 1st contract\\
			SOYBEANS & PSOYB & Soybeans, U.S. soybeans, Chicago Soybean futures contract (first contract forward)\\
			SUGAR & PSUGAUSA & Sugar, U.S. import price, contract no.14 nearest futures position\\
			TIN & PTIN & Tin, standard grade, LME spot price \\
			WHEAT & PWHEAMT & Wheat, No.1 Hard Red Winter, ordinary protein\\
			ZINC & PZINC & Zinc, high grade 98\% pure\\
			\hline
		\end{tabular}
		%\end{adjustwidth*}
	\end{table} 
	
	{\scriptsize
		\setlength\LTleft{-0.6in}
		\setlength\LTright{-1in}
		\begin{longtable}{l|llll}
			\hline
			Name	&	Description	&	tCode	&	Category	&	Region	\\ \hline
			IPI\_US	&	IP Index: total	&	5	&	Output \& Income	&	US	\\
			CUM\_US	&	Capacity Utilization: manufacturing	&	2	&	Output \& Income	&	US	\\
			UNEMP\_US	&	Civilian unemployment rate: all	&	2	&	Labor Market	&	US	\\
			HOUST\_US	&	Housing Starts: ttl new privately owned	&	4	&	Housing	&	US	\\
			ISR\_US	&	Total Business: inventories to sales ratio	&	2	&	Consumption	&	US	\\
			M2\_US	&	M2 Money Stock	&	6	&	Money \& Credit	&	US	\\
			BUSLN\_US	&	Commericial and industrial loans	&	6	&	Money \& Credit	&	US	\\
			REALN\_US	&	Real estate loans at all commercial banks	&	6	&	Money \& Credit	&	US	\\
			FFR\_US	&	Effective federal funds rate	&	2	&	Interest \& Exchange Rates	&	US	\\
			TB10Y\_US	&	10-year treasury rate	&	2	&	Interest \& Exchange Rates	&	US	\\
			BAA\_US	&	Moody's Baa corporate bond yield	&	2	&	Interest \& Exchange Rates	&	US	\\
			USDI\_US	&	Trade weighted U.S.dollar index	&	5	&	Interest \& Exchange Rates	&	US	\\
			CPI\_US	&	CPI: all iterms	&	5	&	Prices	&	US	\\
			PCEPI\_US	&	Personal Consumption Expenditure: chain index	&	5	&	Prices	&	US	\\
			SP500\_US	&	S\&P's Common Stock Price Index: composite	&	5	&	Stock Market	&	US	\\
			CPI\_EU	&	Consumer Price Indices, percent change	&	2	&	Prices	&	EU	\\
			IPI\_EU	&	Industrial Production Index: total industry (excluding construction)	&	5	&	Output \& Income	&	EU	\\
			IPICP\_EU	&	Industrial Production Index: construction	&	5	&	Output \& Income	&	EU	\\
			M3\_EU	&	Monetary aggregate M3	&	6	&	Money \& Credit	&	EU	\\
			LOANRES\_EU	&	Credit to resident sectors, non-MFI excluding gov	&	6	&	Money \& Credit	&	EU	\\
			LOANGOV\_EU	&	Credit to general government sector	&	6	&	Money \& Credit	&	EU	\\
			PPI\_EU	&	Producer Price Index: total industry (excluding construction)	&	6	&	Prices	&	EU	\\
			UNEMP\_EU	&	Unemployment rate: total	&	2	&	Labor Market	&	EU	\\
			IMPORT\_EU	&	Total trade: import value	&	6	&	Trade	&	EU	\\
			EXPORT\_EU	&	Total trade: export value	&	6	&	Trade	&	EU	\\
			EB1Y\_EU	&	Euribor 1 year	&	2	&	Interest \& Exchange Rates	&	EU	\\
			TB10Y\_EU	&	10-year government benchmark bond yield	&	2	&	Interest \& Exchange Rates	&	EU	\\
			EFFEXR\_EU	&	ECB nominal effective exchange rate againt group of trading partners 	&	2	&	Interest \& Exchange Rates	&	EU	\\
			EUROSTOXX50\_EU	&	Euro STOXX composite index	&	5	&	Stock Market	&	EU	\\
			IOP\_UK	&	Index of Production	&	5	&	Output \& Income	&	UK	\\
			CPI\_UK	&	CPI Index	&	5	&	Prices	&	UK	\\
			PPI\_UK	&	Output of manufactured products	&	5	&	Prices	&	UK	\\
			UNEMP\_UK	&	Unemployment rate: aged 16 and over	&	2	&	Labor Market	&	UK	\\
			EFFEXR\_UK	&	Effective exchange rate index, Sterling	&	2	&	Interest \& Exchange Rates	&	UK	\\
			TB10Y\_UK	&	10-year British government stock, nominal par yield	&	2	&	Interest \& Exchange Rates	&	UK	\\
			LIBOR6M\_UK	&	6 month interbank lending rate, month end	&	2	&	Interest \& Exchange Rates	&	UK	\\
			M3\_UK	&	Monetary aggregate M3	&	6	&	Money \& Credit	&	UK	\\
			CPI\_CN	&	CPI: all iterms	&	5	&	Prices	&	CN	\\
			PPI\_CN	&	Producer price index for industrial products (same month last year = 100)	&	2	&	Prices	&	CN	\\
			M2\_CN	&	Monetary aggregate M2	&	6	&	Money \& Credit	&	CN	\\
			EFFEXR\_CN	&	Real broad effective exchange rate	&	2	&	Interest \& Exchange Rates	&	CN	\\
			EXPORT\_CN	&	Value goods	&	6	&	Trade	&	CN	\\
			IMPORT\_CN	&	Value goods	&	6	&	Trade	&	CN	\\
			INDGR\_CN	&	Growth rate of industrial value added (last year = 100)	&	2	&	Output \& Income	&	CN	\\
			SHANGHAI\_CN	&	Shanghai Composite Index	&	5	&	Stock Market	&	CN	\\
			TB10Y\_JP	&	10-year government benchmark bond yield	&	2	&	Interest \& Exchange Rates	&	JP	\\
			EFFEXR\_JP	&	Real broad effective exchange rate	&	2	&	Interest \& Exchange Rates	&	JP	\\
			CPI\_JP	&	CPI Index: all items	&	5	&	Prices	&	JP	\\
			M2\_JP	&	Monetary aggregate M2	&	6	&	Money \& Credit	&	JP	\\
			UNEMP\_JP	&	Unemployment rate: aged 15-64	&	2	&	Labor Market	&	JP	\\
			IPI\_JP	&	Production of Total Industry 	&	5	&	Output \& Income	&	JP	\\
			IMPORT\_JP	&	Import price index: all commodities	&	6	&	Trade	&	JP	\\
			EXPORT\_JP	&	Value goods	&	6	&	Trade	&	JP	\\
			NIKKEI225\_JP	&	NIKKEI 225 composite index	&	5	&	Stock Market	&	JP	\\
			\hline
			\caption{List of macroeconomic variables in this study.}\label{table:macro} 
		\end{longtable}
		\noindent Data source: Fred St.Louis, ECB Statistical Data Warehouse, UK Office for National Statistics, Bank of England, National Bureau of Statistics of China, YAHOO!. tCode: 1: none; 2: $\Delta X_t$; 3: $\Delta^2 X_t$; 4: $\log X_t$; 5: $\Delta \log X_t$; 6: $\Delta^2 \log X_t$; 7: $\Delta(X_t/X_{t-1}-1)$.
	}%

\end{document}